\documentclass{article}

\usepackage[utf8]{inputenc} % allow utf-8 input
\usepackage[T1]{fontenc}    % use 8-bit T1 fonts
\usepackage{hyperref}       % hyperlinks

% Attempt to make hyperref and algorithmic work together better:

% Use the following line for the initial blind version submitted for review:
% \usepackage{icml2026}

% For preprint, use
% \usepackage[preprint]{icml2026}

% If accepted, instead use the following line for the camera-ready submission:
\usepackage[accepted]{icml2026}

\usepackage{url}            % simple URL typesetting
\usepackage{booktabs}       % professional-quality tables
\usepackage{amsfonts}       % blackboard math symbols
\usepackage{nicefrac}       % compact symbols for 1/2, etc.
\usepackage{microtype}      % microtypography
\usepackage{xcolor}         % colors
\usepackage{subcaption}
\usepackage{wrapfig}
\usepackage{enumitem}

\usepackage{multirow}
\usepackage{tikz}
\usetikzlibrary{shapes,arrows.meta,positioning,backgrounds,fit, graphs}
\usetikzlibrary{bayesnet}

\usepackage[ruled]{algorithm2e}
\SetKwComment{Comment}{/* }{ */}

\usepackage{amsmath}
\usepackage{amssymb}
\usepackage{amsthm}
\usepackage{thm-restate}
\usepackage[capitalize,noabbrev]{cleveref}

\declaretheorem[name=Theorem, numberwithin=section]{theorem}
\declaretheorem[name=Lemma, sibling=theorem]{lemma}
\declaretheorem[name=Definition, sibling=theorem]{definition}
\declaretheorem[name=Corollary, sibling=theorem]{corollary}

\newcommand{\calx}{\mathcal{X}}

\newcommand{\caln}{\mathcal{N}}
\newcommand{\calm}{\mathcal{M}}
\newcommand{\calf}{\mathcal{F}}
\newcommand{\cale}{\mathcal{E}}
\newcommand{\caly}{\mathcal{Y}}
\newcommand{\call}{\mathcal{L}}
\newcommand{\calt}{\mathcal{T}}
\newcommand{\R}{\mathbb{R}}

\newcommand{\E}{\operatorname*{\mathbb{E}}}
\newcommand{\I}{\mathbb{I}}

\DeclareMathOperator*{\Var}{Var}

\newcommand{\dx}{\,\mathrm{d}}

\newcommand{\epsilondelta}{(\epsilon, \delta)}

\newcommand{\Rprior}{R_{\mathrm{prior}}}
\newcommand{\Rpost}{R_{\mathrm{post}}}

\newcommand{\xreal}{X}
\newcommand{\xset}{\calx}
\newcommand{\yout}{y}
\newcommand{\yset}{\caly}
\newcommand{\mechy}[2][{}]{\calm{#1}(#2)_{\yout}}
\newcommand{\mecheps}[2][{}]{\calm{#1}(#2)_{\epsilon}}
\newcommand{\mechden}[2][{}]{p_{\calm{#1}(#2)}}
\newcommand{\privlossfn}{\call}
\newcommand{\plrvb}{L}
\newcommand{\plrv}[3]{L^{#1}_{#2 / #3}}
\newcommand{\plf}[3]{\privlossfn^{#1}_{#2 / #3}}
\newcommand{\yes}{$\checkmark$}
\newcommand{\no}{$\times$}

% The \icmltitle you define below is probably too long as a header.
% Therefore, a short form for the running title is supplied here:
\icmltitlerunning{Accuracy-First RDP}

\begin{document}

\twocolumn[
  \icmltitle{Accuracy-First Rényi Differential Privacy and Post-Processing Immunity}

  % It is OKAY to include author information, even for blind submissions: the
  % style file will automatically remove it for you unless you've provided
  % the [accepted] option to the icml2026 package.

  % List of affiliations: The first argument should be a (short) identifier you
  % will use later to specify author affiliations Academic affiliations
  % should list Department, University, City, Region, Country Industry
  % affiliations should list Company, City, Region, Country

  % You can specify symbols, otherwise they are numbered in order. Ideally, you
  % should not use this facility. Affiliations will be numbered in order of
  % appearance and this is the preferred way.
  \icmlsetsymbol{equal}{*}

  \begin{icmlauthorlist}
    \icmlauthor{Ossi Räisä}{cispa}
    \icmlauthor{Antti Koskela}{nokia}
    \icmlauthor{Antti Honkela}{helsinki}
  \end{icmlauthorlist}

  \icmlaffiliation{cispa}{CISPA Helmholtz Center for Information Security, Saarbrücken, Germany. Work primarily done at University of Helsinki}
  \icmlaffiliation{nokia}{Nokia Bell Labs, Espoo, Finland}
  \icmlaffiliation{helsinki}{University of Helsinki, Department of Computer Science, Helsinki, Finland}

  \icmlcorrespondingauthor{Ossi Räisä}{ossi.raisa@cispa.de}

  % You may provide any keywords that you find helpful for describing your
  % paper; these are used to populate the "keywords" metadata in the PDF but
  % will not be shown in the document
  \icmlkeywords{Machine Learning, ICML}

  \vskip 0.3in
]

% this must go after the closing bracket ] following \twocolumn[ ...

% This command actually creates the footnote in the first column listing the
% affiliations and the copyright notice. The command takes one argument, which
% is text to display at the start of the footnote. The \icmlEqualContribution
% command is standard text for equal contribution. Remove it (just {}) if you
% do not need this facility.

% Use ONE of the following lines. DO NOT remove the command.
% If you have no special notice, KEEP empty braces:
\printAffiliationsAndNotice{}  % no special notice (required even if empty)
% Or, if applicable, use the standard equal contribution text:
% \printAffiliationsAndNotice{\icmlEqualContribution}

\begin{abstract}
    The accuracy-first perspective of differential privacy addresses an important shortcoming by allowing a data analyst to adaptively adjust the quantitative privacy bound instead of sticking to a predetermined bound. Existing works on the accuracy-first perspective have neglected an important property of differential privacy known as post-processing immunity, which ensures that an adversary is not able to weaken the privacy guarantee by post-processing. We address this gap by determining which existing definitions in the accuracy-first perspective have post-processing immunity, and which do not. The only definition with post-processing immunity, pure ex-post privacy, lacks useful tools for practical problems, such as an ex-post analogue of the Gaussian mechanism, and an algorithm to check if accuracy on separate private validation set is high enough. To address this, we propose a new definition based on Rényi differential privacy that has post-processing immunity, and we develop basic theory and tools needed for practical applications. We demonstrate the practicality of our theory with applications to synthetic data generation and image classifier fine-tuning, where our algorithm successfully adjusts the privacy bound until an accuracy threshold is met on a private validation dataset.
\end{abstract}

\section{Introduction}\label{sec:intro}
As the volume of data collection and capabilities of machine learning models
increase, the need for methods to use the data while maintaining the privacy 
of data subjects grows ever larger. Differential privacy (DP; \citealp{dworkCalibratingNoiseSensitivity2006})
has become the gold-standard definition of privacy to address this need.

However, differential privacy is a very strict definition. First, it requires protection
against a worst-case adversary with very strong capabilities, which 
necessitates very strong protection that can heavily decrease the utility of an analysis.
Second, it requires specifying a numerical privacy bound before running the algorithm,
which precludes workflows such as reducing the privacy bound over time in case
the data becomes less sensitive, using different bounds for different 
audiences~\citep{rosenblattAreDataExperts2024}, and addressing low utility by decreasing the 
privacy protection afterwards~\citep{ligettAccuracyFirstSelecting2017,nanayakkaraMeasureObserveRemeasureInteractiveParadigm2024a}.

Motivated by these issues, \citet{ligettAccuracyFirstSelecting2017} proposed the 
\emph{accuracy-first} perspective to DP, where the analyst sets a minimum utility (accuracy)
threshold, and finds the strongest privacy bound that empirically meets the threshold.
As the privacy bound now depends on the private data, standard DP definitions, known
as \emph{ex-ante} definitions, are not applicable. Instead, this perspective requires
a new type of privacy definition that allows the privacy bound to depend on the 
private data, known as \emph{ex-post privacy}.
\citet{ligettAccuracyFirstSelecting2017} proposed the first ex-post privacy definition,
pure ex-post privacy, and
developed a \emph{Laplace noise reduction} mechanism that satisfies their definition.
In subsequent work, \citet{whitehouseBrownianNoiseReduction2022} proposed a probabilistic 
relaxation, and developed the \emph{Brownian mechanism}, an analogue of the Gaussian
mechanism~\citep{dworkOurDataOurselves2006} of ex-ante DP providing higher utility.
In parallel to our work, 
\citet{ghaziPrivateHyperparameterTuning2025} developed a hyperparameter tuning algorithm
for the accuracy-first perspective, and proposed ex-post privacy definitions based 
on \emph{approximate DP} (ADP, \citealp{dworkOurDataOurselves2006}) and \emph{Rényi DP} (RDP; \citealp{mironovRenyiDifferentialPrivacy2017}).

However, these works have neglected an important aspect that any privacy definition should
have, known as \emph{post-processing immunity}~\citep{dworkAlgorithmicFoundationsDifferential2014}. 
Post-processing immunity is the requirement
that any processing of the output of a private algorithm cannot decrease the privacy guarantee.
It is clear why any sensible privacy definition~\citep{mironovRenyiDifferentialPrivacy2017,dongGaussianDifferentialPrivacy2022} should have post-processing immunity: without it, an adversary could use post-processing to infer private information more easily than promised, making the guarantee weaker than promised~\citep{meiserApproximateProbabilisticDifferential2018}.

\paragraph{Contributions}
Our goal is to address this gap by determining whether previous ex-post privacy definitions have post-processing immunity, and by developing additional theory that supports mechanisms for practical problems under ex-post privacy guarantees with post-processing immunity.

\begin{enumerate}
    [leftmargin=*]
    \item We propose a new notation for ex-post privacy definitions that makes the promised
    privacy bound a part of the output of the algorithm, instead of making it a separate
    function of the output. The new notation allows defining post-processing immunity
    in an elegant way, while we were not able to find a satisfactory definition with the 
    old notation, and neither has any previous work~\citep{ligettAccuracyFirstSelecting2017,whitehouseBrownianNoiseReduction2022,ghaziPrivateHyperparameterTuning2025} that we know of.
    We then formulate the existing definitions to use our new notation.
    See \Cref{sec:ex-post-privacy-notation} for details. 
    \item After the translation to the new notation, we prove that the pure ex-post privacy 
    definition of \citet{ligettAccuracyFirstSelecting2017} has post-processing immunity, while 
    the probabilistic relaxation of \citet{whitehouseBrownianNoiseReduction2022} does not have it.
    This motivates the development of an alternative relaxation that both has
    post-processing immunity, and supports the Brownian mechanism. See 
    \Cref{sec:existing-ex-post-definition-post-processing-immunity} for details,
    and \Cref{tab:ex-post-privacy-definitions} for a summary.
    \item We investigate an ex-post privacy definition based on RDP, which has been proposed
    in parallel by \citet{ghaziPrivateHyperparameterTuning2025}. We prove that this definition
    has post-processing immunity, along with many basic properties that are useful in constructing
    practical mechanisms. In particular, we provide an algorithm that checks 
    whether accuracy on a private validation dataset is high enough, whereas 
    previous works only provide algorithms to check accuracy on the training 
    set~\citep{ligettAccuracyFirstSelecting2017,whitehouseBrownianNoiseReduction2022}.
    See \Cref{sec:ex-post-rdp}.
    \item We prove that the Brownian mechanism satisfies ex-post RDP in 
    \Cref{sec:ex-post-rdp-of-brownian-mech}. This is the first time the Brownian 
    mechanism has been proven to satisfy an ex-post definition with post-processing 
    immunity. Our proof uses a more elementary view of the 
    Brownian mechanism than
    \citet{whitehouseBrownianNoiseReduction2022}, that does not require working with stochastic processes, which may be of interest to future work.
    \item We give a practical example of using the Brownian mechanism and our theory in \Cref{sec:experiments}.
    We generate synthetic data, and ensure that the accuracy of a classifier trained on the 
    synthetic data on a private validation dataset is high enough on a minimal privacy budget. 
    We find that our implementation is able to achieve high accuracy while keeping a tight 
    privacy budget.
    We give another example in \Cref{sec:accuracy-first-optimisation}, where we fine-tune
    an image classifier, ensuring a minimum accuracy is met while minimising the privacy budget.
\end{enumerate}

\subsection{Related Work}
The closest related works are those by \citet{ligettAccuracyFirstSelecting2017}, 
\citet{whitehouseBrownianNoiseReduction2022} and
\citet{ghaziPrivateHyperparameterTuning2025} that have proposed several ex-post privacy 
definitions. Follow-up work~\citep{rogersAdaptivePrivacyComposition2023} has studied compositions
involving ex-post private mechanisms. None of these have proven that any ex-post privacy 
definition has post-processing immunity.

Another line of related work is known as privacy 
odometers~\citep{rogersPrivacyOdometersFilters2016}. A privacy odometer allows running a sequence
of mechanisms with adaptively selected privacy bounds and keeps track of the cumulative 
privacy budget that has been spent. Odometers have been developed for several DP 
definitions~\citep{rogersPrivacyOdometersFilters2016,feldmanIndividualPrivacyAccounting2021,lecuyerPracticalPrivacyFilters2021,koskelaIndividualPrivacyAccounting2023,whitehouseFullyAdaptiveCompositionDifferential2023} and more general sequences
of mechanisms~\citep{haneyConcurrentCompositionInteractive2023}. While odometers do not provide
ex-ante DP guarantees, they can provide ex-post guarantees. See 
\Cref{sec:odometers-ex-post-privacy} for more details.

\section{Background}\label{sec:background}
In this section, we introduce required background material on DP (\Cref{sec:dp-basics})
and ex-post privacy (\Cref{sec:ex-post-privacy}).

\paragraph{Common Notations}
We use $p_P(\yout)$ to denote the density function of a random variable $P$ at a value $\yout$.
We do not necessarily assume that $P$ is continuous, but rather assume that 
$P$ has a density function with regards to some base measure $\mu$, which is a much 
more general condition\footnote{This means that there exists a base measure $\mu$ such that for any measurable set $A$, $\Pr(A) = \int_A p_P(\yout) \dx \mu(y)$.}. This allows us to treat
discrete and continuous distributions in the same way.

We use $\yout_{i:j}$ to denote a sequence of values from index $i$ to $j$. If $j < i$,
we take the sequence to be empty.
\Cref{tab:notations} in the Appendix contains summary of common notations we use.

\subsection{Differential Privacy Basics}\label{sec:dp-basics}
Differential privacy quantitatively formalises the notion of privacy loss resulting from the release
of an output that depends on private input data. This can be formalised in several ways. For the
standard ex-ante setting with the privacy bound fixed upfront, we present two definitions. The 
first is the most commonly used definition, known as 
\emph{approximate DP} (ADP) or $(\epsilon, \delta)$-DP.
The second is known as \emph{Rényi DP} (RDP).
\begin{definition}[\citealt{dworkOurDataOurselves2006}]
    An algorithm $\calm\colon \xset \to \yset$ is $(\epsilon, \delta)$-DP
    for $\epsilon \geq 0$, $0 \leq \delta < 1$ if, for all neighbouring $\xreal, \xreal'\in \xset$ and 
    all measurable $A \subset \yset$,
    \begin{equation}
        \Pr(\calm(\xreal) \in A) \leq e^\epsilon \Pr(\calm(\xreal') \in A) + \delta.
    \end{equation}
\end{definition}
The case with $\delta = 0$ is also known as pure DP, and denoted $\epsilon$-DP.
\begin{definition}[\citealp{mironovRenyiDifferentialPrivacy2017}]
    An algorithm $\calm\colon \xset \to \yset$ is $(\alpha, \epsilon)$-RDP for 
    $\epsilon \geq 0$, $\alpha > 1$ if, for all neighbouring $\xreal, \xreal' \in \xset$,
    \begin{equation}
        \frac{1}{\alpha - 1}\ln \E_{\yout\sim \calm(X')}\left[\left(
        \frac{\mechden{\xreal}(\yout)}{\mechden{\xreal'}(\yout)}
        \right)^\alpha\right]
        \leq \epsilon.
    \end{equation}
\end{definition}
It is possible to define RDP for $\alpha = 1$ and $\alpha = \infty$ by taking a limit of the left hand side,
but we will not consider those values in this paper.

DP algorithms are also called mechanisms. We use $\xreal\sim \xreal'$ to denote neighbouring datasets. The neighbourhood
relation $\sim$ is domain-specific, but the results in this paper, with the exception of the experiments in
\Cref{sec:experiments} and \Cref{sec:accuracy-first-optimisation}, are agnostic to the definition.

We will also make use of \emph{privacy loss functions} (PLF) and \emph{privacy loss random variables} (PLRV).
\begin{definition}[\citealt{sommerPrivacyLossClasses2019}]\label{def:plrv}
    For random variables $P, Q$, with respective density functions $p_P, p_Q$, the PLF is
    $\privlossfn(\yout) = \ln \frac{p_P(\yout)}{p_Q(\yout)}$ and the 
    PLRV is $\plrvb = \call(\yout)$, where $\yout\sim P$.
\end{definition}

The PLF $\plf{\calm}{\xreal}{\xreal'}$ and PLRV $\plrv{\calm}{\xreal}{\xreal'}$ for a mechanism $\calm$ and 
datasets $X\sim X'$ is formed by setting $P = \calm(X)$ and 
$Q = \calm(X')$. The definition of $(\epsilon, 0)$-DP is then equivalent to 
$\Pr(\plrv{\calm}{\xreal}{\xreal'} > \epsilon) = 0$
for all $\xreal \sim \xreal'$.

\subsection{Accuracy-First Differential Privacy}\label{sec:ex-post-privacy}
In the accuracy-first perspective of DP, the privacy bound is not fixed upfront, and can vary based on
the result of a mechanism, possibly in a data-dependent way. Since common ex-ante definitions of DP require
a fixed privacy bound~\citep{dworkAlgorithmicFoundationsDifferential2014,mironovRenyiDifferentialPrivacy2017,dongGaussianDifferentialPrivacy2022},
the accuracy-first perspective requires generalised definitions that can accommodate a data-dependent 
privacy bound. We call these definitions ex-post 
privacy definitions after the first such definition from~\citet{ligettAccuracyFirstSelecting2017}.
\begin{definition}[\citealt{ligettAccuracyFirstSelecting2017,whitehouseBrownianNoiseReduction2022}]\label{def:ex-post-privacy-orig}
   An algorithm $\calm\colon \xset \to \yset$ is $(\cale, \delta)$-probabilistically ex-post private for 
   $\cale\colon \yset \to \R_{\geq 0}$ if, for all $\xreal \sim \xreal'$,
   \begin{equation}
       \Pr_{y\sim \calm(\xreal)}(\plf{\calm}{\xreal}{\xreal'}(\yout) > \cale(\yout)) \leq \delta.
   \end{equation} 
\end{definition}
The original definition from \citet{ligettAccuracyFirstSelecting2017} has $\delta = 0$.
To distinguish it from the $\delta > 0$ case, we will refer to the $\delta = 0$ case as 
$\cale$-pure ex-post privacy.
These definitions accommodate a varying privacy bound by making the bound a function $\cale$ of the output
$\yout$.

Recently, \citet{ghaziPrivateHyperparameterTuning2025} have proposed\footnote{\citet{ghaziPrivateHyperparameterTuning2025} 
credit this definition to \citet{wuAccuracyFirstSelecting2019}, which is the journal version 
of \citet{ligettAccuracyFirstSelecting2017}, but we did not find the definition in that paper, and are not aware of
any other work proposing it.}
an ex-post privacy definition similar to ADP, which we call \emph{approximate ex-post privacy} to distinguish from
the previous definition.
\begin{definition}[\citealp{ghaziPrivateHyperparameterTuning2025}]\label{def:approximate-ex-post-privacy-orig}
    An algorithm $\calm\colon \xset \to \yset$ is $(\cale, \delta)$-approximately ex-post private for 
   $\cale\colon \yset \to \R_{\geq 0}$ if, for all $\xreal \sim \xreal'$ and measurable $S\subseteq \yset$,
   \begin{equation}
       \Pr_{y\sim \calm(\xreal)}(\yout \in S) \leq \E_{\yout \sim \calm(\xreal')}[e^{\cale(\yout)} \I_{\yout \in S}] + \delta.
   \end{equation} 
\end{definition}
Due to the recency of this definition, we leave closer examinations of it for 
future work.

\section{Post-Processing in Ex-Post Privacy}\label{sec:ex-post-privacy-problems}

In ex-ante DP, post-processing immunity requires that processing the output of a DP mechanism without further access to the input data cannot weaken the DP guarantee~\citep{dworkAlgorithmicFoundationsDifferential2014}. Any sensible privacy definition
should have post-processing immunity~\citep{mironovRenyiDifferentialPrivacy2017,dongGaussianDifferentialPrivacy2022}, since if an adversary can weaken a privacy guarantee
by simply post-processing, that guarantee was not real in the first place~\citep{meiserApproximateProbabilisticDifferential2018}. Indeed, 
common ex-ante DP definitions have post-processing immunity~\citep{dworkOurDataOurselves2006,dworkCalibratingNoiseSensitivity2006,bunConcentratedDifferentialPrivacy2016,mironovRenyiDifferentialPrivacy2017,dongGaussianDifferentialPrivacy2022}.

In addition,
post-processing immunity is extremely useful in constructing DP mechanisms to solve practical problems,
as it allows choosing an intermediate quantity to release with a simple DP mechanism, and
using that to solve the problem without needing to consider DP further. Many DP algorithms
for solving complex problems, such as DP-SGD~\citep{rajkumarDifferentiallyPrivateStochastic2012,songStochasticGradientDescent2013,abadiDeepLearningDifferential2016}, 
and DP synthetic data generation~\citep{hardtSimplePracticalAlgorithm2012}, use this 
design.

In contrast to ex-ante DP, we are not aware of prior work proving that any ex-post privacy
definition has post-processing immunity.
To rectify this problem, we start with a very basic issue: the 
notation used in ex-post privacy definitions 
(\Cref{sec:ex-post-privacy-notation}). We then determine whether 
pure ex-post privacy and $\delta$-probabilistic ex-post privacy have 
post-processing immunity (\Cref{sec:existing-ex-post-definition-post-processing-immunity}). See \Cref{tab:ex-post-privacy-definitions} for a summary
of the results.

\begin{table*}
    \caption{Summary of ex-post privacy definitions. The columns indicate 
    whether the definition has post-processing immunity (PPI) and whether the 
    Brownian mechanism (BM) satisfies it. 
    Whether $\delta$-approximate ex-post 
    privacy has PPI is currently unknown, indicated by ``?''. 
    The PPI column shows the status
    for the formulations of the definitions using our notation from 
    \Cref{sec:ex-post-privacy-notation}.
    }
    \label{tab:ex-post-privacy-definitions}
    \begin{center}
    \begin{tabular}{lllr}
        \toprule
        Definition & PPI & BM & Source \\
        \midrule
        Pure ex-post privacy & \yes* & \no & \citealp[Definition 2.3]{ligettAccuracyFirstSelecting2017} \\
        $\delta$-probabilistic ex-post privacy & \no* & \yes & \citealp[Definition 2.2]{whitehouseBrownianNoiseReduction2022} \\
        $\delta$-approximate ex-post privacy & ? & \yes* & \citealp[Definition 5]{ghaziPrivateHyperparameterTuning2025} \\
        $\alpha$-ex-post RDP & \yes* & \yes* & This work, \Cref{def:unconditional-rdp} \\
        \bottomrule
    \end{tabular}

    \vspace{2mm}
    \footnotesize *Proven in this work.
    \end{center}
\end{table*}

\subsection{Ex-post Privacy Notation}\label{sec:ex-post-privacy-notation}
The original notation for ex-post private mechanisms, presented in \Cref{sec:ex-post-privacy}, has a major problem 
when considering post-processing. All of the definitions use a function $\cale\colon \yset \to \R_{\geq 0}$
that computes the privacy bound based on the output. For simplicity, in the following we simply refer to
$\cale$-ex-post privacy, leaving out the $\delta$ and the exact definition, as this issue applies to all of the
definitions.

A naive statement of post-processing immunity would say that if a randomised function $f\colon \yset \to \yset'$ 
is applied to the output of an $\cale$-ex-post private
mechanism $\calm\colon \xset \to \yset$, the resulting mechanism $\calm' = f \circ \calm$ is $\cale$-ex-post private.
However, this statement is not coherent: the domain of $\cale$ must match the range of the mechanism, but the
domain of $\cale$ is $\yset$, while the range of $\calm'$ is $\yset'$. To state post-processing immunity, we would
need to construct a function $\cale'\colon \yset' \to \R_{\geq 0}$ that is in some way equivalent to $\cale$.

However, we have not been able to come up with a satisfactory construction of $\cale'$ that works for all $f$.
One could attempt to define $\cale'(y') = \sup_{y\in f^{-1}(y')} \cale(y)$, where $f^{-1}$ denotes the preimage.
However, this is potentially extremely pessimistic. For example, consider the Brownian mechanism
in the form presented by \citet{whitehouseBrownianNoiseReduction2022}, where the mechanism releases the
output $\yout \in \R$ and the final noise variance $T > 0$, and $\cale$ is computed from $T$. Suppose that the 
noise variances are calibrated in such a way that $\cale(T) \in [0.1, 10]$. Consider the post-processing
$g(\yout, T) = \yout$ that forgets $T$. Since a particular value of $\yout$ as output does not rule out any 
$T > 0$ as a possible output, we have $g^{-1}(\yout) = \{(\yout, T) \mid T > 0, 0.1 \leq \cale(T) \leq 10\}$, 
so with this definition of $\cale'$, we have $\cale'(\yout) = 10$ for all $\yout \in \R$. This is extremely pessimistic, so $\cale'$ is in fact not equivalent to $\cale$.

To solve this issue, we introduce a new notation that makes the privacy
bound a part of the output of the mechanism. Mathematically, we define a mechanism in the ex-post setting to be
a randomised function $\calm\colon \xset \to \yset \times \R_{\geq 0}$. We use $\mechy{\xreal} \in \yset$ to 
denote the output of interest from the mechanism, and denote the returned privacy bound by 
$\mecheps{\xreal} \in \R_{\geq 0}$. In total, the mechanism returns the pair
$\calm(\xreal) = (\mechy{\xreal}, \mecheps{\xreal})$. 
Since $\epsilon$ is part of the output,
an ex-post privacy definition must account for the fact that releasing $\epsilon$ may also cause 
privacy loss, so the PLF $\plf{\calm}{\xreal}{\xreal'}(\yout, \epsilon)$ is now a function of 
the pair $(\yout, \epsilon)$. In the Brownian mechanism example from the previous paragraph, the 
mechanism returns $((\yout, T), \cale(T))$ instead of just $(\yout, T)$ with this notation.

Next, we present a formulation of the probabilistic ex-post privacy definition (\Cref{def:ex-post-privacy-orig}) using our notation. The other definitions can be
converted similarly. As the privacy bound is now part of the 
mechanism's output, we drop the $\cale$ from the definition, so it is now called $\delta$-probabilistic ex-post privacy.
\begin{definition}[$\delta$-probabilistic ex-post privacy]\label{def:ex-post-privacy}
    An algorithm $\calm\colon \xset \to \yset\times \R_{\geq 0}$ is $\delta$-probabilistically ex-post private
    if, for all $\xreal \sim \xreal'$,
    \begin{equation}
        \Pr_{(\yout, \epsilon) \sim \calm(\xreal)}(\plf{\calm}{\xreal}{\xreal'}(\yout, \epsilon) > \epsilon) \leq \delta.
    \end{equation}
\end{definition}

Now we can elegantly define post-processing for ex-post mechanisms. Since this is different 
from function composition, we use $\odot$ to denote this operation.
\begin{definition}\label{def:ex-post-post-processing}
    Let $\calm\colon \xset \to \yset \times \R_{\geq 0}$ be a mechanism and let 
    $f\colon \yset \times \R_{\geq 0} \to \yset'$ be a randomised function. We define the post-processing of 
    $\calm$ with $f$, denoted $\calm' = f \odot \calm$, to be the mechanism 
    \begin{align}
        \mechy[']{\xreal} &= f\big(\calm(\xreal)\big), \\
        \mecheps[']{\xreal} &= \mecheps{\xreal}.
    \end{align}
\end{definition}
With this definition of post-processing, we can simply define post-processing immunity to mean that
$\calm$ and $\calm'$ satisfy the same ex-post privacy notion, whether it be $\delta$-probabilistic ex-post privacy
or any other definition. In \Cref{sec:existing-ex-post-definition-post-processing-immunity}, we show that
\Cref{def:ex-post-privacy} satisfies post-processing immunity when $\delta = 0$, but not when $\delta > 0$.

As $\epsilon$ is part of the mechanism's output, we allow post-processing both the output of interest $y$
and $\epsilon$, but the returned privacy bound of the post-processed mechanism does not change.
This is in line with the capabilities of an adversary: they can arbitrarily attempt to find an easier representation
of $(\yout, \epsilon)$ to attack, but they cannot change what the original mechanism claimed the privacy bound to
be.

Disregarding post-processing, \Cref{def:ex-post-privacy} and \Cref{def:ex-post-privacy-orig} are equivalent in
the following sense. First, consider a pair
($\calm$, $\cale$) from \Cref{def:ex-post-privacy-orig}. We can define 
$\mechy[']{\xreal} = \calm(\xreal)$ and $\mecheps[']{\xreal} = \cale(\calm(X))$. Now $\calm'$ satisfies
\Cref{def:ex-post-privacy} if and only if $(\calm, \cale)$ satisfy \Cref{def:ex-post-privacy-orig}.
In the other direction, if we have a mechanism $\calm$ from \Cref{def:ex-post-privacy}, we can define
$\calm'(\xreal) = (\mechy{\xreal}, \mecheps{\xreal})$ and $\cale(y, \epsilon) = \epsilon$. Again,
$\calm'$ satisfies \Cref{def:ex-post-privacy-orig} if and only if $\calm$ satisfies 
\Cref{def:ex-post-privacy}. This means that making the privacy bound an output of the algorithm in 
\Cref{def:ex-post-privacy} is only a conceptual change from \Cref{def:ex-post-privacy-orig}, and does not limit the applicability of the definition.

\subsection{Post-Processing Immunity Status for Existing Definitions}\label{sec:existing-ex-post-definition-post-processing-immunity}

Next, we show that the original ex-post privacy definition~\citep{ligettAccuracyFirstSelecting2017} 
(\Cref{def:ex-post-privacy} with $\delta = 0$) has post-processing immunity. We defer all proofs 
in this section to \Cref{sec:missing-proofs-existing-post-processing}.
\begin{restatable}{theorem}{theoremexpostpuredppostprocessing}\label{thm:ex-post-pure-dp-post-processing}
    Let $\calm$ be a pure ex-post private mechanism with continuous (or discrete) output distributions and let $f$ be a 
    randomised function with continuous (or discrete) output distributions. Then the post-processed
    mechanism $\calm' = f \odot \calm$ (see \Cref{def:ex-post-post-processing}) is pure ex-post private.
\end{restatable}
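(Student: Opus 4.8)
The plan is to mimic the classical proof that pure ex-ante DP is immune to post-processing, reducing the whole statement to a single almost-everywhere inequality between densities; the only genuinely new feature is that the reported bound $\epsilon$ is now part of the output, and $\odot$ copies it through unchanged. Fix neighbours $\xreal \sim \xreal'$, write $\mu = \mu_\yset \otimes \mu_0$ for a $\sigma$-finite product base measure on $\yset \times \R_{\ge 0}$ (with $\mu_\yset$ on $\yset$ and $\mu_0$ on $\R_{\ge 0}$; the ``continuous (or discrete)'' hypothesis is what lets us take it of product form), and let $p, q$ be the joint densities of $\calm(\xreal), \calm(\xreal')$ and $p', q'$ those of $\calm'(\xreal), \calm'(\xreal')$. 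The first step is the observation that, by \Cref{def:ex-post-privacy} with $\delta = 0$, a mechanism with densities $p,q$ is pure ex-post private if and only if $p(\yout, \epsilon) \le e^{\epsilon} q(\yout, \epsilon)$ for $\mu$-a.e.\ $(\yout, \epsilon)$: the bad event $\{\plf{\calm}{\xreal}{\xreal'}(\yout, \epsilon) > \epsilon\}$ is exactly the set $\{p > e^{\epsilon} q\}$, on which $p$ is strictly positive (as $e^{\epsilon}q \ge 0$), so $\Pr_{\calm(\xreal)}$ of it vanishing forces the set itself to be $\mu$-null; the converse is immediate. The same equivalence applies verbatim to $\calm'$.

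Next I would identify $p'$ in terms of $p$. Writing $k(\yout' \mid \yout, \epsilon)$ for the conditional density of $f$ applied to the output $(\yout,\epsilon)$ — the transition kernel of the randomised post-processing, which depends on the mechanism's output but, crucially, not on the data — the definition of $f \odot \calm$ (\Cref{def:ex-post-post-processing}) together with $\mecheps[']{\xreal} = \mecheps{\xreal}$ gives
\[
    p'(\yout', \epsilon) = \int_{\yset} k(\yout' \mid \yout, \epsilon)\, p(\yout, \epsilon) \dm\mu_\yset(\yout),
\]
and analogously for $q'$; only $\yout$ is integrated out, since the second coordinate is passed through verbatim (so $\epsilon$ plays the role of a parameter, not an integration variable). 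Now apply the Tonelli theorem to promote the a.e.\ inequality $p \le e^{\epsilon} q$ to: for $\mu_0$-a.e.\ $\epsilon$ one has $p(\yout, \epsilon) \le e^{\epsilon} q(\yout, \epsilon)$ for $\mu_\yset$-a.e.\ $\yout$. For any such $\epsilon$, because $e^{\epsilon}$ is a constant inside the $\yout$-integral and $k \ge 0$, the displayed formula yields $p'(\yout', \epsilon) \le e^{\epsilon} q'(\yout', \epsilon)$ for every $\yout'$; hence this inequality holds for $\mu_\yset\otimes\mu_0$-a.e.\ $(\yout', \epsilon)$, and by the first step $\calm'$ is pure ex-post private.

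I expect the main obstacle to be measure-theoretic bookkeeping rather than anything conceptual: rigorously justifying the kernel representation of a randomised map and the ensuing integral formula for $p'$ (joint measurability of $k$, $\sigma$-finiteness so that Tonelli applies and so that the product base measure on $\yset' \times \R_{\ge 0}$ is legitimate, and the fact that the $\epsilon$-marginal is unchanged), together with dispatching the degenerate cases in the privacy loss function (ratios whose denominator vanishes), all of which become harmless once the a.e.\ density inequality is available. The restriction to continuous or discrete output distributions is precisely what excludes pathological non-$\sigma$-finite base measures and guarantees that $f$ admits a regular conditional density, so I would not attempt the statement in greater generality.
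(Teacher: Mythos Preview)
Your proposal is correct and follows essentially the same route as the paper: rewrite pure ex-post privacy as the density inequality $p(\yout,\epsilon)\le e^{\epsilon}q(\yout,\epsilon)$, express the post-processed density via the kernel integral over $\yout$ with $\epsilon$ held fixed, and push the inequality through the integral because $e^{\epsilon}$ is constant in $\yout$. You are more explicit than the paper about the measure-theoretic scaffolding (product base measure, Tonelli to slice the a.e.\ statement, regularity of the kernel), but the argument is the same.
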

Note that unlike our other theory, we were only able to prove this theorem for the continuous and discrete
special cases, and leave a more general treatment to future work. 
%\todo{Try to prove this in general if there is time.}

In contrast, next we show that $\delta$-probabilistic ex-post privacy with $\delta > 0$ 
does not have post-processing immunity.
This is because $\delta$-probabilistic ex-post privacy generalises $(\epsilon, \delta)$-probabilistic-DP, which we define next.
\begin{definition}[\citealt{meiserApproximateProbabilisticDifferential2018}, Definition 4]\label{def:probabilistic-dp}
    A mechanism $\calm$ is $(\epsilon, \delta)$-probabilistically DP if, for all neighbouring datasets
    $\xreal, \xreal'$, there is a set $S^\delta$ with $\Pr(\calm(X)\in S^\delta) \leq \delta$ such that for all 
    measurable $S$,
    \begin{equation}
        \Pr(\calm(\xreal) \in S\setminus S^\delta) \leq e^\epsilon \Pr(\calm(\xreal')\in S \setminus S^\delta).
    \end{equation}
\end{definition}
\citet{meiserApproximateProbabilisticDifferential2018} shows that Definition~\ref{def:probabilistic-dp} does not 
have post-processing immunity. The simplest counterexample has a mechanism with binary input
and four possible outputs, and a post-processing that simply combines two of the possible 
outputs into one. The probabilities of each input-output pair are chosen such that
the original mechanism is $\epsilondelta$-probabilistically DP, but the post-processed one is not.

It turns out that Definition~\ref{def:probabilistic-dp} is a special case of $\delta$-probabilistic ex-post 
privacy, which means that $\delta$-probabilistic ex-post privacy cannot have post-processing immunity either when 
$\delta > 0$.

\begin{restatable}{theorem}{theoremprobabilisticdpexpostprivacyequivalence} \label{thm:probabilistic-dp-ex-post-privacy-equivalence}
    For mechanisms $\calm$ with constant $\mecheps{\xreal} = \epsilon_c$, $\delta$-probabilistic ex-post privacy
    (Definition~\ref{def:ex-post-privacy}) is equivalent to $(\epsilon_c, \delta)$-probabilistic DP
    (Definition~\ref{def:probabilistic-dp}).
\end{restatable}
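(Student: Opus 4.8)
The plan is to show the two definitions coincide by unwinding both, exploiting that $\mecheps{\xreal}$ is the constant $\epsilon_c$ and therefore carries no information about the input. Since the privacy bound is constant, the PLF $\plf{\calm}{\xreal}{\xreal'}(\yout, \epsilon)$ depends only on $\yout$: the marginal densities of $\mecheps{\xreal}$ and $\mecheps{\xreal'}$ are both point masses at $\epsilon_c$, so their ratio is $1$ on the support and the joint PLF reduces to $\plf{\calm}{\xreal}{\xreal'}(\yout)$ in the sense of \Cref{def:plrv} applied to the output-of-interest marginals $P = \mechy{\xreal}$ and $Q = \mechy{\xreal'}$. Consequently \Cref{def:ex-post-privacy} becomes exactly $\Pr_{\yout \sim \mechy{\xreal}}(\plf{\calm}{\xreal}{\xreal'}(\yout) > \epsilon_c) \leq \delta$, i.e. the PLRV tail condition with threshold $\epsilon_c$.

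For the ($\Rightarrow$) direction, starting from $\delta$-probabilistic ex-post privacy I would take $S^\delta := \{\yout : \plf{\calm}{\xreal}{\xreal'}(\yout) > \epsilon_c\}$ (the ``bad'' set where the likelihood ratio exceeds $e^{\epsilon_c}$). By assumption $\Pr(\mechy{\xreal} \in S^\delta) \leq \delta$. Then for any measurable $S$, on $S \setminus S^\delta$ we have $p_{\mechy{\xreal}}(\yout) \leq e^{\epsilon_c} p_{\mechy{\xreal'}}(\yout)$ pointwise, so integrating over $S \setminus S^\delta$ against the base measure $\mu$ gives $\Pr(\mechy{\xreal} \in S \setminus S^\delta) \leq e^{\epsilon_c}\Pr(\mechy{\xreal'} \in S \setminus S^\delta)$, which is precisely \Cref{def:probabilistic-dp}. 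For the ($\Leftarrow$) direction, given $(\epsilon_c, \delta)$-probabilistic DP with witnessing set $S^\delta$, I would argue that $\{\yout : \plf{\calm}{\xreal}{\xreal'}(\yout) > \epsilon_c\} \subseteq S^\delta$ up to a $\mu$-null set: if a set $B \subseteq \{\plf{} > \epsilon_c\}$ had positive mass under $\mathrm{M}(\xreal)$ outside $S^\delta$, then taking $S = B \setminus S^\delta$ would violate the defining inequality since $p_{\mechy{\xreal}} > e^{\epsilon_c} p_{\mechy{\xreal'}}$ strictly on $B$. Hence $\Pr(\plf{} > \epsilon_c) \leq \Pr(\mechy{\xreal} \in S^\delta) \leq \delta$, giving \Cref{def:ex-post-privacy}.

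The main obstacle is the measure-theoretic bookkeeping in the ($\Leftarrow$) direction: the probabilistic-DP definition quantifies over an \emph{existential} set $S^\delta$ that need not equal the natural bad set, so I must show the likelihood-ratio bad set is contained in (a version of) $S^\delta$ up to null sets, handling the non-strict vs. strict inequality carefully and the possibility that $p_{\mechy{\xreal'}}(\yout) = 0$ where $p_{\mechy{\xreal}}(\yout) > 0$ (the PLF is $+\infty$ there, which is $> \epsilon_c$, so those points must lie in $S^\delta$ — consistent since otherwise $S = \{p_{\mechy{\xreal'}} = 0 < p_{\mechy{\xreal}}\} \setminus S^\delta$ would give a right-hand side of $0$). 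I would also note explicitly that both definitions are symmetric over ordered neighbouring pairs $\xreal \sim \xreal'$ (or state that the PLF/PLRV formulation already ranges over both orderings), so the equivalence holds uniformly. Everything else is a routine translation between the tail-probability form of the PLRV and the set-wise form of approximate DP, which is standard (cf. \citealp{sommerPrivacyLossClasses2019}).
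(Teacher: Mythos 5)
Your proposal is correct and matches the paper's proof essentially step for step: the forward direction uses the likelihood-ratio bad set as the witness $S^\delta$ and integrates the pointwise bound over $S\setminus S^\delta$, and the reverse direction shows the bad set $\{\plf{\calm}{\xreal}{\xreal'} > \epsilon_c\}$ lies inside $S^\delta$ up to a $\mu$-null set via exactly the strict-inequality-on-a-positive-measure-set contradiction the paper uses. The extra remarks on reducing the joint PLF over $(\yout,\epsilon)$ to the marginal over $\yout$ and on the $p_{\mechy{\xreal'}}=0$ case are sound and only presentational refinements of the same argument.
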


\section{Ex-post RDP}\label{sec:ex-post-rdp}
In this section, we define ex-post RDP and prove some basic properties. 
We give a Bayesian interpretation of ex-post RDP similar to that of
\citet{mironovRenyiDifferentialPrivacy2017} for ex-ante RDP in 
\Cref{sec:ex-post-rdp-interpretation}.

\paragraph{Definition of ex-post RDP}
The definition of RDP uses the following inequality:
\begin{equation}
    \frac{1}{\alpha - 1} \ln \E_{\yout \sim \calm(\xreal')}\left[\left(
    \frac{\mechden{\xreal}(\yout)}{\mechden{\xreal'}(\yout)}\right)^\alpha\right] \leq \epsilon.
\end{equation}
It is possible to move $\epsilon$ inside the expectation and arrive at an equivalent inequality:
\begin{equation}
    \E_{\yout \sim \calm(\xreal')}\left[e^{(1 - \alpha)\epsilon} \left(
    \frac{\mechden{\xreal}(\yout)}{\mechden{\xreal'}(\yout)}\right)^\alpha\right] \leq 1.
\end{equation}
This motivates an ex-post version of RDP:
\begin{definition}\label{def:unconditional-rdp}
    A mechanism $\calm\colon \xset \to \yset \times \R_{\geq 0}$ is $\alpha$-ex-post RDP ($\alpha > 1$) if, for all neighbouring $\xreal\sim \xreal'$, 
    \begin{equation}
        \E_{(\yout,\epsilon) \sim \calm(\xreal')}
        \left[e^{(1 - \alpha)\epsilon}\left(\frac{\mechden{\xreal}(y,\epsilon)}{\mechden{\xreal'}(y,\epsilon)}\right)^\alpha\right]
        \leq 1.
    \end{equation}
\end{definition}
Note that unlike the definition of RDP, we do not permit $\alpha = 1$ or $\alpha = \infty$ in this work, and leave the 
investigation whether ex-post RDP could be extended to them for future work. The reason is that the limits used to define
these cases for RDP are not as well-understood when $\epsilon$ is a random variable inside the expectation.
%\todo{Look at these cases if there is time.}
This definition has also been proposed in parallel work~\citep{ghaziPrivateHyperparameterTuning2025}, though
using the original ex-post privacy notation instead of our notation.

The conceptual difference between \Cref{def:unconditional-rdp} and \Cref{def:ex-post-privacy} is similar to the difference between probabilistic DP and RDP. \Cref{def:ex-post-privacy} is essentially a tail bound on the random variable $\plf{\calm}{\xreal}{\xreal'}(\yout, \epsilon)$, while \Cref{def:unconditional-rdp} essentially bounds the moment-generating function of $\plf{\calm}{\xreal}{\xreal'}(\yout, \epsilon)$. The word ``essentially'' is needed because the bound $\epsilon$ is also a random variable in the ex-post case.

We also present an ex-post version of zero-concentrated DP (zCDP; \citealp{bunConcentratedDifferentialPrivacy2016}) in \Cref{app:alternative-privacy-bounds} (\Cref{def:ex-post-zcdp}) that relates to ex-post RDP the same way the ex-ante versions are related.

\paragraph{Basic Properties of Ex-post RDP}
In the rest of this section, we prove many basic theorems about ex-post RDP that we use to construct a practical
mechanism in \Cref{sec:experiments}. We present proof ideas in the main text, and defer full proofs to \Cref{sec:ex-post-rdp-proofs}.
We start with post-processing immunity.
\begin{restatable}{theorem}{theoremunconditionalexpostrdppostprocessing}\label{thm:unconditional-ex-post-rdp-post-processing}
    Let $\calm$ be an $\alpha$-ex-post RDP mechanism and let $f$ be a randomised function.
    Then the post-processed mechanism $\calm'(X) = f\odot \calm$ (\Cref{def:ex-post-post-processing}) is $\alpha$-ex-post RDP.
\end{restatable}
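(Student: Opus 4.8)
The plan is to mimic the standard proof of post-processing immunity for ex-ante RDP, adapted to the ex-post setting where the privacy parameter $\epsilon$ travels along as part of the output and is \emph{not} altered by post-processing. Write $\calm(\xreal) = (\mechy{\xreal}, \mecheps{\xreal})$ and $\calm'(\xreal) = (f(\calm(\xreal)), \mecheps{\xreal})$. Since $f$ acts only on the first coordinate and leaves $\mecheps{\xreal}$ untouched, the joint output of $\calm'$ factors through $\calm$: conditionally on $\calm(\xreal) = (\yout, \epsilon)$, the new output-of-interest $\yout'$ is drawn from a kernel $K(\yout' \mid \yout, \epsilon)$ that does not depend on $\xreal$. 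This is the crucial structural fact, and it is exactly what \Cref{def:ex-post-post-processing} buys us.

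First I would fix neighbouring $\xreal \sim \xreal'$ and write down the density of $\calm'(\xreal)$ at a point $(\yout', \epsilon)$ by marginalising: $\mechden[']{\xreal}(\yout', \epsilon) = \int K(\yout' \mid \yout, \epsilon)\, \mechden{\xreal}(\yout, \epsilon)\, \dx\mu(\yout)$, where $\mu$ is the base measure on $\yset$. (If $f$ is deterministic, $K$ is a point mass and this is just a change of variables / pushforward; the randomised case is the same argument with an extra integration, so I would handle the general kernel directly.) Then the quantity to bound is
\begin{equation*}
    \E_{(\yout', \epsilon) \sim \calm'(\xreal')}\left[e^{(1-\alpha)\epsilon}\left(\frac{\mechden[']{\xreal}(\yout', \epsilon)}{\mechden[']{\xreal'}(\yout', \epsilon)}\right)^\alpha\right]
    = \iint e^{(1-\alpha)\epsilon}\, \mechden[']{\xreal}(\yout', \epsilon)^\alpha\, \mechden[']{\xreal'}(\yout', \epsilon)^{1-\alpha}\, \dx\mu(\yout')\,\dx\nu(\epsilon),
\end{equation*}
where $\nu$ is the base measure on $\R_{\geq 0}$ (note $\mecheps{\xreal} = \mecheps[']{\xreal}$, so the $\epsilon$-marginals of $\calm(\xreal)$ and $\calm'(\xreal)$ agree). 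Substituting the marginalisation formula for $\mechden[']{\xreal}$ and $\mechden[']{\xreal'}$, the task reduces to showing this is dominated by the corresponding integral $\iint e^{(1-\alpha)\epsilon}\, \mechden{\xreal}(\yout,\epsilon)^\alpha\, \mechden{\xreal'}(\yout,\epsilon)^{1-\alpha}\,\dx\mu(\yout)\,\dx\nu(\epsilon) \le 1$ for $\calm$.

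The key step — and the main obstacle — is the data-processing inequality for the term $\bigl(\int K\, \mechden{\xreal}\bigr)^\alpha \bigl(\int K\, \mechden{\xreal'}\bigr)^{1-\alpha}$ against $\int K\, \mechden{\xreal}^\alpha\, \mechden{\xreal'}^{1-\alpha}$, pointwise in $\epsilon$. For $\alpha > 1$ the map $(a,b) \mapsto a^\alpha b^{1-\alpha}$ on $\R_{\geq 0}^2$ is jointly convex, so by Jensen's inequality applied to the probability kernel $K(\cdot \mid \yout, \epsilon)$ (after writing the outer integrals as expectations over $\yout \sim \calm(\xreal')_{\mid \epsilon}$ with an appropriate likelihood-ratio reweighting, or more cleanly by working with the pair of unnormalised measures and invoking joint convexity directly), we get
\begin{equation*}
    \left(\int K(\yout'\mid \yout, \epsilon)\, \mechden{\xreal}(\yout, \epsilon)\,\dx\mu(\yout)\right)^\alpha \left(\int K(\yout'\mid \yout, \epsilon)\, \mechden{\xreal'}(\yout, \epsilon)\,\dx\mu(\yout)\right)^{1-\alpha} \le \int K(\yout'\mid \yout, \epsilon)\, \mechden{\xreal}(\yout, \epsilon)^\alpha\, \mechden{\xreal'}(\yout, \epsilon)^{1-\alpha}\,\dx\mu(\yout).
\end{equation*}
Integrating both sides over $\yout'$ (using $\int K(\yout' \mid \yout, \epsilon)\,\dx\mu(\yout') = 1$ by Fubini) and then over $\epsilon$ against $e^{(1-\alpha)\epsilon}\,\dx\nu(\epsilon)$ collapses the right-hand side to the $\alpha$-ex-post RDP expression for $\calm$, which is $\le 1$ by hypothesis. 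I would need to be a little careful about the support/absolute-continuity bookkeeping (points where $\mechden{\xreal'}(\yout,\epsilon) = 0$ but $\mechden{\xreal}(\yout, \epsilon) > 0$ force the expression to $+\infty$, so ex-post RDP already rules these out up to the base measure), and about justifying the Jensen step when the kernel is genuinely randomised rather than a pushforward — this convexity/data-processing argument is where essentially all the content lies, and everything else is Fubini and rewriting.
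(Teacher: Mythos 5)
Your proposal is correct and follows essentially the same route as the paper: both exploit that the post-processing kernel does not depend on $\xreal$ and leaves the returned $\epsilon$ untouched, condition on $\epsilon$, and reduce the claim to the data-processing inequality for the Rényi-type integral applied to the conditional distributions of the output given $\epsilon$. The only cosmetic difference is that the paper invokes the data-processing inequality for Rényi divergence as a cited black box, whereas you re-derive it inline via joint convexity and positive homogeneity of $(a,b) \mapsto a^\alpha b^{1-\alpha}$ — which is exactly the standard proof of that inequality, so the content is the same.
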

\begin{proof}[Proof idea]
    The idea of the proof is to use the data-processing inequality~\citep{ervenRenyiDivergenceKullbackLeibler2014}, similar to the ex-ante case~\citep{mironovRenyiDifferentialPrivacy2017}.
\end{proof}

Next, we prove that if an 
ex-post RDP mechanism has an upper bound on the returned $\epsilon$, the mechanism is ex-ante RDP.
\begin{restatable}{theorem}{theoremunconditionalexpostrdpfilter} \label{thm:unconditional-ex-post-rdp-filter}
    Let $\calm$ be an $\alpha$-ex-post RDP mechanism with $\mecheps{\xreal} \leq \epsilon_c$ almost surely
    for all $\xreal\in \xset$. Then $\calm$ is $(\alpha, \epsilon_c)$-RDP.
\end{restatable}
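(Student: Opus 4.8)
The plan is to reduce the ex-post RDP condition with an almost-sure upper bound $\mecheps{\xreal}\le \epsilon_c$ to the ordinary $(\alpha,\epsilon_c)$-RDP condition for the joint output $(\yout,\epsilon)$, and then argue that this implies $(\alpha,\epsilon_c)$-RDP for the marginal output $\mechy{\xreal}$ via the data-processing inequality for Rényi divergence (which is exactly post-processing immunity of ex-ante RDP, a fact used by \citet{mironovRenyiDifferentialPrivacy2017}). First I would start from the defining inequality of \Cref{def:unconditional-rdp},
\begin{equation*}
    \E_{(\yout,\epsilon) \sim \calm(\xreal')}
    \left[e^{(1 - \alpha)\epsilon}\left(\frac{\mechden{\xreal}(y,\epsilon)}{\mechden{\xreal'}(y,\epsilon)}\right)^\alpha\right]
    \leq 1,
\end{equation*}
and use $\mecheps{\xreal'}\le \epsilon_c$ almost surely. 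Since $\alpha > 1$, we have $1-\alpha < 0$, so $e^{(1-\alpha)\epsilon} \ge e^{(1-\alpha)\epsilon_c}$ pointwise on the support of $\calm(\xreal')$; the integrand is nonnegative, so pulling this constant out gives
\begin{equation*}
    e^{(1-\alpha)\epsilon_c}\,\E_{(\yout,\epsilon) \sim \calm(\xreal')}
    \left[\left(\frac{\mechden{\xreal}(y,\epsilon)}{\mechden{\xreal'}(y,\epsilon)}\right)^\alpha\right]
    \leq 1.
\end{equation*}
Rearranging and taking $\tfrac{1}{\alpha-1}\ln(\cdot)$ of both sides yields
$\tfrac{1}{\alpha-1}\ln \E_{(\yout,\epsilon)\sim\calm(\xreal')}[(\mechden{\xreal}(y,\epsilon)/\mechden{\xreal'}(y,\epsilon))^\alpha] \le \epsilon_c$, which is precisely the statement that the joint mechanism $\xreal \mapsto (\mechy{\xreal},\mecheps{\xreal})$ is $(\alpha,\epsilon_c)$-RDP.

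Next I would pass from the joint output to the marginal output of interest. Discarding $\epsilon$ is a (deterministic) post-processing map, and ex-ante RDP is closed under post-processing, so $\xreal\mapsto \mechy{\xreal}$ is $(\alpha,\epsilon_c)$-RDP as well. If one wants to stay entirely self-contained one can instead invoke \Cref{thm:unconditional-ex-post-rdp-post-processing} combined with the trivial observation that a constant $\mecheps{}\equiv\epsilon_c$ mechanism being ex-post RDP is the same as being $(\alpha,\epsilon_c)$-RDP; but the cleanest route is just the data-processing inequality for Rényi divergences. Either way, the conclusion is that $\calm$, viewed as returning $\mechy{\xreal}$, is $(\alpha,\epsilon_c)$-RDP, which is the desired statement.

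The only real subtlety — and the step I would be most careful about — is the measure-theoretic bookkeeping when $\mecheps{}$ is genuinely random and the densities $\mechden{\xreal}(y,\epsilon)$, $\mechden{\xreal'}(y,\epsilon)$ are joint densities with respect to the base measure $\mu$ on $\yset\times\R_{\ge 0}$: I need the bound $\mecheps{\xreal'}\le\epsilon_c$ to hold almost surely under $\calm(\xreal')$ (not merely under $\calm(\xreal)$) so that the pointwise inequality $e^{(1-\alpha)\epsilon}\ge e^{(1-\alpha)\epsilon_c}$ is valid on a set of full measure against which the expectation is taken, and I should note that the hypothesis "$\mecheps{\xreal}\le\epsilon_c$ a.s.\ for all $\xreal$" covers this. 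The rest is routine: nonnegativity of the integrand justifies factoring out the constant, and monotonicity of $\ln$ and of $t\mapsto t^{1/(\alpha-1)}$ (since $\alpha>1$) justifies the final rearrangement.
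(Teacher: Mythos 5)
Your proof is correct and follows essentially the same argument as the paper: use $\alpha>1$ and the almost-sure bound to get $e^{(1-\alpha)\epsilon}\ge e^{(1-\alpha)\epsilon_c}$ pointwise under $\calm(\xreal')$, pull the constant out of the expectation, and rearrange with $\frac{1}{\alpha-1}\ln(\cdot)$. The only (harmless) difference is your final marginalisation step: the paper stops at the joint output $(\yout,\epsilon)$ being $(\alpha,\epsilon_c)$-RDP, since the pair is the mechanism's output and that is what the theorem's conclusion refers to, so discarding $\epsilon$ via data processing is not needed.
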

\begin{proof}[Proof idea]
    The claim follows by plugging $\epsilon_c$ into \Cref{def:unconditional-rdp} and basic manipulations.
\end{proof}

The next theorem shows that ex-post RDP mechanisms can be adaptively composed.
This also covers compositions mixing ex-ante RDP and ex-post RDP mechanism, as
ex-ante RDP is a special case of ex-post RDP.
\begin{restatable}{theorem}{theoremunconditionalexpostrdpcomposition}\label{thm:unconditional-ex-post-rdp-composition}
    Let $\calm_i(\xreal, \yout_{1:i-1})$, $1 \leq i \leq T$ be $\alpha$-ex-post RDP mechanisms. Then the adaptive composition
    $\calm^*(\xreal) = (\yout_{1:T}, \sum_{i=1}^T \epsilon_i)$ with
    \begin{align}
        \yout_i &= \mechy[_i]{\xreal, y_{1:i-1}} \\
        \epsilon_i &= \mecheps[_i]{\xreal, y_{1:i-1}}
    \end{align}
    is $\alpha$-ex-post RDP. If $\calm^*$ additionally releases the intermediate $\epsilon_i$ values,
    it remains $\alpha$-ex-post RDP.
\end{restatable}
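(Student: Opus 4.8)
The plan is to prove the adaptive composition theorem by induction on $T$, reducing to the case $T = 2$. For the base case, I would work directly with the joint density of $\calm^*(\xreal) = (\yout_1, \yout_2, \epsilon_1 + \epsilon_2)$. The key observation is that for adaptive composition, the density factorises: writing $z = \epsilon_1 + \epsilon_2$, the output of $\calm^*$ on input $\xreal$ has density (with respect to an appropriate base measure, integrating over the ways $z$ splits) governed by the conditional densities $\mechden[_1]{\xreal}(\yout_1, \epsilon_1)$ and $\mechden[_2]{\xreal, \yout_1}(\yout_2, \epsilon_2)$. The quantity we must bound is $\E_{\calm^*(\xreal')}[e^{(1-\alpha)(\epsilon_1 + \epsilon_2)} (\mechden[^*]{\xreal} / \mechden[^*]{\xreal'})^\alpha]$, and the factor $e^{(1-\alpha)(\epsilon_1+\epsilon_2)} = e^{(1-\alpha)\epsilon_1} e^{(1-\alpha)\epsilon_2}$ splits cleanly to match the two-stage structure.

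Concretely, I would first condition on the first-stage output $(\yout_1, \epsilon_1)$. The inner expectation over the second stage, conditioned on $\yout_1$, is exactly $\E_{(\yout_2,\epsilon_2)\sim\calm_2(\xreal',\yout_1)}[e^{(1-\alpha)\epsilon_2}(\mechden[_2]{\xreal,\yout_1}(\yout_2,\epsilon_2)/\mechden[_2]{\xreal',\yout_1}(\yout_2,\epsilon_2))^\alpha]$, which is $\leq 1$ by the ex-post RDP guarantee of $\calm_2$ applied to the (still neighbouring) pair $\xreal, \xreal'$. Having bounded the inner expectation by $1$, what remains is $\E_{(\yout_1,\epsilon_1)\sim\calm_1(\xreal')}[e^{(1-\alpha)\epsilon_1}(\mechden[_1]{\xreal}(\yout_1,\epsilon_1)/\mechden[_1]{\xreal'}(\yout_1,\epsilon_1))^\alpha]$, which is $\leq 1$ by the ex-post RDP guarantee of $\calm_1$. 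This gives the $T=2$ case; the induction step treats $\calm_{1:T-1}$ as a single ex-post RDP mechanism (by the inductive hypothesis) composed with $\calm_T$.

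For the final sentence — that releasing the intermediate $\epsilon_i$ keeps the guarantee — I would note that releasing $(\yout_{1:T}, \epsilon_{1:T})$ instead of $(\yout_{1:T}, \sum_i \epsilon_i)$ is itself a change of what is output, but crucially the \emph{claimed} privacy bound is still $\sum_i \epsilon_i$; so this is just the composition run with augmented outputs $\tilde{\yout}_i = (\yout_i, \epsilon_i)$, and the same argument applies verbatim since the factorisation structure and the per-stage guarantees are unaffected by carrying $\epsilon_i$ along in the output of interest. Alternatively, it follows from post-processing immunity (\Cref{thm:unconditional-ex-post-rdp-post-processing}) applied in reverse is not available, but it does follow because the mechanism releasing all $\epsilon_i$ post-processes to the one releasing only the sum — so ex-post RDP of the finer mechanism would imply it for the coarser; I actually want the other direction, so the cleanest route is the direct argument with augmented outputs.

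The main obstacle I anticipate is bookkeeping with the base measures and the density of $\epsilon_1 + \epsilon_2$: since $\epsilon$ is a general (possibly continuous, possibly atomic) random variable, the density of the sum involves a convolution-type integral, and one must be careful that the ratio $\mechden[^*]{\xreal}/\mechden[^*]{\xreal'}$ does not factorise pointwise but only after the conditioning argument. The clean way around this is to avoid ever writing the density of the sum explicitly: condition on the full first-stage output $(\yout_1,\epsilon_1)$, use the tower property / iterated expectation, and only invoke the definition of ex-post RDP for each stage separately — never needing the marginal density of $\sum_i \epsilon_i$ at all. This sidesteps the convolution entirely and is why the release-all-$\epsilon_i$ version is actually the more natural object, with the summed version following as a post-processing of it via \Cref{thm:unconditional-ex-post-rdp-post-processing}.
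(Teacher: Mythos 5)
Your proposal is correct and follows essentially the same route as the paper's proof: reduce to the mechanism that releases all intermediate $\epsilon_i$ (the summed version then follows by post-processing immunity, \Cref{thm:unconditional-ex-post-rdp-post-processing}), reduce to $T=2$ by induction, and split the expectation via the tower property so that the factor $e^{(1-\alpha)(\epsilon_1+\epsilon_2)}$ distributes over the two stages, bounding the inner and outer expectations by $1$ using each $\calm_i$'s ex-post RDP guarantee. Your closing observation — that conditioning on the full first-stage output $(\yout_1,\epsilon_1)$ sidesteps ever needing the density of $\sum_i\epsilon_i$ — is exactly how the paper's proof is organised.
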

\begin{proof}[Proof idea]
    The idea of the proof is to use the law of total expectation and the chain rule of probabilities to decompose expectation in \Cref{def:unconditional-rdp} for $\calm^*$, and then use the ex-post RDP guarantees for each $\calm_i$ to bound each part of the decomposition.
\end{proof}

The next theorem shows how an ex-post RDP mechanism can be constructed out of a composition of RDP mechanisms 
with fully adaptive privacy bounds. Note that we have changed the number of mechanisms to $K$. We discuss the reason for this after the theorem.
\begin{restatable}{theorem}{theoremunconditionalexpostrdpodometer} \label{thm:unconditional-ex-post-rdp-odometer}
    Let $\calm_i(\xreal, \epsilon, \yout_{1:i-1})$, $1\leq i \leq K$ be mechanisms that are $(\alpha, \epsilon)$-RDP for any $\epsilon \geq 0$.
    Then the adaptive composition with adaptive privacy bounds $\calm^*(\xreal)$ defined as 
    \begin{align}
        \yout_1 &= \calm_1(\xreal, \epsilon_1, \emptyset) \\
        \epsilon_i &= \cale(y_{1:i-1}, \epsilon_{1:i-1}) \\
        \yout_i &= \calm_i(\xreal, \epsilon_i, \yout_{1:i-1}),
    \end{align}
    $\mechy[^*]{\xreal} = \yout_{1:K}$, and $\mecheps[^*]{\xreal} = \sum_{i=1}^K \epsilon_i$
    is $\alpha$-ex-post RDP. The mechanism that additionally releases the 
    intermediate $\epsilon_{1:K}$ values is also $\alpha$-ex-post RDP.
\end{restatable}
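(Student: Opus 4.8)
The plan is to reduce this to the adaptive composition theorem (\Cref{thm:unconditional-ex-post-rdp-composition}) by packaging each $\calm_i$ together with its adaptively chosen budget $\epsilon_i$ into a single $\alpha$-ex-post RDP mechanism. The first observation is that the budgets $\epsilon_{1:i-1}$ fed to $\cale$ are redundant: $\epsilon_1$ is a fixed constant ($\cale$ evaluated on empty histories), and inductively $\epsilon_i = \cale(y_{1:i-1},\epsilon_{1:i-1})$ is then a deterministic function $\epsilon_i = \cale'(y_{1:i-1})$ of the previously released outputs of interest alone. Hence $\calm^*$ is unchanged if we replace $\cale$ by $\cale'$, and from now on I treat the per-step budget as a deterministic function of $y_{1:i-1}$.

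Next, for each $i$ I would define the mechanism $\widetilde{\calm}_i(\xreal, y_{1:i-1})$ that computes $\epsilon_i = \cale'(y_{1:i-1})$, runs $y_i = \calm_i(\xreal, \epsilon_i, y_{1:i-1})$, and returns the pair $(y_i,\epsilon_i)$, so that $\mechy[_i]{\xreal,y_{1:i-1}} = y_i$ and $\mecheps[_i]{\xreal,y_{1:i-1}} = \epsilon_i$. I claim each $\widetilde{\calm}_i(\cdot, y_{1:i-1})$ is $\alpha$-ex-post RDP for every fixed history $y_{1:i-1}$. Indeed, with $y_{1:i-1}$ fixed the value $\epsilon_i$ is a fixed nonnegative number and the $\epsilon$-coordinate of $\widetilde{\calm}_i$ is the deterministic constant $\epsilon_i$, so it contributes a factor $1$ to the density ratio; the ex-post RDP integrand therefore collapses to
\begin{equation*}
  e^{(1-\alpha)\epsilon_i}\,\E_{y_i \sim \calm_i(\xreal',\epsilon_i,y_{1:i-1})}\!\left[\left(\frac{\mechden{\calm_i(\xreal,\epsilon_i,y_{1:i-1})}(y_i)}{\mechden{\calm_i(\xreal',\epsilon_i,y_{1:i-1})}(y_i)}\right)^{\!\alpha}\right] \le e^{(1-\alpha)\epsilon_i}\,e^{(\alpha-1)\epsilon_i} = 1,
\end{equation*}
where the inequality is exactly the hypothesis that $\calm_i(\cdot,\epsilon_i,y_{1:i-1})$ is $(\alpha,\epsilon_i)$-RDP. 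This step is the ``ex-ante RDP is a special case of ex-post RDP'' fact applied with $\epsilon_i$ held constant.

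Finally, I would apply \Cref{thm:unconditional-ex-post-rdp-composition} to the mechanisms $\widetilde{\calm}_1, \dots, \widetilde{\calm}_K$: its adaptive composition returns $y_{1:K}$ together with $\sum_{i=1}^K \epsilon_i$, with $y_i$ and $\epsilon_i$ generated exactly as in the present statement (after the harmless replacement $\cale \to \cale'$), so that composition \emph{is} the mechanism $\calm^*$ of this theorem; \Cref{thm:unconditional-ex-post-rdp-composition} then gives that $\calm^*$ is $\alpha$-ex-post RDP, and its ``additionally releases the intermediate $\epsilon_i$'' clause yields the second assertion verbatim. The step needing the most care is the density bookkeeping in the second paragraph — checking that appending the deterministic coordinate $\epsilon_i$ to the output leaves the privacy-loss ratio unchanged, so that the ex-post RDP integrand of $\widetilde{\calm}_i$ genuinely reduces to the ex-ante RDP integrand of $\calm_i$; this is the only place where the two coordinates of the density $\mechden{\xreal}(y,\epsilon)$ interact and must be treated explicitly, and the same bookkeeping is what makes the final identification ``that composition equals $\calm^*$'' legitimate.

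As an alternative to the reduction, one can argue directly by telescoping: the left-hand side of \Cref{def:unconditional-rdp} for $\calm^*$ equals $\E_{y_{1:K}\sim\calm^*(\xreal')}\big[\prod_{i=1}^K e^{(1-\alpha)\epsilon_i}(\mechden{\calm_i(\xreal,\epsilon_i,y_{1:i-1})}(y_i)/\mechden{\calm_i(\xreal',\epsilon_i,y_{1:i-1})}(y_i))^{\alpha}\big]$ after the chain rule for conditional densities, and then, conditioning on $y_{1:j-1}$, the first $j-1$ factors and $\epsilon_j$ are $y_{1:j-1}$-measurable and pull out, while the conditional expectation of the $j$-th factor over $y_j$ is $\le 1$ by $(\alpha,\epsilon_j)$-RDP of $\calm_j$; iterating from $j=K$ down to $j=1$ gives the bound. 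The extra $\epsilon_{1:K}$ release is handled identically, since those values are deterministic functions of $y_{1:K}$.
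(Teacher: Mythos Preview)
Your proposal is correct. Your primary route---packaging each step as an ex-post RDP mechanism $\widetilde{\calm}_i$ with deterministic $\epsilon$-coordinate and then invoking \Cref{thm:unconditional-ex-post-rdp-composition}---is a genuinely different argument from the paper's. The paper argues directly: it expands the ex-post RDP expectation for $\calm^*$ via the chain rule, cancels the data-independent factors contributed by $\cale$, and then peels off the innermost conditional expectation one step at a time using the $(\alpha,\epsilon_i)$-RDP bound on $\calm_i$, exactly the telescoping you sketch in your final paragraph as an ``alternative.'' Your reduction is more modular and reuses already-proved machinery, at the cost of the extra bookkeeping step (observing that $\epsilon_i$ is a deterministic function of $y_{1:i-1}$ so that the wrapped mechanisms fit the interface of \Cref{thm:unconditional-ex-post-rdp-composition}); the paper's direct peeling is self-contained and handles a possibly randomised $\cale$ without that rewriting, since the $p_{\cale}$ factors cancel in the ratio regardless. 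Both buy the same conclusion; your approach makes the ``ex-ante RDP is a special case of ex-post RDP'' slogan do the real work, while the paper's makes the iterated-conditioning structure explicit.
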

\begin{proof}[Proof idea]
    The idea of the proof is again using the law of total expectation and the chain rule of probabilities to decompose expectation in \Cref{def:unconditional-rdp} for $\calm^*$. The proof then uses the RDP guarantees of each $\calm_i$ to bound each part of the decomposition.
\end{proof}
While the theorem assumes that there is a fixed number $K$
of mechanisms that are run, in practice one can stop early
based on the released values by setting $\cale(\yout_{1:i-1}, 
\epsilon_{1:i-1}) = 0$ for all iterations after $i$. This makes
$K$ the maximum number of iterations that can be run, but this 
has no practical effect, since one can set $K$ to be an 
arbitrarily large number without any downsides. This is why we used $K$ 
to denote the number of mechanisms in the theorem, in contrast to $T$ 
in \Cref{thm:unconditional-ex-post-rdp-composition}.

The previous theorem does not allow the privacy bounds to depend on private data, except
through the outputs of the base mechanisms. The following theorem allows a private data-dependent stopping
rule, shown in \Cref{algo:ex-post-composition-data-dependent-stopping}. Unlike similar results in previous 
works~\citep{ligettAccuracyFirstSelecting2017,whitehouseBrownianNoiseReduction2022}, 
this theorem uses a separate private dataset for the 
stopping criterion, which, in machine learning applications, allows checking 
whether accuracy crosses a threshold on validation data instead of training data.
\begin{algorithm}
    \caption{Ex-post Composition with Data-Dependent Stopping Criterion}
    \label{algo:ex-post-composition-data-dependent-stopping}
    \KwIn{Dataset $\xreal$ divided into train set $\xreal_{\yout}$ and validation set $\xreal_\epsilon$, 
    $(\alpha, \epsilon_i)$-RDP mechanisms $\calm_i$, $(\alpha, \epsilon_\calt)$-RDP stopping rule $\calt$,
    privacy bound choosing rule $\cale$}
    \KwOut{Number of outputs $t$, sequence of outputs $\yout_{1:t}$, privacy bounds $\epsilon_{1:t}$, total
    privacy bound $\epsilon$}
    \For{$1 \leq i \leq K$}{
        $\epsilon_{i} \gets \cale(\yout_{1:i-1}, \epsilon_{1:i-1})$\;
        $\yout_i \gets \calm_i(\xreal_{\yout}, \epsilon_i, \yout_{1:i-1})$\;
        $c \gets \calt(\xreal_{\epsilon}, \yout_i)$\;
        \If{$c = \mathrm{Halt}$}{
            $t \gets i$\;
            $\epsilon \gets \max\left(\sum_{j=1}^t \epsilon_j, \epsilon_\calt\right)$\;
            \Return{$t, \yout_{1:t}, \epsilon_{1:t}, \epsilon$}\;
        }
    }
\end{algorithm}

\begin{restatable}{theorem}{theoremunconditionalexpostrdpthresholdcheck}
\label{thm:unconditional-ex-post-rdp-threshold-check}
    Let a dataset $\xreal$ be divided into a training set $\xreal_{\yout}$ and validation set 
    $\xreal_{\epsilon}$, that are disjoint, in any data-independent manner that ensures any neighbouring
    datasets $\xreal_1 \sim \xreal_2$ only differ in one part of the split.
    Let $\calm_i(\xreal_{\yout}, \epsilon_i, \yout_{1:i-1})$, $1\leq i \leq K$ be mechanisms that are 
    $(\alpha, \epsilon_i)$-RDP, and let $\calt(\xreal_{\epsilon}, \yout_{1:i-1})$ be a stopping rule that is
    $(\alpha, \epsilon_\calt)$-RDP, $\alpha > 1$. 
    Then the adaptive composition with adaptive data-dependent privacy bounds $\calm^*$
    defined in \Cref{algo:ex-post-composition-data-dependent-stopping} with $\mechy[^*]{\xreal} = (t, \yout_{1:t}, \epsilon_{1:t}), \mecheps[^*]{\xreal} = \epsilon$
    is $\alpha$-ex-post RDP. 
\end{restatable}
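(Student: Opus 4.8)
The plan is to exploit that $X_y$ and $X_\epsilon$ partition $X$, so a neighbouring pair $X \sim X'$ differs in a single record lying in exactly one of the two parts. Since the train/validation split is data-independent, we may assume it is released, which is without loss of generality by post-processing immunity (\Cref{thm:unconditional-ex-post-rdp-post-processing}); conditioned on the split, the ex-post RDP inequality of \Cref{def:unconditional-rdp} then has to be checked separately in two cases: (a) $X_y \sim X_y'$ and $X_\epsilon = X_\epsilon'$, or (b) $X_y = X_y'$ and $X_\epsilon \sim X_\epsilon'$. Two small facts are used throughout. First, enlarging the returned privacy bound preserves $\alpha$-ex-post RDP when that bound is a deterministic function of the rest of the output: replacing it by a pointwise larger value can only decrease the integrand in \Cref{def:unconditional-rdp}, since $e^{(1-\alpha)\epsilon}$ is non-increasing in $\epsilon$ for $\alpha > 1$. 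Second, the hypothesis on $\calt$ is to be read as: the interactive stopping rule that adaptively receives the stream $y_1, y_2, \dots$ and emits decisions until the first $\mathrm{Halt}$ is $(\alpha, \epsilon_\calt)$-RDP as a function of $X_\epsilon$ — the same way an AboveThreshold-type rule answering a query stream is.

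In case (a), since $X_\epsilon = X_\epsilon'$ the stopping rule behaves identically on $X$ and $X'$; fixing its internal seed, each decision $c_i$ is a deterministic function of $y_{1:i}$ (and the common $X_\epsilon$), so releasing the $c_i$ contributes no privacy loss between $X$ and $X'$. For such a neighbouring pair, $\calm^*$ is then, up to post-processing, exactly the kind of composition treated in \Cref{thm:unconditional-ex-post-rdp-odometer}: interleave the $\calm_i$ with zero-budget mechanisms that recompute the $c_i$, and let the bound-selection rule reproduce $\cale$ on the $\calm_i$-rounds while forcing the chosen bound to $0$ on every round after the first $\mathrm{Halt}$ (the remark after that theorem is precisely that this implements early stopping). \Cref{thm:unconditional-ex-post-rdp-odometer} gives $\alpha$-ex-post RDP with returned bound $\sum_{i=1}^t \epsilon_i$; discarding the unused rounds and the $c_i$ is post-processing (\Cref{thm:unconditional-ex-post-rdp-post-processing}), and enlarging the returned bound to $\epsilon = \max(\sum_{i=1}^t \epsilon_i, \epsilon_\calt)$ as in \Cref{algo:ex-post-composition-data-dependent-stopping} is harmless by the first fact above.

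In case (b), since $X_y = X_y'$ the mechanisms $\calm_i$ behave identically on both sides; fixing their seeds, the whole potential sequence $y_{1:K}$ is a deterministic function of $X_y$, hence independent of $X_\epsilon$, and the stopping time $t$ is produced from $(X_\epsilon, y_{1:K})$ by the $(\alpha, \epsilon_\calt)$-RDP stopping rule. Conditioning on $y_{1:K}$, whose law is the same under $X$ and $X'$, the Rényi divergence $D_\alpha$ between the laws of $(y_{1:K}, t)$ under $X$ and under $X'$ is at most $\epsilon_\calt$; and since the whole output $(t, y_{1:t}, \epsilon_{1:t}, \epsilon)$ of \Cref{algo:ex-post-composition-data-dependent-stopping} is a deterministic function of $(y_{1:K}, t)$, the data-processing inequality for $D_\alpha$ preserves this bound. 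Hence $\E_{o \sim \calm^*(X')}[(p_{\calm^*(X)}(o)/p_{\calm^*(X')}(o))^\alpha] \le e^{(\alpha-1)\epsilon_\calt}$, and because $\epsilon = \max(\cdot, \epsilon_\calt) \ge \epsilon_\calt$ in every realization we get $\E_{o \sim \calm^*(X')}[e^{(1-\alpha)\epsilon}(p_{\calm^*(X)}(o)/p_{\calm^*(X')}(o))^\alpha] \le e^{(1-\alpha)\epsilon_\calt} e^{(\alpha-1)\epsilon_\calt} = 1$, which is the inequality of \Cref{def:unconditional-rdp}. Combining the two cases proves the claim.

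The step I expect to be the main obstacle is making case (a) airtight: one must justify fixing $\calt$'s randomness, argue that under $X_\epsilon = X_\epsilon'$ the stopping decisions genuinely fold into the bound-selection rule of \Cref{thm:unconditional-ex-post-rdp-odometer} as zero-cost releases, and carry the bookkeeping so that the truncated composition reports precisely $\sum_{i=1}^t \epsilon_i$. The remaining points — the measure-theoretic treatment of the mixed discrete/continuous outputs and a precise formalization of what it means for the iteratively invoked $\calt$ to be $(\alpha, \epsilon_\calt)$-RDP — are routine but should be spelled out.
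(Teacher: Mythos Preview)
Your proposal is correct and follows the same overall strategy as the paper: both split into the two cases according to whether the differing record lies in $X_y$ or in $X_\epsilon$, observe that in each case one of the two components (the stopping rule $\calt$, respectively the mechanisms $\calm_i$) contributes nothing to the density ratio, and then use that the returned bound $\epsilon = \max(\sum_{i=1}^t \epsilon_i,\, \epsilon_\calt)$ dominates the RDP cost of whichever component remains.

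The execution differs. The paper pads the output sequences to fixed length $K$ with $y_i = \bot$, $\epsilon_i = 0$ for $i > t$, expands the joint density ratio via the chain rule, and in each case literally cancels the factors coming from the matching component before bounding the remaining product by the relevant RDP assumption (in the $X_\epsilon = X'_\epsilon$ case this reduces to the computation after \eqref{eq:unconditional-ex-post-rdp-odometer-proof-1} in the proof of \Cref{thm:unconditional-ex-post-rdp-odometer}). You instead reduce case~(a) to \Cref{thm:unconditional-ex-post-rdp-odometer} plus post-processing and bound enlargement, and case~(b) to the data-processing inequality for $D_\alpha$. Your route is more modular and makes clear why the $\max$ is the right returned bound, but it relies on the ``fix the other component's seed and treat the matching half of the data as a constant auxiliary parameter'' device, which is morally the same as the paper's cancellation step and needs the conditioning-then-averaging argument you allude to in order to be airtight. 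The paper's direct density computation sidesteps that bookkeeping at the cost of being less reusable.
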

\begin{proof}[Proof idea]
    The idea of the proof is to first decompose the expectation in \Cref{def:unconditional-rdp} for $\calm^*$ into two parts: one for the stopping rule $\calt$ and one for the composition of the mechanisms $\calm_i$.
    Since their input datasets are disjoint, one of the parts disappears completely, depending on which part the datapoint that changes between $\xreal$ and $\xreal'$ is. If the part for $\calt$ remains, it is bounded by its RDP bound. If the part for the composition remains, it is bounded in the same way as in \Cref{thm:unconditional-ex-post-rdp-odometer}.
\end{proof}

\section{Ex-Post RDP for the Brownian Mechanism}\label{sec:ex-post-rdp-of-brownian-mech}

The Brownian mechanism~\citep{whitehouseBrownianNoiseReduction2022} is the accuracy-first analogue of the Gaussian
mechanism~\citep{dworkOurDataOurselves2006}. 
The idea of the mechanism is to add Gaussian noise to each release of a function 
$f\colon \xset \to \R^d$ like the Gaussian mechanism, and reduce the variance of the noise if 
the privacy budget is increased for the next release. The added noises at each release are 
correlated in such a way that the latest release contains all of the information
on the private data $\xreal$ that the previous releases contain, so the privacy costs of 
multiple releases do not accumulate. The appropriate correlations are obtained from a stochastic process called \emph{Brownian motion}. We defer the mathematical definition to 
\Cref{def:brownian-motion} in the Appendix.

We analyse the Brownian mechanism through an alternative perspective that does not use Brownian motion directly. 
Consider a mechanism that, given an
initial RDP bound $(\alpha, \epsilon_1)$, releases a sensitive value $f(\xreal)$ with the Gaussian mechanism
$\hat{s}_1 \sim f(\xreal) + \caln(0, \sigma_1^2)$, where $\sigma_1^2 = \frac{\alpha\Delta^2}{2\epsilon_1}$ is selected to 
satisfy $(\alpha, \epsilon_1)$-RDP. For subsequent releases $2 \leq i \leq K$, the mechanism receives an RDP
budget $\epsilon_i = \cale(\hat{s}_{1:i-1}, \epsilon_{1:i-1}) > \epsilon_{i-1}$ with constant $\alpha$,
sets $\sigma_i^2 = \frac{\alpha\Delta}{2(\epsilon_i - \epsilon_{i-1})}$ according to the extra RDP budget that 
became available
with the increase\footnote{We can handle $\epsilon_i = \epsilon_{i-1}$ by setting $\sigma_i^2 = \infty$ and then using a convention 
that $\frac{1}{\infty} = 0$ and that a Gaussian distribution with infinite variance has some arbitrary value, which does not matter since it
will later always be multiplied by 0.} 
to $\epsilon_i$. The mechanism then releases $\tilde{s}_i = f(X) + \caln(0, \sigma_i^2)$ with the Gaussian mechanism.
Finally, the mechanism releases precision-weighted values $\hat{s}_{1:K}$, with $\hat{s}_1 = \tilde{s}_1$ and 
\begin{equation}
    \hat{s}_i = \left(\sum_{j=1}^i \frac{1}{\sigma_j^2}\right)^{-1}\left(\sum_{j=1}^i \frac{1}{\sigma_j^2} \tilde{s}_j\right)
    \label{eq:precision-weighting}
\end{equation}
for $2 \leq i \leq K$
as the final output. This weighting is the minimum variance unbiased linear weighting for estimating $f(\xreal)$ given the 
noisy observations $\tilde{s}_i$ with known variances $\sigma_i^2$~\citep[Theorem 4.1]{covingtonUnbiasedStatisticalEstimation2025}.
We call this mechanism the sequential precision-weighted Gaussian mechanism, and summarise it in pseudocode in 
Algorithm~\ref{algo:precision-weighted-mech} in the Appendix. 

Next, we show that the sequential precision-weighted Gaussian mechanism is ex-post RDP using 
the theory from
\Cref{sec:ex-post-rdp}, and that the mechanism is equivalent to the Brownian mechanism.
We defer the proofs to \Cref{sec:ex-post-brownia-mech-proofs}.

\begin{restatable}{lemma}{lemmaprecisionweightedmechunconditionalexpostrdp} \label{lemma:precision-weighted-mech-unconditional-ex-post-rdp}
    The sequential precision-weighted Gaussian mechanism (\Cref{algo:precision-weighted-mech})
    is $\alpha$-ex-post RDP.
\end{restatable}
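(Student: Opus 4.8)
The plan is to show that the sequential precision-weighted Gaussian mechanism fits exactly into the framework of \Cref{thm:unconditional-ex-post-rdp-odometer}, so that ex-post RDP follows immediately. The key observation is that the precision-weighting step of \Cref{eq:precision-weighting} is a deterministic post-processing of the raw noisy releases $\tilde{s}_{1:K}$, so by \Cref{thm:unconditional-ex-post-rdp-post-processing} it suffices to establish $\alpha$-ex-post RDP for the mechanism that releases the tuple $(\tilde{s}_{1:K}, \epsilon_{1:K})$ together with the total bound $\sum_{i=1}^K \epsilon_i$ --- wait, here I must be careful: the total returned bound should be $\epsilon_K$, not $\sum_i \epsilon_i$, because the noise variances are calibrated to the \emph{increments} $\epsilon_i - \epsilon_{i-1}$ and the whole point of the Brownian/precision-weighting construction is that costs telescope rather than accumulate. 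So the right decomposition is to view release $i$ as spending incremental budget $\epsilon_i - \epsilon_{i-1}$, and to invoke \Cref{thm:unconditional-ex-post-rdp-odometer} with the per-step mechanisms being Gaussian mechanisms of RDP cost $\epsilon_i - \epsilon_{i-1}$ and the cumulative bound telescoping to $\epsilon_K$.

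Concretely, the steps are: (i) Recall that the Gaussian mechanism releasing $f(\xreal) + \caln(0, \sigma^2)$ with $\sigma^2 = \frac{\alpha \Delta^2}{2\epsilon}$ is $(\alpha, \epsilon)$-RDP, and note this holds for \emph{any} $\epsilon > 0$ with the corresponding $\sigma^2$; hence defining $\calm_i(\xreal, \eta_i, \cdot) = f(\xreal) + \caln(0, \frac{\alpha\Delta^2}{2\eta_i})$ gives a family that is $(\alpha, \eta_i)$-RDP for any $\eta_i \geq 0$ (with the $\eta_i = 0$ convention of the footnote), matching the hypothesis of \Cref{thm:unconditional-ex-post-rdp-odometer}. (ii) Identify the incremental budgets $\eta_i := \epsilon_i - \epsilon_{i-1}$ (with $\epsilon_0 = 0$) as the adaptively chosen privacy bounds $\epsilon_i$ in the statement of \Cref{thm:unconditional-ex-post-rdp-odometer}; these are chosen by a rule $\cale$ that depends only on the previously released noisy values and previously chosen budgets, which is exactly what the odometer theorem permits. (iii) Apply \Cref{thm:unconditional-ex-post-rdp-odometer} to conclude that the mechanism releasing $\tilde{s}_{1:K}$ together with $\sum_i \eta_i = \epsilon_K$ is $\alpha$-ex-post RDP, and that it remains so if the intermediate $\eta_i$ (equivalently the $\epsilon_i$) are also released. (iv) Finally, apply \Cref{thm:unconditional-ex-post-rdp-post-processing} with the deterministic map that computes the precision-weighted averages $\hat{s}_{1:K}$ from $\tilde{s}_{1:K}$ (and passes through $\epsilon_{1:K}$), noting that post-processing does not change the returned bound $\epsilon_K$; this yields $\alpha$-ex-post RDP for the mechanism as stated in \Cref{algo:precision-weighted-mech}.

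The main obstacle is bookkeeping around the returned privacy bound: one must verify that the total bound the algorithm reports is indeed $\epsilon_K$ (the last, largest bound) rather than the sum of increments only in the telescoped sense, and that the footnote convention for $\epsilon_i = \epsilon_{i-1}$ (infinite variance, value multiplied by zero) is consistent with the $\eta_i = 0$ case of the base mechanism being $(\alpha, 0)$-RDP --- i.e.\ a release that is independent of the data. A secondary subtlety is checking that the precision-weighting map is genuinely a valid post-processing in the sense of \Cref{def:ex-post-post-processing}: it acts on $\mechy{\xreal} = (\tilde{s}_{1:K}, \epsilon_{1:K})$ and is data-independent given that input, which it is since the weights depend only on the $\sigma_j^2$, which are functions of the released $\epsilon_j$. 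Once these identifications are pinned down, the proof is a direct composition of two already-established theorems with no further computation required.
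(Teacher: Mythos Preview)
Your proposal is correct and follows the same route as the paper: view each raw release $\tilde{s}_i$ as a Gaussian mechanism with incremental budget $\eta_i = \epsilon_i - \epsilon_{i-1}$, apply \Cref{thm:unconditional-ex-post-rdp-odometer} so the total telescopes to $\epsilon_K$, and then recover $\hat{s}_{1:K}$ by \Cref{thm:unconditional-ex-post-rdp-post-processing}. The one point the paper makes explicit that you leave implicit in step~(ii) is that the selector $\cale$ in \Cref{algo:precision-weighted-mech} is fed $\hat{s}_{1:i-1}$ rather than the raw $\tilde{s}_{1:i-1}$ required by \Cref{thm:unconditional-ex-post-rdp-odometer}; the paper handles this by wrapping $\cale$ in an equivalent selector $\cale'(\tilde{s}_{1:i-1},\epsilon_{1:i-1})$ that first recomputes $\hat{s}_{1:i-1}$, which is precisely the kind of bookkeeping you anticipate in your final paragraph.
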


The privacy analysis transfers to the Brownian mechanism, since the two mechanisms are 
equivalent.
\begin{restatable}{theorem}{theoremprecisionweightedisbrownianmech}\label{thm:precision-weighted-is-brownian-mech}
    For any dataset $\xreal$, function $f$ with $\ell_2$-sensitivity $\Delta$, RDP order $\alpha$, $\epsilon$ selector 
    $\cale$, 
    the output of  Algorithm~\ref{algo:precision-weighted-mech} has the same distribution as the 
    Brownian mechanism in Algorithm~\ref{algo:brownian-mech}. 
\end{restatable}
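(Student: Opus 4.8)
The plan is to show that the two mechanisms produce the same joint distribution over their full sequence of releases by identifying the right correspondence between the noise variables. The Brownian mechanism (\Cref{algo:brownian-mech}) works by running a Brownian motion $B_t$ (appropriately scaled by $\Delta$) and releasing $f(\xreal) + B_{t_i}$ at a decreasing sequence of ``times'' $t_i$ that correspond to the noise variances; the key structural fact about Brownian motion is that it has independent Gaussian increments. The sequential precision-weighted Gaussian mechanism instead draws \emph{independent} Gaussian noises with variances $\sigma_i^2 = \frac{\alpha\Delta^2}{2(\epsilon_i - \epsilon_{i-1})}$ (and $\sigma_1^2 = \frac{\alpha\Delta^2}{2\epsilon_1}$) and then forms the precision-weighted combinations $\hat s_i$ in \eqref{eq:precision-weighting}. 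The bridge between the two is the classical identity that the minimum-variance unbiased linear combination of $f(\xreal)$ from independent observations with precisions $1/\sigma_1^2,\dots,1/\sigma_i^2$ is exactly the value at ``time'' $1/\sum_{j\le i} \sigma_j^{-2}$ of a Brownian motion whose increments are the partial contributions — i.e. precision-weighting independent Gaussians reconstructs a Brownian-motion path observed at the cumulative-precision times.

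Concretely, I would proceed as follows. First, fix $\xreal$, $f$, $\alpha$, $\cale$ and note both algorithms choose the same sequence $\epsilon_1, \epsilon_2, \dots$ because the selector $\cale$ is fed the same released values $\hat s_{1:i-1}$ (so it suffices to show the $\hat s_i$ have the same law, and then the $\epsilon_i$ match by induction). Second, set $\tau_i = \left(\sum_{j=1}^i \sigma_j^{-2}\right)^{-1}$; a short computation shows $\tau_i$ equals the Brownian mechanism's $i$-th time parameter (this is where the specific formulas $\sigma_i^2 = \frac{\alpha\Delta^2}{2(\epsilon_i-\epsilon_{i-1})}$ make the telescoping work, giving $1/\tau_i = \frac{2\epsilon_i}{\alpha\Delta^2}$). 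Third, define $Z_i = \hat s_i - f(\xreal)$ from \eqref{eq:precision-weighting} and verify two things: (a) each $Z_i \sim \caln(0, \tau_i)$, and (b) the increments $Z_i - Z_{i-1}$ are independent across $i$ and $Z_i - Z_{i-1} \sim \caln(0, \tau_i - \tau_{i-1})$ — equivalently, $\{Z_i\}$ is a Gaussian process with $\Cov(Z_i, Z_j) = \tau_{\min(i,j)}$, which is precisely the covariance structure of Brownian motion sampled at the times $\tau_i$. Since a Gaussian process is determined by its mean and covariance, this matches the Brownian mechanism's output law exactly, and the induction on the $\epsilon_i$ closes. I would also handle the degenerate $\epsilon_i = \epsilon_{i-1}$ case via the paper's stated convention ($\sigma_i^2 = \infty$, contributing $0$ precision), under which $\tau_i = \tau_{i-1}$ and the step is a no-op, consistent with the Brownian mechanism not advancing its time.

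The main obstacle is step three, part (b): verifying that the precision-weighted estimators $\hat s_i$ form a process with independent increments. A naive computation of $\Cov(\hat s_i, \hat s_j)$ directly from \eqref{eq:precision-weighting} is messy because each $\hat s_i$ mixes all the $\tilde s_1, \dots, \tilde s_i$ with different weights. The clean route is to write $\hat s_i$ recursively — $\hat s_i = (1 - w_i)\hat s_{i-1} + w_i \tilde s_i$ with $w_i = \tau_i/\sigma_i^2$ — and observe that $\hat s_{i-1}$ depends only on $\tilde s_{1:i-1}$, which is independent of the fresh noise in $\tilde s_i$; from this the increment $\hat s_i - \hat s_{i-1} = w_i(\tilde s_i - \hat s_{i-1})$ is readily shown to have the claimed variance and, crucially, to be uncorrelated with (hence, being jointly Gaussian, independent of) $\hat s_{i-1}$ and therefore of all earlier increments. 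Alternatively one can cite \citet[Theorem 4.1]{covingtonUnbiasedStatisticalEstimation2025} for the optimality/form of the weights and then just check the covariance. Everything else is bookkeeping: matching the time parametrisation and invoking that a Gaussian process is pinned down by its first two moments.
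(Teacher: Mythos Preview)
Your overall strategy---match the cumulative-precision times $\tau_i$ with the Brownian mechanism's $T_i$, then show the processes agree via the recursion $\hat s_i=(1-w_i)\hat s_{i-1}+w_i\tilde s_i$---is sound and is essentially what the paper does. But step (b) contains a real error stemming from the fact that the times $T_i=\tau_i$ are \emph{decreasing} in $i$ (since $\epsilon_i$ increases). For Brownian motion sampled at decreasing times, $\Cov(B_{T_i},B_{T_j})=\min(T_i,T_j)=\tau_{\max(i,j)}$, not $\tau_{\min(i,j)}$, and the increment variance is $\tau_{i-1}-\tau_i$, not the negative quantity $\tau_i-\tau_{i-1}$. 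More importantly, your uncorrelation claim is backwards: from your own recursion, $\Cov(\hat s_i-\hat s_{i-1},\hat s_{i-1})=w_i\bigl(\Cov(\tilde s_i,\hat s_{i-1})-\Var(\hat s_{i-1})\bigr)=-w_i\tau_{i-1}\neq 0$. The correct statement, mirroring Brownian motion at decreasing times, is that $\hat s_i$ is uncorrelated with the \emph{backward} increment $\hat s_{i-1}-\hat s_i$; your ``clean route'' as written does not go through.

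There is a second gap. Because $\cale$ is adaptive, the times $\tau_i$ for $i\geq 2$ are random functions of $\hat s_{1:i-1}$, so the joint law of $(Z_1,\dots,Z_K)$ is not a fixed multivariate Gaussian and cannot be identified by an unconditional covariance computation. Your ``by induction'' remark points the right way, but step three is stated unconditionally. The paper handles both issues at once by computing, for each $i$, the conditional law $\hat s_i\mid \hat s_{1:i-1},\xreal$ directly from the recursion (with $\tau_i=T_i$ now determined by the conditioned-on past) and verifying it equals the Brownian-bridge conditional $\caln\bigl(f(\xreal)+\tfrac{T_i}{T_{i-1}}(\hat s_{i-1}-f(\xreal)),\, T_i(T_{i-1}-T_i)/T_{i-1}\bigr)$. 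This is the cleanest fix for your argument: drop the unconditional covariance framing and instead match conditional distributions step by step, which also makes the decreasing-time structure explicit.
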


\begin{corollary}\label{lemma:brownian-mech-ex-post-rdp}
    The Brownian mechanism (Algorithm~\ref{algo:brownian-mech} in the Appendix) is $\alpha$-ex-post RDP.
\end{corollary}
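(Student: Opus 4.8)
\textbf{Proof proposal for \Cref{lemma:brownian-mech-ex-post-rdp}.}

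The plan is to chain the two preceding results. By \Cref{lemma:precision-weighted-mech-unconditional-ex-post-rdp}, the sequential precision-weighted Gaussian mechanism of \Cref{algo:precision-weighted-mech} is $\alpha$-ex-post RDP. By \Cref{thm:precision-weighted-is-brownian-mech}, for every dataset $\xreal$ the output distribution of that mechanism coincides with the output distribution of the Brownian mechanism in \Cref{algo:brownian-mech} (under the same $f$, $\Delta$, $\alpha$, and $\cale$). Since $\alpha$-ex-post RDP (\Cref{def:unconditional-rdp}) is a statement purely about the joint output distributions $\calm(\xreal)$ and $\calm(\xreal')$ on neighbouring inputs — it only involves the densities $\mechden{\xreal}$, $\mechden{\xreal'}$ and the expectation under $\calm(\xreal')$ — it is invariant under replacing a mechanism by any other mechanism with the same per-input output distributions. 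Therefore the Brownian mechanism inherits $\alpha$-ex-post RDP.

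Concretely, I would first fix neighbouring $\xreal \sim \xreal'$ and write $\calm_{\mathrm{PW}}$ for the precision-weighted mechanism and $\calm_{\BM}$ for the Brownian mechanism. \Cref{thm:precision-weighted-is-brownian-mech} gives $\calm_{\mathrm{PW}}(\xreal) \stackrel{d}{=} \calm_{\BM}(\xreal)$ and $\calm_{\mathrm{PW}}(\xreal') \stackrel{d}{=} \calm_{\BM}(\xreal')$, hence the density ratios satisfy $\frac{p_{\calm_{\BM}(\xreal)}(\yout,\epsilon)}{p_{\calm_{\BM}(\xreal')}(\yout,\epsilon)} = \frac{p_{\calm_{\mathrm{PW}}(\xreal)}(\yout,\epsilon)}{p_{\calm_{\mathrm{PW}}(\xreal')}(\yout,\epsilon)}$ for almost every $(\yout,\epsilon)$ (with respect to the common base measure). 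Substituting into the defining inequality of \Cref{def:unconditional-rdp} and using that the expectation is taken over $\calm_{\BM}(\xreal') \stackrel{d}{=} \calm_{\mathrm{PW}}(\xreal')$, the left-hand side for $\calm_{\BM}$ equals the left-hand side for $\calm_{\mathrm{PW}}$, which is $\leq 1$ by \Cref{lemma:precision-weighted-mech-unconditional-ex-post-rdp}. Since $\xreal \sim \xreal'$ was arbitrary, $\calm_{\BM}$ is $\alpha$-ex-post RDP.

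There is essentially no obstacle here: the corollary is a one-line consequence of the two cited results, and the only minor point to be careful about is that "same distribution" must be interpreted as equality of the joint law of the pair $(\mechy{\xreal}, \mecheps{\xreal})$, including the returned privacy bound, so that the density ratio used in \Cref{def:unconditional-rdp} — which is a function of the full output $(\yout,\epsilon)$ — is genuinely preserved. This is exactly what \Cref{thm:precision-weighted-is-brownian-mech} asserts, so the argument goes through verbatim.
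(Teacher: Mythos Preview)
Your proposal is correct and follows exactly the paper's approach: the paper's proof is the one-liner ``Follows directly from Lemma~\ref{lemma:precision-weighted-mech-unconditional-ex-post-rdp} and Theorem~\ref{thm:precision-weighted-is-brownian-mech},'' and you have simply spelled out the (entirely standard) reason why equality of per-input output distributions transfers the $\alpha$-ex-post RDP inequality verbatim.
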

\begin{proof}
    Follows directly from Lemma~\ref{lemma:precision-weighted-mech-unconditional-ex-post-rdp} and 
    Theorem~\ref{thm:precision-weighted-is-brownian-mech}.
\end{proof}
\Cref{lemma:brownian-mech-ex-post-rdp} immediately implies that the Brownian
mechanism is also approximately ex-post private 
(\Cref{def:approximate-ex-post-privacy-orig}) due to a conversion
formula from \citet[Lemma 7]{ghaziPrivateHyperparameterTuning2025}.
In \Cref{fig:brownian-mechanism-privacy-comparison}, we compare this bound to the 
probabilistic ex-post privacy bounds of \citet{whitehouseBrownianNoiseReduction2022}.
Note that this comparison assumes that probabilistic ex-post privacy implies 
approximate ex-post privacy, which we conjecture to hold since it holds for the ex-ante
definitions, but has not been proven rigorously. The comparison shows that all bounds 
are very similar when their 
free parameters are optimised, so the post-processing immunity and additional flexibility 
of our RDP-based bound does not come at the cost of utility. Our RDP bound and the 
linear bound of \citet{whitehouseBrownianNoiseReduction2022} are equal within
floating-point error, so we conjecture they are mathematically equal.

\begin{figure*}
    \centering
    \includegraphics{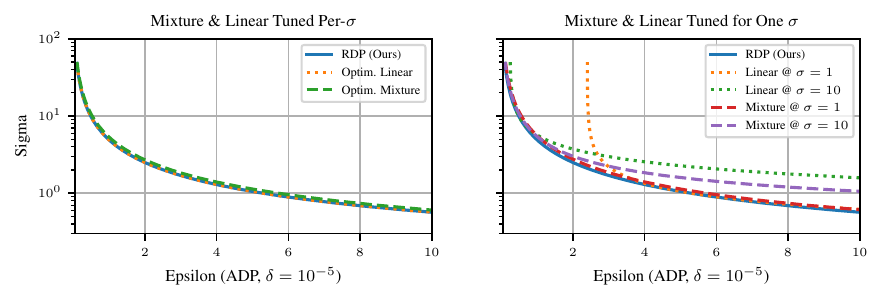}
    \caption{
        Comparison of our ex-post RDP-based bound and the bounds of 
        \citet{whitehouseBrownianNoiseReduction2022} (mixture and linear) for the 
        Brownian mechanism
        in approximate-ex-post privacy (\Cref{def:approximate-ex-post-privacy-orig}), 
        showing all bounds are very similar. This is assuming probabilistic ex-post privacy
        implies approximate ex-post privacy, which we conjecture is true since it is 
        true in ex-ante DP. In the left panel, we optimise all free parameters of each 
        bound individually for each value of $\sigma$. In the right panel, we use 
        the recommended procedure of \citet{whitehouseBrownianNoiseReduction2022} for 
        their bounds, meaning we optimise them for a specific $\sigma$, either 1 or 10.
        This makes these bounds suboptimal for other values of $\sigma$.
    }
    \label{fig:brownian-mechanism-privacy-comparison}

    \vspace{-3mm}
\end{figure*}

\begin{figure*}[t]
    \centering
    \includegraphics{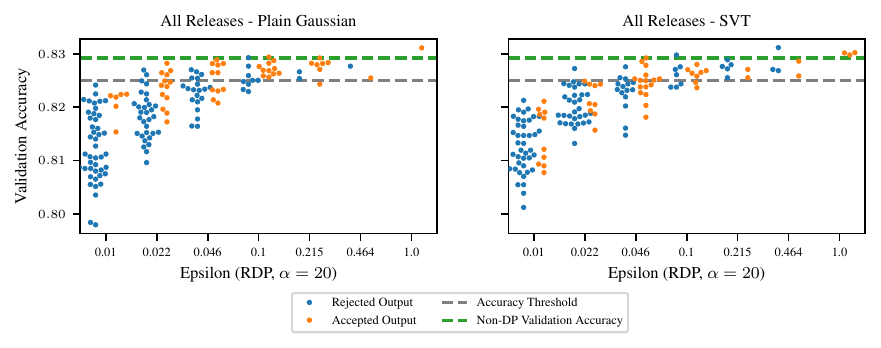}
    
    \vspace{-3mm}
    \caption{
        Accuracies and $\epsilon$ values for synthetic data generation on Adult data, showing that the mechanism is 
        able to minimise the privacy cost while ensuring high accuracy. The plain Gaussian 
        (left) mechanism generally accepts higher accuracies than SVT (right) 
        The plot shows results from 50 repeats. The x-axis only shows $\epsilon$ for the 
        synthetic data generation and accuracy threshold checking. The values are logarithmically spaced. 
        For query selection, $\epsilon_{\mathrm{query}} = 0.01$,
        and $\epsilon_{\mathrm{total}} = \epsilon_{\mathrm{query}} + \epsilon$.
        Non-DP validation accuracy is computed from synthetic data without DP noise.
        The plotted value is mean accuracy over 10 repeats of synthetic data generation.
    }
    \label{fig:adult-all-outputs}
    
    \vspace{-4mm}
\end{figure*}

\section{Synthetic Data Generation Experiment}\label{sec:experiments}

To evaluate how the Brownian mechanism works in practice, we 
generate synthetic tabular data from summary statistics~\citep{mckennaGraphicalmodelBasedEstimation2019} 
released by the mechanism. We evaluate the quality of synthetic
data with the accuracy of a gradient-boosting classifier, and 
check whether the accuracy meets a minimum threshold on private
validation data that is disjoint from training data, applying \Cref{thm:unconditional-ex-post-rdp-threshold-check}. 
We use the UCI Adult dataset~\citep{kohaviAdult1996}.
See \Cref{sec:experiment-extra} for details. Code is available at \url{https://github.com/TrustworthyMLHelsinki/accuracy-first-rdp}.

We choose $\epsilon$ values for each release of the Brownian mechanism
from a fixed list, containing 7 values, logarithmically 
spaced between 0.01 and 1. We start by releasing with the lowest
$\epsilon$, and if the generated synthetic data does not meet 
the accuracy threshold, we release with the next $\epsilon$ in the
list. This is repeated until the threshold is met, or the list
is exhausted.
We use the add/remove neighbourhood, where datasets are neighbouring if they are identical except one has an additional datapoint.

We use 20\% of the full dataset as test data, and split the rest evenly between training and validation data. Since the validation data we use for the accuracy threshold check
is also private, we consider two options to implement the check
under DP: a composition of plain Gaussian mechanisms that release noisy accuracies under DP, and 
sparse vector technique (SVT; \citealp{dworkComplexityDifferentiallyPrivate2009}, \citealp{zhuImprovingSparseVector2020}).
Note that both of these checks are noisy, meaning that they 
do not always return the correct result of the check.
We investigate how changing the size of the validation set affects the results in \Cref{app:synthetic-data-varying-validation-size}, and find that reducing the size from the 40\% can improve test accuracy at the cost of a higher final $\epsilon$.

For privacy bounds, we use $\alpha = 20$ throughout. We discuss the 
effect of changing $\alpha$ in \Cref{app:alternative-privacy-bounds}. We use
an RDP preprocessing step~\citep{mckennaWinningNISTContest2021} 
to select the summary statistic queries
with $\epsilon_{\mathrm{query}} = 0.01$, and set
$\epsilon_{\mathrm{check}} = 0.01$ for checking the 
accuracy threshold. The $\epsilon$ for synthetic data 
generation comes from the Brownian mechanism (\Cref{lemma:brownian-mech-ex-post-rdp}). By 
\Cref{thm:unconditional-ex-post-rdp-threshold-check,thm:unconditional-ex-post-rdp-composition}
and the fact that $\epsilon \geq \epsilon_{\mathrm{check}}$,
the total privacy bound is 
$\epsilon_{\mathrm{total}} = \epsilon_{\mathrm{query}} + \max(\epsilon, \epsilon_{\mathrm{check}}) = \epsilon_{\mathrm{query}} + \epsilon$.
As we focus on the Brownian mechanism, we only include 
$\epsilon$, not $\epsilon_{\mathrm{total}}$, in our plots.
\Cref{tab:epsilon_conversion} in the Appendix converts the $\epsilon$
values we use to approximate ex-post privacy 
(\Cref{def:approximate-ex-post-privacy-orig}).

\Cref{fig:adult-all-outputs} shows the validation accuracies and $\epsilon$
values from 50 repeated runs of the Brownian
mechanism and synthetic data generation, with a fixed set of queries. 
\Cref{fig:adult-extra-results} in \Cref{sec:experiment-extra} shows
distributions of $\epsilon$ values, and validation and test accuracies
for the accepted outputs of the mechanism.
We see that the Brownian mechanism is able to minimise the 
privacy cost while approximately meeting the accuracy threshold.
The plain Gaussian threshold check generally gives higher accuracies,
but SVT gives smaller $\epsilon$ values.

\section{Discussion and Conclusion}\label{sec:discussion}

\paragraph{Interpreting Ex-Post Privacy Bounds}
Since allowing a data-dependent privacy bound in ex-post privacy is a major change from
ex-ante privacy, it is important to discuss how ex-post privacy bounds should be interpreted.
This applies equally to all ex-post privacy definitions, including ones not using our 
notation from \Cref{sec:ex-post-privacy-notation}.

For one run of an ex-post mechanism, the returned privacy bound should be reported, 
along with
the details of how it is computed. The value can then be interpreted in a similar 
way as the corresponding ex-ante privacy bound, see \Cref{sec:ex-post-rdp-interpretation}
for ex-post RDP and \citet{ligettAccuracyFirstSelecting2017} for pure ex-post privacy. 
Values from different runs can easily be compared: the smaller privacy bound is more private.

The more difficult case is comparing mechanisms, in which case the privacy bound is a random
variable. A major difficulty is that this random variable depends on the private data,
so comparisons using the private data may leak privacy. To circumvent this, it may be 
possible to use a prior for the private data, and compare the distributions of the privacy
bounds under that prior. The distributions can be compared for example by examining their
CDFs, though it will not be possible to always establish that one mechanism is more private 
than the other. We leave an in-depth investigation of this to future work.

\textbf{Limitations \,\,\,}
Ex-post RDP inherits many of the limitations of RDP. First, RDP is not as easy to interpret
as $f$-DP or Gaussian DP (GDP; \citealp{dongGaussianDifferentialPrivacy2022}), which are
defined in terms of optimal performance of a distinguisher. RDP has a Bayesian 
interpretation~\citep{mironovRenyiDifferentialPrivacy2017}, which we extend to ex-post
RDP in \Cref{sec:ex-post-rdp-interpretation}, but these are not as clear as $f$-DP or 
GDP. Second, the conversion from RDP to approximate DP or $f$-DP is lossy in the ex-ante
setting~\citep{zhuOptimalAccountingDifferential2022}, which we expect to also hold in the ex-post setting. For both reasons, we expect
developing ex-post GDP or $f$-DP, and determining further properties of approximate 
ex-post privacy, to be a promising directions for future research.

\textbf{Conclusion \,\,\,}
In this work, we determined
which ex-post privacy definitions have post-processing immunity and which do not. 
As the one definition with
post-processing immunity does not support the Brownian mechanism, we filled this gap by proposing
a new ex-post privacy definition with post-processing immunity, the Brownian mechanism, and
supporting theory that enables application to practical problems. We demonstrated two
such applications: tabular synthetic data generation and image classifier fine-tuning (\Cref{sec:accuracy-first-optimisation}). In both applications, our mechanism successfully met an accuracy threshold while minimising the spent privacy budget.

\section*{Acknowledgements}
This work was supported by the Research Council of Finland (Flagship programme: Finnish Center for Artificial Intelligence, FCAI, Grant 356499 and Grant 359111), the Strategic Research Council at the Research Council of Finland (Grant 358247) as well as the European Union (Project 101070617). Views and opinions expressed are however those of the author(s) only and do not necessarily reflect those of the European Union or the European Commission. Neither the European Union nor the granting authority can be held responsible for them. The authors acknowledge the research environment
provided by ELLIS Institute Finland.

\section*{Impact Statement}
This work presents theory and practical techniques that aim to improve the privacy of data analyses. While our work, and accuracy-first DP in general,
study principled ways to reduce a privacy guarantee, this does not imply that results in this line of work would be used to reduce privacy guarantees in existing analyses. Rather, we think this line of work enables privacy-preserving analyses that were not possible before, and overall has positive societal impact.

\bibliography{Accuracy_First_RDP.bib}
\bibliographystyle{iclr2026_conference}

\newpage

%%%%%%%%%%%%%%%%%%%%%%%%%%%%%%%%%%%%%%%%%%%%%%%%%%%%%%%%%%%%

\appendix
\onecolumn

\tableofcontents
\newpage

\addcontentsline{toc}{section}{Appendix}

\subsubsection*{Use of Large Language Models}
Github copilot was used to assist writing the code for the experiments.

\begin{table}
    \centering
    \caption{Common notations.}
    \vspace{3mm}
    \begin{tabular}{l l}
        $\calm$ & DP Mechanism \\
        $\gamma$ & Generic privacy parameter(s) \\
        $(\epsilon, \delta)$ & Privacy parameters for ADP-style definitions \\
        $(\epsilon, \alpha)$ & Privacy parameters for RDP-style definitions \\
        $\xreal \in \calx$ & Dataset, universe of datasets \\
        $\xreal \sim \xreal'$ & Neighbouring datasets \\
        $\yout \in \yset$ & Output of mechanism, universe of possible outputs \\
        $\yout_{i:j}$ & Sequence from index $i$ to $j$, empty if $j < i$ \\
        $\mechy{X}, \mecheps{X}$ & Output and claimed privacy bound of ex-post mechanism $\calm$ \\
        $P, Q$ & Generic random variables \\
        $p_P, p_Q$ & Density functions of $P, Q$, with respect to some base measure \\
        $\privlossfn(\yout) = \ln \frac{p_P(\yout)}{p_Q(\yout)}$ & Privacy loss function (PLF) of $P, Q$ \\
        $\plrvb = \privlossfn(\yout), \yout \sim P$ & Privacy loss random variable (PLRV) of $P, Q$ \\
        $\plf{\calm}{\xreal}{\xreal'}(\yout)$ & PLF of ex-ante mechanism $\calm$ with regards to $\xreal, \xreal'$ \\
        $\plf{\calm}{\xreal}{\xreal'}(\yout, \epsilon)$ & PLF of ex-post mechanism $\calm$ with regards to $\xreal, \xreal'$ \\
        $\plrv{\calm}{X}{X'}$ & PLRV of mechanism $\calm$ with regards to $\xreal, \xreal'$ \\
    \end{tabular}
    \label{tab:notations}
\end{table}

\section{Privacy Odometers and Ex-Post Privacy}\label{sec:odometers-ex-post-privacy}

A closely related line of work to ex-post privacy is the so-called fully adaptive composition that
studies \emph{privacy filters} and 
\emph{odometers}~\citep{rogersPrivacyOdometersFilters2016,feldmanIndividualPrivacyAccounting2021,lecuyerPracticalPrivacyFilters2021,koskelaIndividualPrivacyAccounting2023,whitehouseFullyAdaptiveCompositionDifferential2023,haneyConcurrentCompositionInteractive2023}.
While ex-ante DP composition theorems require specifying the privacy bounds at each step in 
advance, fully adaptive composition considers the case where the privacy bounds can also be 
chosen adaptively, based on results from previous iterations. As filters and odometers have been
developed for many DP definitions, we use $\gamma \in \Gamma$ to denote a generic privacy bound
when discussing filters and odometers in general.

Privacy filters require specifying an upper bound $\gamma_c$ on the privacy loss. The filter 
keeps track of the 
accrued privacy loss during the iterations, and halts the computation before the accrued privacy
loss exceeds the upper bound. The filter guarantees ex-ante DP with the bound $\gamma_c$.

Privacy odometers do not require specifying an upper bound beforehand. Instead, an odometer keeps track of
the accumulated privacy loss, and returns the value of the accumulator at the end of the 
computation. Interpreting the value returned by the odometer is not as simple as interpreting
a filter. While a filter guarantees an ex-ante definition of DP with parameter $\gamma_c$,
an odometer does not do so.

We give two reasons why an odometer cannot provide an ex-ante DP guarantee.\footnote{
Note that \citet[Theorem 2]{lecuyerPracticalPrivacyFilters2021} claims that their odometer
provides $(\epsilon, \delta)$-DP ``for the interaction described in Algorithm 2'' (of their paper).
Our arguments here show that this is false if it is indeed a claim of ex-ante 
$(\epsilon, \delta)$-DP. The paper of \citet{lecuyerPracticalPrivacyFilters2021} does not have 
any proof of this claim, only the other claim of their Theorem 2 that provides a conversion
to a probabilistic DP odometer.
} The first is that
the value $\gamma(\xreal)$ returned by the odometer can depend on the data $\xreal$. However,
ex-ante definitions of DP do not permit data-dependent privacy bounds. The second reason
is more concrete. Consider a pathological mechanism $\calm_P$ that releases the input
$\xreal$ with probability $\frac{1}{2}$, and releases nothing otherwise. It is well-known that
$\calm_P$ is not $(\epsilon, \delta)$-DP, or $(\epsilon, \delta)$-probabilistically DP 
if $\delta < \frac{1}{2}$, and that it is not
$(\alpha, \epsilon)$-RDP, unless $\epsilon = \infty$ in either case. However, we can analyse this mechanism
with a 1-step odometer, which can return $\epsilon = 0$ for a probabilistic-DP 
odometer~\citep{rogersPrivacyOdometersFilters2016} regardless of $\delta$, and
return an arbitrarily small $\epsilon$ for an RDP 
odometer~\citep{feldmanIndividualPrivacyAccounting2021,lecuyerPracticalPrivacyFilters2021}.
Since we know this mechanism is not ex-ante DP with a finite $\epsilon$ (unless $\delta \geq \frac{1}{2})$,
the value returned by the odometer cannot be an ex-ante DP guarantee.

Turning to ex-post DP, we can trivially make the pathological mechanism $\calm_P$ 
ex-post private: return $\epsilon = 0$ if no data is released, and return 
$\epsilon = \infty$
if data is released. This clearly satisfies $\delta$-probabilistic ex-post privacy 
(\Cref{def:ex-post-privacy}) even for $\delta = 0$, and also satisfies $\alpha$-ex-post RDP
(\Cref{def:unconditional-rdp}) for any $\alpha > 1$ if we take $e^{-\infty} = 0$ and $0\cdot \infty = 0$
in the definition. This begs the question: could privacy odometers provide an ex-post privacy 
guarantee?

It turns out that odometers can provide an ex-post privacy guarantee. The following theorem
shows that a mechanism $\calm^*$ computing $\mecheps[^*]{\xreal}$ with a probabilistic DP odometer~\citep{rogersPrivacyOdometersFilters2016,whitehouseFullyAdaptiveCompositionDifferential2023}
is probabilistically ex-post private.
\begin{theorem}\label{thm:probabilistic-odometer-isprobabilistic-ex-post-private}
    Let $\mathrm{COMP}_{\delta_g}$ be a valid privacy odometer in the sense of 
    \citet[Definition 3.1]{rogersAdaptivePrivacyComposition2023}. Let $\calm^*$ be an ex-post 
    mechanism where $\mechy[^*]{\xreal}$ runs any fully adaptive 
    composition~\citep[Algorithm 2]{rogersAdaptivePrivacyComposition2023}, resulting
    in ADP parameters $\epsilon_1, \delta_1, \dotsc, \epsilon_K, \delta_K$, and
    $\mecheps[^*]{\xreal} = \mathrm{COMP}_{\delta_g}(\epsilon_1, \delta_1, \dotsc, \epsilon_K, \delta_K)$. Then $\calm^*$ is $\delta_g$-probabilistically ex-post private 
    (\Cref{def:ex-post-privacy}).
\end{theorem}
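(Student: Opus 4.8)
The plan is to reduce the statement to the defining property of a valid privacy odometer, after observing that in the construction of $\calm^*$ the claimed bound $\mecheps[^*]{\xreal}$ reveals nothing about the dataset beyond what $\mechy[^*]{\xreal}$ already reveals. In the fully adaptive composition of \citet[Algorithm 2]{rogersAdaptivePrivacyComposition2023} the adversary chooses the $i$-th mechanism, and with it the pair $(\epsilon_i,\delta_i)$, as a (measurable) function of the transcript $\yout_{1:i-1}$ seen so far, never of the private dataset directly. Hence each $(\epsilon_i,\delta_i)$ is a function of $\mechy[^*]{\xreal}$, and therefore so is $\mecheps[^*]{\xreal} = \mathrm{COMP}_{\delta_g}(\epsilon_1,\delta_1,\dotsc,\epsilon_K,\delta_K) = g\big(\mechy[^*]{\xreal}\big)$ for a fixed, dataset-independent map $g$. (If the adversary randomises, condition on its internal coins; the bound below then holds for each realisation, hence unconditionally.)

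First I would use this to compute the joint privacy loss function. Since $\mecheps[^*]{\xreal}$ is $g$ applied to $\mechy[^*]{\xreal}$ with $g$ not depending on the input, releasing $\mecheps[^*]{\xreal}$ alongside $\mechy[^*]{\xreal}$ is deterministic, dataset-independent post-processing; by the same reasoning underlying post-processing immunity for pure ex-post privacy (\Cref{thm:ex-post-pure-dp-post-processing}), for $(\yout_{1:K},\epsilon)\sim\calm^*(\xreal)$ we get
\[
\plf{\calm^*}{\xreal}{\xreal'}(\yout_{1:K},\epsilon) \;=\; \ln\frac{p_{\mechy[^*]{\xreal}}(\yout_{1:K})}{p_{\mechy[^*]{\xreal'}}(\yout_{1:K})},
\]
i.e.\ the joint PLF equals the PLF of the composition transcript alone.

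Then I would invoke the odometer validity of \citet[Definition 3.1]{rogersAdaptivePrivacyComposition2023}: $\mathrm{COMP}_{\delta_g}$ being valid means exactly that, for every $\xreal\sim\xreal'$ and every adversary, the privacy loss of the transcript exceeds $\mathrm{COMP}_{\delta_g}(\epsilon_1,\delta_1,\dotsc,\epsilon_K,\delta_K)$ with probability at most $\delta_g$. Since that right-hand side is exactly $\mecheps[^*]{\xreal}=\epsilon$ and the transcript's privacy loss is the PLF computed above, this is
\[
\Pr_{(\yout_{1:K},\epsilon)\sim\calm^*(\xreal)}\big(\plf{\calm^*}{\xreal}{\xreal'}(\yout_{1:K},\epsilon) > \epsilon\big) \;\le\; \delta_g
\qquad\text{for all } \xreal\sim\xreal',
\]
which is precisely $\delta_g$-probabilistic ex-post privacy (\Cref{def:ex-post-privacy}).

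The main obstacle I anticipate is the careful interface with \citet{rogersAdaptivePrivacyComposition2023}: one must pin down what their Algorithm 2 emits as its transcript, verify that their Definition 3.1 bounds the privacy loss of exactly that transcript uniformly over neighbours and adversaries, and confirm that the realised parameters $(\epsilon_i,\delta_i)$ handed to $\mathrm{COMP}_{\delta_g}$ are measurable with respect to the transcript rather than the data. A secondary, purely technical point is fixing a common base measure on $\yset\times\R_{\ge 0}$ so that the factorisation of the joint density in the second step is rigorous; this can be handled as in the discrete/continuous treatment of \Cref{thm:ex-post-pure-dp-post-processing}, or sidestepped by working directly with push-forward measures.
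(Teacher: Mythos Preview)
Your proposal is correct and follows the same approach as the paper, which simply states that the claim follows immediately from the odometer validity definition of \citet[Definition 3.1]{rogersAdaptivePrivacyComposition2023}. You spell out the implicit step the paper omits---that $\mecheps[^*]{\xreal}$ is a deterministic function of the transcript, so the joint PLF equals the transcript PLF---which is a reasonable elaboration, though evidently the paper considers this immediate.
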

\begin{proof}
    Follows immediately from the definition of validity of $\mathrm{COMP}_{\delta_g}$ according to
    \citet[Definition 3.1]{rogersAdaptivePrivacyComposition2023}.
\end{proof}
Combining \Cref{thm:probabilistic-odometer-isprobabilistic-ex-post-private} with 
Theorem 3.3 of \citet{rogersPrivacyOdometersFilters2016} gives a composition theorem
in the style of \Cref{thm:unconditional-ex-post-rdp-odometer} for pure and probabilistic
ex-post privacy.

The ex-post privacy of RDP odometers~\citep{feldmanIndividualPrivacyAccounting2021,lecuyerPracticalPrivacyFilters2021}
is not as clear. They are constructed by running multiple privacy filters with different upper 
bounds, and setting the accumulated privacy loss to be the smallest loss of the filters whose 
upper bound has not been exceeded. This construction results in some discretisation error in
the odometers, since there is only a finite number of filters.
\citet{lecuyerPracticalPrivacyFilters2021} provides a conversion of their
odometer to a valid probabilistic-DP odometer in the sense of 
\citet{rogersPrivacyOdometersFilters2016}, so \Cref{thm:probabilistic-odometer-isprobabilistic-ex-post-private}
can be applied to it. We have not been able to determine whether either of these odometers
implies ex-post RDP (\Cref{def:unconditional-rdp}). However, 
\Cref{thm:unconditional-ex-post-rdp-odometer} supersedes an answer to this question, since it
essentially provides an odometer that guarantees ex-post RDP without any discretisation error.

Ex-post privacy for the Gaussian DP (GDP; \citealp{dongGaussianDifferentialPrivacy2022}) 
odometer of \citet{koskelaIndividualPrivacyAccounting2023} is even less clear. It is constructed
in a similar way to the RDP odometers by running multiple GDP filters. The natural questions to
ask is: does this odometer satisfy some sense of ex-post GDP? However, we are not aware of
any work giving a definition of ex-post GDP, so this question cannot be answered at this time.

\section{Interpretation of ex-post RDP}\label{sec:ex-post-rdp-interpretation}
\citet{mironovRenyiDifferentialPrivacy2017} gives a Bayesian 
interpretation to RDP based on Bayes factors. Let 
$R_I(\xreal, \xreal')$ be the Bayes factor between datasets
$\xreal \sim \xreal'$ of an adversary with information $I$.
That is,
\begin{equation}
    R_I(\xreal, \xreal') = \frac{p(\xreal | I)}{p(\xreal' | I)}.
\end{equation}
Prior to the release of the mechanism's output $\yout$, the 
adversary has a prior $\Rprior$ with $I = I_{\mathrm{Prior}}$, where 
$I_{\mathrm{Prior}}$ is whatever prior information the adversary has, and 
after the release, they have a posterior $\Rpost$ with
$I = \{\yout\} \cup I_{\mathrm{Prior}}$. $(\alpha, \epsilon)$-RDP bounds the 
$\alpha - 1$ moment of the change in $R$ as 
follows~\citep{mironovRenyiDifferentialPrivacy2017}:
\begin{equation}
    \E_{\yout \sim \calm(\xreal)}\left[\left(
    \frac{\Rpost(\xreal, \xreal')}{\Rprior(\xreal, \xreal')}
    \right)^{\alpha - 1} \right]
    \leq e^{(\alpha - 1) \epsilon}.
    \label{eq:rdp-bayesian-bound}
\end{equation}

We can rearrange \eqref{eq:rdp-bayesian-bound} to
\begin{equation}
    \ln \E_{\yout \sim \calm(\xreal)}\left[
    e^{(1 - \alpha)\epsilon}\left(
    \frac{\Rpost(\xreal, \xreal')}{\Rprior(\xreal, \xreal')}
    \right)^{\alpha - 1} \right]
    \leq 0.
\end{equation}
We can interpret the left hand side as privacy leakage at the
$\alpha - 1$ moment exceeding
$\epsilon$. Under this interpretation, RDP simply guarantees
that there is no excess leakage for a given moment.

Since $\epsilon$ is inside the expectation in this form, we can
extend this interpretation to ex-post RDP. We can interpret
\begin{equation}
    \ln \E_{\yout, \epsilon \sim \calm(\xreal)}\left[
    e^{(1 - \alpha)\epsilon}\left(
    \frac{\Rpost(\xreal, \xreal')}{\Rprior(\xreal, \xreal')}
    \right)^{\alpha - 1} \right]
\end{equation}
as the privacy loss at the $\alpha - 1$ moment exceeding 
the $\epsilon$ the mechanism
returned. Note that here $\Rpost$ has $I = \{(\yout, \epsilon)\} \cup I_{\mathrm{Prior}}$.
Analogously to RDP, ex-post RDP guarantees that there is no
excess privacy loss at a given moment.
\begin{restatable}{theorem}{theoremexpostrdpbayesinterpretation}\label{thm:ex-post-rdp-bayesian-interpretation}
    Let $\calm$ be an $\alpha$-ex-post RDP mechanism. Then
    \begin{equation}
        \ln \E_{\yout, \epsilon \sim \calm(\xreal)}\left[
        e^{(1 - \alpha)\epsilon}\left(
        \frac{\Rpost(\xreal, \xreal')}{\Rprior(\xreal, \xreal')}
        \right)^{\alpha - 1} \right] 
        \leq 0.
    \end{equation}
\end{restatable}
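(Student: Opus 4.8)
The plan is to reprise Mironov's Bayesian derivation of \eqref{eq:rdp-bayesian-bound} for ex-ante RDP, now keeping track of the fact that $\epsilon$ is part of what the mechanism releases. First I would rewrite the posterior-to-prior Bayes-factor ratio in terms of the mechanism's output density ratio. Applying Bayes' rule to $\Rpost$, whose conditioning information is $I = \{(\yout,\epsilon)\}\cup I_{\mathrm{Prior}}$, and using that, given the dataset, the release $(\yout,\epsilon)$ is independent of the adversary's prior information $I_{\mathrm{Prior}}$, the likelihood of $(\yout,\epsilon)$ given $\xreal$ is the mechanism's density $\mechden{\xreal}(\yout,\epsilon)$. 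The prior probabilities $p(\xreal\mid I_{\mathrm{Prior}})$ and $p(\xreal'\mid I_{\mathrm{Prior}})$ then factor out and cancel against $\Rprior(\xreal,\xreal')$, leaving the identity
\[
\frac{\Rpost(\xreal,\xreal')}{\Rprior(\xreal,\xreal')} = \frac{\mechden{\xreal}(\yout,\epsilon)}{\mechden{\xreal'}(\yout,\epsilon)}.
\]

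Substituting this into the left-hand side of the claim reduces it to bounding
\[
\ln \E_{(\yout,\epsilon)\sim\calm(\xreal)}\!\left[e^{(1-\alpha)\epsilon}\left(\frac{\mechden{\xreal}(\yout,\epsilon)}{\mechden{\xreal'}(\yout,\epsilon)}\right)^{\alpha-1}\right].
\]
Next I would perform a change of measure from $\calm(\xreal)$ to $\calm(\xreal')$: for nonnegative $g$, $\E_{(\yout,\epsilon)\sim\calm(\xreal)}[g] = \E_{(\yout,\epsilon)\sim\calm(\xreal')}\!\big[g\cdot \mechden{\xreal}(\yout,\epsilon)/\mechden{\xreal'}(\yout,\epsilon)\big]$, which follows by writing both expectations as integrals against the shared base measure. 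Taking $g(\yout,\epsilon)=e^{(1-\alpha)\epsilon}(\mechden{\xreal}(\yout,\epsilon)/\mechden{\xreal'}(\yout,\epsilon))^{\alpha-1}$ turns the displayed quantity into $\ln\E_{(\yout,\epsilon)\sim\calm(\xreal')}[e^{(1-\alpha)\epsilon}(\mechden{\xreal}(\yout,\epsilon)/\mechden{\xreal'}(\yout,\epsilon))^{\alpha}]$, which is $\le \ln 1 = 0$ directly by \Cref{def:unconditional-rdp}.

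The hard part will not be any inequality but the measure-theoretic bookkeeping around the Bayes-factor step: I would need to state precisely the conditional-independence assumption on $I_{\mathrm{Prior}}$, and to handle the sets where the densities vanish (e.g.\ restricting all integrals to $\{\mechden{\xreal'}>0\}$, on which, whenever the ex-post RDP expectation is finite, $\mechden{\xreal}$ is effectively supported up to $\mu$-null sets, so the ratio is well defined). Modulo those caveats, the argument is a line-by-line translation of the ex-ante case, the only new ingredient being that $\epsilon$ travels inside the expectation instead of being a fixed constant pulled out front.
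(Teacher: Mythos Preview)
Your proposal is correct and follows essentially the same route as the paper: both reduce the Bayes-factor ratio to the density ratio via Bayes' rule, then use the change of measure $\E_{\calm(\xreal)}[g]=\E_{\calm(\xreal')}[g\cdot \mechden{\xreal}/\mechden{\xreal'}]$ to convert the $(\alpha-1)$-power expectation under $\calm(\xreal)$ into the $\alpha$-power expectation under $\calm(\xreal')$ appearing in \Cref{def:unconditional-rdp}. The only difference is cosmetic ordering (the paper starts from the definition and lands on the Bayes-factor expression, you start from the claim and land on the definition), and you are more explicit about the conditional-independence and null-set caveats than the paper is.
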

\begin{proof}
    First, by Bayes' theorem,
    \begin{equation}
        \begin{split}
            \frac{\Rpost(\xreal, \xreal')}{\Rprior(\xreal, \xreal')}
            = \frac{\mechden{\xreal}(\yout, \epsilon)}
            {\mechden{\xreal'}(\yout, \epsilon)}.
        \end{split}
    \end{equation}
    
    In the expectation in \Cref{def:unconditional-rdp}, we first
    change the dataset the expectation is taken with respect to, 
    and use the previous to obtain
    \begin{equation}
        \begin{split}
            \E_{\yout, \epsilon \sim \calm(\xreal')}\left[
            e^{(1 - \alpha)\epsilon}\left(
            \frac{\mechden{\xreal}(\yout, \epsilon)}
            {\mechden{\xreal'}(\yout, \epsilon)}
            \right)^\alpha\right]
            &= \E_{\yout, \epsilon \sim \calm(\xreal)}\left[
            e^{(1 - \alpha)\epsilon}\left(
            \frac{\mechden{\xreal}(\yout, \epsilon)}
            {\mechden{\xreal'}(\yout, \epsilon)}
            \right)^{\alpha - 1}\right]
            \\&= \E_{\yout, \epsilon \sim \calm(\xreal)}\left[
            e^{(1 - \alpha)\epsilon}\left(
            \frac{\Rpost(\xreal, \xreal')}{\Rprior(\xreal, \xreal')}
            \right)^{\alpha - 1} \right].
        \end{split}
    \end{equation}
    Now taking the logarithm of both sides of the inequality in
    \Cref{def:unconditional-rdp} proves the claim.
\end{proof}

\section{Missing Proofs}\label{sec:missing-proofs}

\subsection{Preliminary Lemmas and Definitions}

\begin{definition}[Rényi divergence; \citealt{ervenRenyiDivergenceKullbackLeibler2014}]\label{def:renyi-divergence}
    Let $P$ and $Q$ be random variables with respective density functions $p_P$ and $p_Q$. The Rényi divergence
    of order $\alpha > 1$ between $P$ and $Q$ is
    \begin{equation}
        D_\alpha(P \mid\mid Q) = \frac{1}{\alpha - 1}\ln \E_{t\sim Q}\left[\left(\frac{p_P(t)}{p_Q(t)}\right)^\alpha\right]
    \end{equation}
    if $P$ is absolutely continuous with respect to $Q$, and $D_\alpha(P \mid\mid Q) = \infty$ otherwise.
\end{definition}

\begin{theorem}[Data-processing inequality; \citealp{ervenRenyiDivergenceKullbackLeibler2014}]\label{thm:data-processing-inequality}
    Let $P \in \yset_1$ and $Q \in \yset_1$ be random variables with respective density functions $p_P$ and $p_Q$, 
    and let $f\colon \yset_1 \to \yset_2$ be a randomised function. Let $P' = f \circ P$ and $Q' = f \circ Q$.
    Then $D_\alpha(P' \mid\mid Q') \leq D_\alpha(P \mid\mid Q)$.
\end{theorem}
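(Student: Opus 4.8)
The plan is to recognise the likelihood ratio of $P'$ and $Q'$ as a conditional average of the likelihood ratio of $P$ and $Q$, and then invoke Jensen's inequality. First I would dispose of the trivial case: if $P$ is not absolutely continuous with respect to $Q$, then $D_\alpha(P \mid\mid Q) = \infty$ by \Cref{def:renyi-divergence} and the claimed bound is vacuous, so assume $P \ll Q$. Represent the randomised function $f$ by its Markov kernel, i.e.\ $f(y_1)$ has density $k(\cdot \mid y_1)$ with respect to the base measure $\mu_2$ of $\yset_2$; then $P' = f \circ P$ and $Q' = f \circ Q$ have densities $p_{P'}(y_2) = \int k(y_2 \mid y_1)\, p_P(y_1) \dx\mu_1(y_1)$ and $p_{Q'}(y_2) = \int k(y_2 \mid y_1)\, p_Q(y_1) \dx\mu_1(y_1)$ with respect to $\mu_2$, where $\mu_1$ is the base measure of $\yset_1$. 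A one-line check shows $P \ll Q$ implies $P' \ll Q'$, so the (finite) branch of \Cref{def:renyi-divergence} is the relevant one on both sides.

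Since $t \mapsto \frac{1}{\alpha - 1}\ln t$ is increasing for $\alpha > 1$, it suffices to prove the inequality for the $\alpha$-th Hellinger integrals,
\begin{equation}
    \int \left(\frac{p_{P'}(y_2)}{p_{Q'}(y_2)}\right)^{\!\alpha} p_{Q'}(y_2) \dx\mu_2(y_2)
    \;\leq\;
    \int \left(\frac{p_P(y_1)}{p_Q(y_1)}\right)^{\!\alpha} p_Q(y_1) \dx\mu_1(y_1).
\end{equation}
For each $y_2$ with $p_{Q'}(y_2) > 0$, let $\nu_{y_2}$ be the probability measure on $\yset_1$ with density $y_1 \mapsto k(y_2 \mid y_1)\, p_Q(y_1) / p_{Q'}(y_2)$ with respect to $\mu_1$; it has total mass $1$ precisely because $p_{Q'}(y_2) = \int k(y_2 \mid y_1)\, p_Q(y_1) \dx\mu_1(y_1)$. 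Dividing the integral formula for $p_{P'}(y_2)$ by $p_{Q'}(y_2)$ gives $p_{P'}(y_2)/p_{Q'}(y_2) = \E_{y_1 \sim \nu_{y_2}}[\, p_P(y_1)/p_Q(y_1)\,]$, and Jensen's inequality applied to the convex map $t \mapsto t^\alpha$ then yields $\bigl(p_{P'}(y_2)/p_{Q'}(y_2)\bigr)^{\alpha} \leq \E_{y_1 \sim \nu_{y_2}}[\,(p_P(y_1)/p_Q(y_1))^{\alpha}\,]$.

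Finally I would substitute this bound into the left-hand side of the displayed inequality: the factor $p_{Q'}(y_2)$ cancels against the normalising constant in $\nu_{y_2}$, Tonelli's theorem (all integrands nonnegative) permits swapping the two integrals, and the inner integral $\int k(y_2 \mid y_1)\dx\mu_2(y_2) = 1$ collapses the $y_2$-integration, leaving exactly the right-hand side; taking $\frac{1}{\alpha - 1}\ln(\cdot)$ of both sides gives $D_\alpha(P' \mid\mid Q') \le D_\alpha(P \mid\mid Q)$. I do not expect a conceptual obstacle — the single idea is the conditional-expectation-plus-Jensen step — but the bookkeeping is where care is needed: legitimising the Markov-kernel representation of $f$ inside the paper's ``density with respect to a base measure'' framework, treating the null sets $\{p_Q = 0\}$ and $\{p_{Q'} = 0\}$ consistently with the convention that $D_\alpha = \infty$ when absolute continuity fails, and justifying the interchange of integrals. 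Since the statement is quoted from \citet{ervenRenyiDivergenceKullbackLeibler2014}, one may alternatively simply cite it, but the argument above is short and self-contained.
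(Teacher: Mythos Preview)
Your proof is correct and follows the standard Jensen-inequality argument for the data-processing inequality of R\'enyi divergence. However, the paper does not supply its own proof of this statement: \Cref{thm:data-processing-inequality} is stated in the preliminary section with a citation to \citet{ervenRenyiDivergenceKullbackLeibler2014} and is used as a black box (specifically inside the proof of \Cref{thm:unconditional-ex-post-rdp-post-processing}). You already anticipated this possibility in your final paragraph, and indeed the paper takes exactly that route of citing rather than reproving. Your self-contained argument is a fine addition if one wants the manuscript to be more self-sufficient, and nothing in it conflicts with how the result is invoked later.
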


\subsection{Post-processing Immunity of Existing Definitions}\label{sec:missing-proofs-existing-post-processing}

\theoremexpostpuredppostprocessing*
\begin{proof} 
Let the density functions of $\calm(\xreal)$ and $\calm(\xreal')$ be
$\mechden{\xreal}$ and $\mechden{\xreal'}$. Similarly, let the density functions of
$\calm'(\xreal)$ and $\calm'(\xreal')$ be $\mechden[']{\xreal}$ and $\mechden[']{\xreal'}$, and
let the density of $f(\yout, \epsilon)$ be $p_{f(\yout, \epsilon)}$.

We see that the inequality $\plf{\calm}{\xreal}{\xreal'}(\yout, \epsilon) \leq \epsilon$ is equivalent to
\begin{equation}
    \mechden{\xreal}(\yout, \epsilon) - e^{\epsilon} \mechden{\xreal'}(\yout, \epsilon) \leq 0.
    \label{eq:ex-post-pure-dp-post-processing-proof-1}
\end{equation}
If \eqref{eq:ex-post-pure-dp-post-processing-proof-1} holds, we have:
\begin{align}
    &\phantom{= }\mechden[']{\xreal}(\yout', \epsilon) - e^{\epsilon} \mechden[']{\xreal'}(\yout', \epsilon) \\
     & =  \int \mechden{\xreal}(\yout, \epsilon) p_{f(\yout, \epsilon)}(\yout')\dx \yout - e^{\epsilon} \int \mechden{\xreal'}(\yout, \epsilon) p_{f(\yout, \epsilon)}(\yout')\dx \yout \\
     & = \int p_{f(\yout, \epsilon)}(\yout') \left[ \mechden{\xreal}(\yout, \epsilon) - e^{\epsilon} \mechden{\xreal'}(\yout, \epsilon) \right] \dx  \yout \\
     & \leq 0,
\end{align}
which implies that $\plf{\calm'}{\xreal}{\xreal'}(\yout', \epsilon) \leq \epsilon$. Thus, if $\plf{\calm}{\xreal}{\xreal'}(\yout, \epsilon) \leq \epsilon$ almost surely, then also $\plf{\calm'}{\xreal}{\xreal'}(\yout, \epsilon) \leq \epsilon$ almost surely.

The discrete case follows from a similar argument, replacing density functions with 
probability mass functions and integrals with sums.
\end{proof}

\theoremprobabilisticdpexpostprivacyequivalence*
\begin{proof}
    Let the mechanism in question be $\calm$, let $\xreal, \xreal'$ be neighbouring datasets, and let 
    $\mechden{\xreal}$ and $\mechden{\xreal'}$
    be the densities of $\calm(\xreal)$ and $\calm(\xreal')$ with regards to some measure $\mu$. Even though 
    $\calm(\xreal)_\epsilon$ is a constant here, we will formally treat it as a random variable, in line with
    definition \Cref{def:ex-post-privacy}.

    \textbf{Definition~\ref{def:ex-post-privacy} $\Rightarrow$ Definition~\ref{def:probabilistic-dp} } 
    If $\calm$ is $\delta$-probabilistically ex-post private with deterministic $\mecheps{\xreal} = \epsilon_c$, we have
    \begin{equation}
        \Pr_{(\yout, \epsilon)\sim \calm(\xreal)}\left(\ln \frac{\mechden{\xreal}(\yout, \epsilon)}
        {\mechden{\xreal'}(\yout, \epsilon)} > \epsilon\right) \leq \delta,
    \end{equation}
    which means
    \begin{equation}
        \Pr_{(\yout, \epsilon)\sim \calm(\xreal)}\left(\mechden{\xreal}(\yout, \epsilon) > e^\epsilon \mechden{\xreal'}(\yout, \epsilon)\right) \leq \delta.
    \end{equation}
    Let $S^\delta = \{(\yout, \epsilon) \mid \mechden{\xreal}(\yout, \epsilon) > e^\epsilon \mechden{\xreal'}(\yout, \epsilon)\}$,
    so $\Pr_{(\yout, \epsilon)\sim \calm(\xreal)}((\yout, \epsilon)\in S^\delta) \leq \delta$. Now
    \begin{equation}
        \begin{split}
            \Pr(\calm(\xreal) \in S\setminus S^\delta)
            &= \int_{S\setminus S^\delta} \mechden{\xreal}(\yout, \epsilon)\dx \mu(\yout, \epsilon)
            \\&\leq \int_{S\setminus S^\delta} e^\epsilon\mechden{\xreal'}(\yout, \epsilon)\dx \mu(\yout, \epsilon).
        \end{split}
    \end{equation}
    Since $\epsilon = \mecheps{\xreal'}$ is a deterministic constant, integrating $e^\epsilon \I_{S\setminus S^\delta}$
    over the law of $\calm(\xreal')$ in the last integral corresponds to a point evaluation with 
    $\epsilon = \epsilon_c$, so
    \begin{equation}
        \begin{split}
            \int_{S\setminus S^\delta} e^\epsilon\mechden{\xreal'}(\yout, \epsilon)\dx \mu(\yout, \epsilon)
            &=  e^{\epsilon_c}\int_{S\setminus S^\delta}\mechden{\xreal'}(\yout, \epsilon)\dx \mu(\yout, \epsilon)
            \\&= e^{\epsilon_c}\Pr(\calm(X') \in S\setminus S^\delta).
        \end{split}
    \end{equation}
    It follows that $\calm$ is $(\epsilon_c, \delta)$-probabilistically DP.
    
    \textbf{Definition~\ref{def:probabilistic-dp} $\Rightarrow$ Definition~\ref{def:ex-post-privacy} }
    If $\calm$ is $(\epsilon_c, \delta)$-probabilistically DP, there is a set $S^\delta$ with 
    $\Pr_{(\yout, \epsilon)\sim \calm(\xreal)}((\yout, \epsilon)\in S^{\delta}) \leq \delta$ such that the following holds for all measurable $S$:
    \begin{equation}
        \Pr(\calm(\xreal) \in S \setminus S^\delta) \leq e^{\epsilon_c} \Pr(\calm(\xreal') \in S\setminus S^\delta).
        \label{eq:probabilistic-dp-ex-post-privacy-equivalence-proof-1}
    \end{equation}
    Next, we show that the set
    \begin{equation}
        S^0 = \{(\yout, \epsilon) \mid \mechden{\xreal}(\yout, \epsilon) > e^{\epsilon_c} \mechden{\xreal'}(\yout, \epsilon)\}
    \end{equation}
    has zero base measure outside $S^\delta$, that is, $\mu(S^0 \setminus S^\delta) = 0$. 
    By \eqref{eq:probabilistic-dp-ex-post-privacy-equivalence-proof-1}, we have
    \begin{equation}
        \int_{S^0 \setminus S^\delta} \mechden{\xreal}(\yout, \epsilon) 
        - e^{\epsilon_c}\mechden{\xreal'}(\yout, \epsilon) \dx \mu(\yout, \epsilon) \leq 0.
    \end{equation}
    However, in $S^0 \setminus S^\delta$, 
    $\mechden{\xreal}(\yout, \epsilon) - e^{\epsilon_c}\mechden{\xreal'}(\yout, \epsilon) > 0$. Since the integral 
    of a positive function can be non-positive only when integrating over a measure 0 set, we must have
    $\mu(S^0 \setminus S^\delta) = 0$.
    This implies that $\Pr_{(\yout, \epsilon)\sim \calm(\xreal)}((\yout, \epsilon)\in S^0 \setminus S^\delta) = 0$
    since $\mu$ is the base measure for $\calm(\xreal)$.

    Since $\mechden{\xreal}(\yout, \epsilon) \leq e^\epsilon \mechden{\xreal'}(\yout, \epsilon)$ when 
    $(\yout, \epsilon)\notin S^0$,
    $\plf{\calm}{\xreal}{\xreal'}(\yout, \epsilon) = \ln \frac{\mechden{\xreal}(\yout, \epsilon)}{\mechden{\xreal'}(\yout, \epsilon)} \leq \epsilon_c$ when $(\yout, \epsilon)\notin S^0$. Now
    \begin{equation}
        \begin{split}
            \Pr_{(\yout, \epsilon)\sim \calm(\xreal)}\left(\plf{\calm}{\xreal}{\xreal'}(\yout, \epsilon) > \epsilon\right)
            &= \Pr_{(\yout, \epsilon)\sim \calm(\xreal)}\left(\plf{\calm}{\xreal}{\xreal'}(\yout, \epsilon) > \epsilon_c\right)
            \\&\leq \Pr_{(\yout, \epsilon)\sim \calm(\xreal)}\left((\yout, \epsilon) \in S^0\right)
            \\&\leq \Pr_{(\yout, \epsilon)\sim \calm(\xreal)}\left((\yout, \epsilon) \in S^\delta\right)
            + \Pr_{(\yout, \epsilon)\sim \calm(\xreal)}\left((\yout, \epsilon) \in S^0\setminus S^\delta\right)
            \\&\leq \delta,
        \end{split}
    \end{equation}
    so $\calm$ is $\delta$-probabilistically ex-post private.
\end{proof}

\subsection{Basic ex-post RDP Theorems}\label{sec:ex-post-rdp-proofs}

\theoremunconditionalexpostrdppostprocessing*
\begin{proof}
    Denote $\beta = e^{(1 - \alpha)\epsilon}$. 
    Using the properties of expectations, we have
    \begin{equation}
        \begin{split}
            \E_{(\yout',\epsilon) \sim \calm'(\xreal')} 
            \left[\beta\left(\frac{\mechden[']{\xreal}(\yout',\epsilon)}
            {\mechden[']{\xreal'}(\yout',\epsilon)}\right)^\alpha\right]
            &=\E_{\epsilon}\E_{\yout' | \epsilon} 
            \left[\beta\left(\frac{\mechden[']{\xreal}(\yout',\epsilon)}
            {\mechden[']{\xreal'}(\yout',\epsilon)}\right)^\alpha\right]
            \\&=\E_{\epsilon}\left[\beta
            \left(\frac{\mechden{\xreal}(\epsilon)}{\mechden{\xreal'}(\epsilon)}\right)^\alpha
            \E_{\yout' | \epsilon}\left[
            \left(\frac{\mechden[']{\xreal}(\yout'|\epsilon)}{\mechden[']{\xreal'}(\yout'|\epsilon)}\right)^\alpha\right]\right]
            \\&= (*)
        \end{split}
    \end{equation}
    Next, we use the data-processing inequality, \Cref{thm:data-processing-inequality}, to bound the inner 
    expectation. Specifically, we set
    \begin{align}
        P &= \mechy{\xreal} | \epsilon, \\
        Q &= \mechy{\xreal'} | \epsilon, \\
        f_{\mathrm{Theorem}}(\yout) &= f(\yout, \epsilon),
    \end{align}
    which means that
    \begin{align}
        P' &= \mechy[']{\xreal} | \epsilon, \\
        Q' &= \mechy[']{\xreal'} | \epsilon
    \end{align}
    in \Cref{thm:data-processing-inequality}. This implies that
    \begin{equation}
        \E_{\yout' | \epsilon}\left[
        \left(\frac{\mechden[']{\xreal}(\yout'|\epsilon)}{\mechden[']{\xreal'}(\yout'|\epsilon)}\right)^\alpha\right]
        \leq \E_{\yout | \epsilon}\left[
        \left(\frac{\mechden{\xreal}(\yout|\epsilon)}{\mechden{\xreal'}(\yout|\epsilon)}\right)^\alpha\right].
    \end{equation}
    Now
    \begin{equation}
        \begin{split}
            (*) &\leq\E_{\epsilon}\left[\beta
            \left(\frac{\mechden{\xreal}(\epsilon)}{\mechden{\xreal'}(\epsilon)}\right)^\alpha\E_{\yout | \epsilon}\left[
            \left(\frac{\mechden{\xreal}(\yout|\epsilon)}{\mechden{\xreal'}(\yout|\epsilon)}\right)^\alpha\right]\right]
            \\&=\E_{(\yout,\epsilon) \sim \calm(\xreal')} 
            \left[e^{(1 - \alpha)\epsilon}\left(\frac{\mechden{\xreal}(\yout,\epsilon)}{\mechden{\xreal'}(\yout,\epsilon)}\right)^\alpha\right]
            \\&\leq 1.
        \end{split}
    \end{equation}
\end{proof}

\theoremunconditionalexpostrdpfilter*
\begin{proof}
    Since $\mecheps{\xreal'} \leq \epsilon_c$ almost surely and $\calm$ is $\alpha$-unconditionally ex-post RDP, we have 
    for any $\xreal \sim \xreal'$
    \begin{equation}
        \begin{split}
            1 \geq \E_{(\yout, \epsilon) \sim \calm(\xreal')}\left[e^{(1 - \alpha)\epsilon}
            \left(\frac{\mechden{\xreal}(\yout, \epsilon)}{\mechden{\xreal'}(\yout, \epsilon)}\right)^\alpha\right]
            \geq \E_{(\yout, \epsilon) \sim \calm(\xreal')}\left[e^{(1 - \alpha)\epsilon_c}
            \left(\frac{\mechden{\xreal}(\yout, \epsilon)}{\mechden{\xreal'}(\yout, \epsilon)}\right)^\alpha\right].
        \end{split}
    \end{equation}
    Since $\calm$ returns the pair $(\yout, \epsilon)$, we must consider their joint distribution in the definition of 
    RDP. Now
    \begin{equation}
        \frac{1}{\alpha - 1}\ln \E_{(\yout, \epsilon) \sim \calm(\xreal')}\left[
        \left(\frac{\mechden{\xreal}(\yout, \epsilon)}{\mechden{\xreal'}(\yout, \epsilon)}\right)^\alpha
        \right]
        \leq \frac{1}{\alpha - 1}\ln e^{(\alpha - 1)\epsilon_c} = \epsilon_c.
    \end{equation}
\end{proof}

\theoremunconditionalexpostrdpcomposition*
\begin{proof}
    By post-processing immunity, it suffices to prove the claim for the mechanism releasing the intermediate 
    $\epsilon_i$ values. By induction, proving the claim for $T = 2$ suffices.

    Let $\epsilon = \epsilon_1 + \epsilon_2$, $\beta = e^{(1 - \alpha)\epsilon}$ and $\beta_i = e^{(1 - \alpha)\epsilon_i}$.
    Starting with the expectation from \Cref{def:unconditional-rdp}, we can remove $\epsilon$ from the expression, since
    $\epsilon$ is deterministic and does not depend on $\xreal$ when $\epsilon_1, \epsilon_2$ are given. Then we can decompose the expression with the law of 
    total expectation and the chain rule of probabilities:
    \begin{equation}
        \begin{split}
            &\E_{(\yout_1, \yout_2, \epsilon_1, \epsilon_2, \epsilon) \sim \calm^*(\xreal')} \left[\beta\left(
            \frac{\mechden[^*]{\xreal}(\yout_1, \yout_2, \epsilon_1, \epsilon_2, \epsilon)}{\mechden[^*]{\xreal'}(\yout_1, \yout_2, \epsilon_1, \epsilon_2, \epsilon)}
            \right)^\alpha\right]
            \\&= \E_{(\yout_1, \yout_2, \epsilon_1, \epsilon_2) \sim \calm^*(\xreal')} \left[\beta_1\beta_2\left(
            \frac{\mechden[^*]{\xreal}(\yout_1, \yout_2, \epsilon_1, \epsilon_2)}{\mechden[^*]{\xreal'}(\yout_1, \yout_2, \epsilon_1, \epsilon_2)}
            \right)^\alpha\right]
            \\&= \E_{\yout_1, \epsilon_1}\left[\beta_1
            \left(\frac{\mechden[_1]{\xreal}(\yout_1, \epsilon_1)}{\mechden[_1]{\xreal'}(\yout_1, \epsilon_1)}\right)^\alpha
            \E_{\yout_2, \epsilon_2 | \yout_1, \epsilon_1}\left[ \beta_2
            \left(
            \frac{\mechden[_2]{\xreal}(\yout_2, \epsilon_2 | \yout_1, \epsilon_1)}{\mechden[_2]{\xreal'}(\yout_2, \epsilon_2 | \yout_1, \epsilon_1)}
            \right)^\alpha \right] \right]
            \\&= (*)
        \end{split}
    \end{equation}
    By the assumption that $\calm_i$ are $\alpha$-unconditionally ex-post RDP, 
    \begin{align}
        \E_{\yout_1, \epsilon_1}\left[\beta_1
        \left(\frac{\mechden[_1]{\xreal}(\yout_1, \epsilon_1)}{\mechden[_1]{\xreal'}(\yout_1, \epsilon_1)}\right)^\alpha  \right]
        &\leq 1  \quad \mathrm{and}\\
        \E_{\yout_2, \epsilon_2 | \yout_1, \epsilon_1}\left[ \beta_2 \left(
        \frac{\mechden[_2]{\xreal}(\yout_2, \epsilon_2 | \yout_1, \epsilon_1)}{\mechden[_2]{\xreal'}(\yout_2, \epsilon_2 | \yout_1, \epsilon_1)}
        \right)^\alpha \right] 
        &\leq 1.
    \end{align}
    This immediately implies that $(*) \leq 1$, which concludes the proof.
\end{proof}

\theoremunconditionalexpostrdpodometer*
\begin{proof}
    It suffices the prove the claim for the mechanism releasing the intermediate $\epsilon_{1:K}$ values
    due to post-processing immunity.

    Denote $\beta = e^{(1 - \alpha)\epsilon}$ and $\beta_i = e^{(1 - \alpha)\epsilon_i}$.
    Since $\calm_i(\cdot, \epsilon_i)$ is $(\alpha, \epsilon_i)$-RDP, 
    \begin{equation}
        \E_{\yout_i | \yout_{1:i-1}, \epsilon_i}\left[\left(\frac{\mechden[_i]{\xreal}(\yout_i | \yout_{1:i-1}, \epsilon_i)}
        {\mechden[_i]{\xreal'}(\yout_i | \yout_{1:i-1}\epsilon_i)}\right)^\alpha\right] \leq e^{\epsilon_i(\alpha - 1)}
        = \beta_i^{-1}.
    \end{equation}

    The functions computing $\epsilon_i$ and $\epsilon$ do not depend on $\xreal$, so we can cancel them from the 
    likelihood ratio:
    \begin{equation}
        \begin{split}
            &\E_{\yout_{1:K}, \epsilon_{1:K}, \epsilon}\left[\beta\left(
            \frac{\mechden[^*]{\xreal}(\yout_{1:K}, \epsilon_{1:K}, \epsilon)}
            {\mechden[^*]{\xreal'}(\yout_{1:K}, \epsilon_{1:K}, \epsilon)}\right)^\alpha\right]
            \\&= \E_{\yout_{1:K}, \epsilon_{1:K}, \epsilon}\left[\beta\left(
            \frac{p_\epsilon(\epsilon | \epsilon_{1:K})}{p_\epsilon(\epsilon | \epsilon_{1:K})}
            \prod_{i=1}^K 
            \frac{\mechden[_i]{\xreal}(\yout_i | \yout_{1:i-1}, \epsilon_{1:i-1})p_{\cale}(\epsilon_i | y_{1:i-1}, \epsilon_{1:i-1})}
            {\mechden[_i]{\xreal'}(\yout_i | \yout_{1:i-1}, \epsilon_{1:i-1})p_{\cale}(\epsilon_i | y_{1:i-1}, \epsilon_{1:i-1})}
            \right)^\alpha\right]
            \\&= \E_{y_{1:K}, \epsilon_{1:K}, \epsilon}\left[\beta\left(
            \prod_{i=1}^K \frac{\mechden[_i]{\xreal}(\yout_i | \yout_{1:i-1}, \epsilon_i)} 
            {\mechden[_i]{\xreal'}(\yout_i | \yout_{1:i-1}, \epsilon_i)}
            \right)^\alpha\right]
            \\&= (*).
            \label{eq:unconditional-ex-post-rdp-odometer-proof-1}
        \end{split}
    \end{equation}
    Since $\epsilon$ is deterministic given $\epsilon_{1:K}$, we can remove the expectation over $\epsilon$:
    \begin{equation}
        \begin{split}
            (*) &= \E_{y_{1:K}, \epsilon_{1:K}}\left[e^{(1 - \alpha)\sum_{i=1}^K \epsilon_i} \left(
            \prod_{i=1}^K \frac{\mechden[_i]{\xreal}(\yout_i | \yout_{1:i-1}, \epsilon_i)} 
            {\mechden[_i]{\xreal'}(\yout_i | \yout_{1:i-1}, \epsilon_i)}
            \right)^\alpha\right]
            \\&= \E_{\yout_{1:K}, \epsilon_{1:K}}\left[
            \prod_{i=1}^K \beta_i\left(\frac{\mechden[_i]{\xreal}(\yout_i|\yout_{1:i-1}, \epsilon_i)}
            {\mechden[_i]{\xreal}(\yout_i | \yout_{1:i-1}, \epsilon_i)}
            \right)^\alpha\right].
        \end{split}
    \end{equation}
    Next, we use the law of total expectation and pull out known quantities from the resulting inner expectation:
    \begin{equation}
        \begin{split}
            (*) 
            &= \E_{\yout_{1:K-1}, \epsilon_{1:K-1}}\Bigg[
            \prod_{i=1}^{K-1}\beta_i\left(\frac{\mechden[_i]{\xreal}(\yout_i | \yout_{1:i-1}, \epsilon_i)}
            {\mechden[_i]{\xreal}(\yout_i | \yout_{1:i-1} \epsilon_i)}\right)^\alpha
            \\&\E_{\yout_K, \epsilon_K | \yout_{1:K-1}, \epsilon_{1:K-1}}\left[\beta_K
            \left(
            \frac{\mechden[_K]{\xreal}(\yout_K | \yout_{1:K-1}, \epsilon_{K}}
            {\mechden[_K]{\xreal'}(\yout_K | \yout_{1:K-1}, \epsilon_{K}}
            \right)^\alpha\right]\Bigg].
        \end{split}
    \end{equation}
    Looking at the inner expectation, we use the law of total expectation and the $(\alpha, \epsilon_K)$-RDP of $\calm_K$:
    \begin{equation}
        \begin{split}
            &\E_{\yout_K, \epsilon_K | \yout_{1:K-1}, \epsilon_{1:K-1}}\left[\beta_K
            \left(\frac{\mechden[_K]{\xreal}(\yout_K | \yout_{1:K-1}, \epsilon_{K}}
            {\mechden[_K]{\xreal'}(\yout_K | \yout_{1:K-1}, \epsilon_{K}}
            \right)^\alpha\right]
            \\&= \E_{\epsilon_K | \yout_{1:K-1}, \epsilon_{1:K-1}}\left[\beta_K
            \E_{\yout_K | \yout_{1:K-1}, \epsilon_K}\left[
            \left(\frac{\mechden[_K]{\xreal}(\yout_K | \yout_{1:K-1}, \epsilon_{K}}
            {\mechden[_K]{\xreal'}(\yout_K | \yout_{1:K-1}, \epsilon_{K}}
            \right)^\alpha\right]\right]
            \\&\leq  \E_{\epsilon_K | \yout_{1:K-1}, \epsilon_{1:K-1}}\left[\beta_K \beta_K^{-1}\right]
            \\&= 1.
        \end{split}
    \end{equation}
    Plugging into $(*)$ gives
    \begin{equation}
        (*) \leq \E_{\yout_{1:K-1}, \epsilon_{1:K-1}}\left[
        \prod_{i=1}^{K-1}\beta_i\left(\frac{\mechden[_i]{\xreal}(\yout_i|\epsilon_i)}
        {\mechden[_i]{\xreal}(\yout_i|\epsilon_i)}\right)^\alpha
        \right].
    \end{equation}
    Repeating these steps an additional $K - 1$ times results in
    \begin{equation}
        (*) \leq 1,
    \end{equation}
    which concludes the proof.
\end{proof}

\begin{algorithm}
    \caption{Ex-post Composition with Data-Dependent Stopping Criterion with Output Padded to Length $K$}
    \label{algo:ex-post-composition-data-dependent-stopping-padded}
    \KwIn{Dataset $\xreal$ divided into train set $\xreal_{\yout}$ and validation set $\xreal_\epsilon$, 
    $(\alpha, \epsilon_i)$-RDP mechanisms $\calm_i$, $(\alpha, \epsilon_\calt)$-RDP stopping rule $\calt$,
    privacy bound choosing rule $\cale$}
    \KwOut{Sequence of outputs $\yout_{1:K}$, privacy bounds $\epsilon_{1:K}$, total
    privacy bound $\epsilon$}
    $\mathrm{has\_halted} \gets \mathrm{False}$\;
    \For{$1 \leq i \leq K$}{
        \uIf{not $\mathrm{has\_halted}$}{
            $\epsilon_{i} \gets \cale(\yout_{1:i-1}, \epsilon_{i:1-1})$\;
            %$\epsilon_i \gets \epsilon'_{i}$\;
            $\yout_i \gets \calm_i(\xreal_{\yout}, \epsilon_i, \yout_{1:i-1})$\;
            $c \gets \calt(\xreal_{\epsilon}, \yout_i)$\;
            \If{$c = \mathrm{Halt}$}{
                $\mathrm{has\_halted} \gets \mathrm{True}$\;
            }
        }
        \uElse{
            $\epsilon_i \gets 0$\;
            $\yout_i \gets \perp$\;
        }
    }
    $\epsilon \gets \max\left(\sum_{j=1}^K \epsilon_j, \epsilon_\calt\right)$\;
    \Return{$\yout_{1:K}, \epsilon_{1:K}, \epsilon$}\;
\end{algorithm}

\theoremunconditionalexpostrdpthresholdcheck*
\begin{proof}
    We consider an equivalent mechanism that pads the sequences $\yout_{1:t}$ and $\epsilon_{1:t}$ to length
    $K$ with $\yout_{i} = \bot$ and $\epsilon_i = 0$ for $i > t$, where $\bot$ is a marker for missing values, shown in \Cref{algo:ex-post-composition-data-dependent-stopping-padded}. Note that this makes $\epsilon_i$ given $y_{1:i-1}, \epsilon_{1:i-1}$ depend on the private data $\xreal_\epsilon$, as $\epsilon_i$
    depends on whether the mechanism halted on an iteration before $i$. We denote $\beta = e^{(1 - \alpha)\epsilon}$.
    Using the chain rule of probability and cancelling the probability ratio of releasing $\epsilon$
    which does not depend on $\xreal$, we have
    \begin{equation}
        \begin{split}
            &\E_{\yout_{1:K}, \epsilon_{1:K}, \epsilon}\left[\beta \left(
            \frac{\mechden[^*]{\xreal}(\yout_{1:K}, \epsilon_{1:K}, \epsilon)}
            {\mechden[^*]{\xreal'}(\yout_{1:K}, \epsilon_{1:K}, \epsilon)}
            \right)^\alpha\right]
            \\&= \E_{\yout_{1:K}, \epsilon_{1:K}, \epsilon}\left[\beta \left(
            \frac{p_{\epsilon}(\epsilon | \epsilon_{1:t})}{p_{\epsilon}(\epsilon | \epsilon_{1:t})}
            \prod_{i=1}^K \frac{p_{\cale(\xreal_\epsilon)}(\epsilon_i | \yout_{1:i-1}, \epsilon_{1:i-1})
            \mechden[_i]{\xreal_\yout}(\yout_i | \epsilon_i, \yout_{1:i-1})}
            {p_{\cale(\xreal'_{\epsilon})}(\epsilon_i | \yout_{1:i-1}, \epsilon_{1:i-1}) 
            \mechden[_i]{\xreal'_\yout}(\yout_i | \epsilon_i, \yout_{1:i-1})}
            \right)^\alpha\right]
            \\&= \E_{\yout_{1:K}, \epsilon_{1:K}, \epsilon}\left[\beta \left(
            \prod_{i=1}^K \frac{p_{\cale(\xreal_\epsilon)}(\epsilon_i | \yout_{1:i-1}, \epsilon_{1:i-1})
            \mechden[_i]{\xreal_\yout}(\yout_i | \epsilon_i, \yout_{1:i-1})}
            {p_{\cale(\xreal'_{\epsilon})}(\epsilon_i | \yout_{1:i-1}, \epsilon_{1:i-1}) 
            \mechden[_i]{\xreal'_\yout}(\yout_i | \epsilon_i, \yout_{1:i-1})}
            \right)^\alpha\right]
            \\&= (*).
        \end{split}
    \end{equation}

    Since $\xreal_\yout$ and $\xreal_\epsilon$ are disjoint, we have that $\xreal_\yout = \xreal'_\yout$ or
    $\xreal_\epsilon = \xreal'_\epsilon$. In both cases, we can cancel the related probability ratios. We can also 
    remove $\epsilon$ from the expectation, since it is deterministic given $\epsilon_{1:K}$.
    We start with the case $\xreal_\yout = \xreal'_\yout$:
    \begin{equation}
        \begin{split}
            (*) &= \E_{\yout_{1:K}, \epsilon_{1:K}, \epsilon}\left[\beta \left(
            \prod_{i=1}^K \frac{p_{\cale(\xreal_\epsilon)}(\epsilon_i | \yout_{1:i-1}, \epsilon_{1:i-1})}
            {p_{\cale(\xreal'_{\epsilon})}(\epsilon_i | \yout_{1:i-1}, \epsilon_{1:i-1})}
            \right)^\alpha\right]
            \\&= \E_{\yout_{1:K}, \epsilon_{1:K}}\left[e^{(1 - \alpha)\max(\sum_{i=1}^K \epsilon_i, \epsilon_\calt)} \left(
            \prod_{i=1}^K \frac{p_{\cale(\xreal_\epsilon)}(\epsilon_i | \yout_{1:i-1}, \epsilon_{1:i-1})}
            {p_{\cale(\xreal'_{\epsilon})}(\epsilon_i | \yout_{1:i-1}, \epsilon_{1:i-1})}
            \right)^\alpha\right]
            \\&\leq \E_{\yout_{1:K}, \epsilon_{1:K}}\left[e^{(1 - \alpha)\epsilon_\calt} \left(
            \prod_{i=1}^K \frac{p_{\cale(\xreal_\epsilon)}(\epsilon_i | \yout_{1:i-1}, \epsilon_{1:i-1})}
            {p_{\cale(\xreal'_{\epsilon})}(\epsilon_i | \yout_{1:i-1}, \epsilon_{1:i-1})}
            \right)^\alpha\right]
            \\&\leq 1.
        \end{split}
    \end{equation}
    The final inequality uses the assumption that $\calt$ is $(\alpha, \epsilon_\calt)$-RDP.

    In the case that $\xreal_\epsilon = \xreal'_\epsilon$:
    \begin{equation}
        \begin{split}
            (*) &= \E_{\yout_{1:K}, \epsilon_{1:K}, \epsilon}\left[\beta \left(
            \prod_{i=1}^K \frac{\mechden[_i]{\xreal_\yout}(\yout_i | \epsilon_i, \yout_{1:i-1})}
            {\mechden[_i]{\xreal'_\yout}(\yout_i | \epsilon_i, \yout_{1:i-1})}
            \right)^\alpha\right]
        \end{split}
    \end{equation}
    Now we can follow the steps in the proof of \Cref{thm:unconditional-ex-post-rdp-odometer}
    after \Cref{eq:unconditional-ex-post-rdp-odometer-proof-1} to show that $(*) \leq 1$.
    
\end{proof}

\subsection{Ex-post RDP of the Brownian Mechanism}\label{sec:ex-post-brownia-mech-proofs}

\begin{definition}[\citealp{bass2011stochastic}]\label{def:brownian-motion}
    A stochastic process $B_t$ is a Brownian motion started at $\yout$, with respect to a filtration $\{\calf_t\}$, if 
    \begin{enumerate}
        \item $B_t$ is $\calf_t$-measurable
        \item $B_0 = \yout$
        \item $B_t - B_s \sim \caln(0, t - s)$ for $s < t$
        \item $B_t - B_s$ is independent of $\calf_s$ when $s < t$
        \item $B_t$ has continuous paths
    \end{enumerate}
\end{definition}
A \emph{standard} Brownian motion has $B_0 = 0$. A $d$-dimensional Brownian motion is a vector of 
$d$ independent one-dimensional Brownian motions.

\begin{algorithm}
    \caption{Sequential Precision-Weighted Gaussian Mechanism}
    \label{algo:precision-weighted-mech}
    \KwIn{Data $\xreal$, function $f$, $\ell_2$-sensitivity $\Delta$, RDP order $\alpha$, $\epsilon$ selector $\cale$.}
    \KwOut{Noisy estimates $\hat{s}_{1:K}$, privacy bounds $\epsilon_{1:K}$}
    $\epsilon_1 \gets \cale(\emptyset, \emptyset)$\;
    $\sigma_1^2 \gets \frac{\alpha\Delta^2}{2\epsilon_1}$\;
    $\tilde{s}_1 \gets f(\xreal) + \caln(0, \sigma_1^2)$\;
    $\hat{s}_1 \gets \tilde{s}_1$\;
    \For{$2 \leq i \leq K$}{
        $\epsilon_i \gets \cale(\hat{s}_{1:i-1}, \epsilon_{1:i-1})$\;
        $\sigma_i^2 \gets \frac{\alpha\Delta^2}{2(\epsilon_i - \epsilon_{i-1})}$\;
        $\tilde{s}_i \gets f(\xreal) + \caln(0, \sigma_i^2)$\;
        $\hat{s}_i \gets \frac{1}{\sum_{j=1}^i \frac{1}{\sigma_j^2}} \sum_{j=1}^i \frac{1}{\sigma_j^2} \tilde{s}_j$\;
    }
    \Return{$\hat{s}_{1:K}, \epsilon_{1:K}$}\;
\end{algorithm}

\begin{algorithm}
    \caption{Brownian Mechanism}
    \label{algo:brownian-mech}
    \KwIn{Data $\xreal$, function $f$, $\ell_2$-sensitivity $\Delta$, RDP order $\alpha$, $\epsilon$ selector $\cale$.}
    \KwOut{Noisy estimates $\hat{s}_{1:K}$, privacy bounds $\epsilon_{1:K}$}
    $\epsilon_1 \gets \cale(\emptyset, \emptyset)$\;
    $T_1 \gets \frac{\alpha\Delta^2}{2\epsilon_1}$\;
    Sample $B_{T_1} \sim \caln(0, T_1)$\;
    $\hat{s}_1 \gets f(X) + B_{T_1}$\;
    \For{$2 \leq i \leq K$}{
        $\epsilon_i \gets \cale(\hat{s}_{1:i-1}, \epsilon_{1:i-1})$\;
        $T_i \gets \frac{\alpha\Delta^2}{2\epsilon_i}$\;
        Sample $B_{T_i}$ from standard Brownian motion conditional on $B_{T_1}, \dotsc, B_{T_{i-1}}$\;
        $\hat{s}_i \gets f(\xreal) + B_{T_i}$\;
    }
    \Return{$\hat{s}_{1:K}, \epsilon_{1:K}$}\;
\end{algorithm}

\lemmaprecisionweightedmechunconditionalexpostrdp*
\begin{proof}
    Algorithm~\ref{algo:precision-weighted-mech} is almost a post-processing of a composition of Gaussian mechanisms, suitable for applying
    \Cref{thm:unconditional-ex-post-rdp-odometer}. The only difference is that in \Cref{algo:precision-weighted-mech},
    $\cale$ takes $\hat{s}_{1:i-1}$ as input when computing $\epsilon_i$, which does not fit the assumptions of 
    \Cref{thm:unconditional-ex-post-rdp-odometer}. However, we can define an equivalent $\epsilon$ selector $\cale'$ that takes
    $\tilde{s}_{1:i-1}$ and $\epsilon_{1:i-1}$ as input, computes $\hat{s}_{1:i-1}$, and returns 
    $\cale(\hat{s}_{1:i-1}, \epsilon_{1:i-1})$, and adjust \Cref{algo:precision-weighted-mech} to use 
    $\cale'$ instead of $\cale$. This adjustment does not change the outputs of the algorithm, so any privacy bound we 
    prove applies to \Cref{algo:precision-weighted-mech}, but it allows us to apply \Cref{thm:unconditional-ex-post-rdp-odometer}.
    
    The Gaussian mechanisms have RDP bounds $(\alpha, \epsilon_{i} - \epsilon_{i-1})$ for the $i$th mechanism.
    The sum of these $\epsilon$ values is $\epsilon_1 + \sum_{i=2}^K (\epsilon_i - \epsilon_{i-1}) = \epsilon_K$,
    so the claim follows from Theorems~\ref{thm:unconditional-ex-post-rdp-odometer} and 
    \ref{thm:unconditional-ex-post-rdp-post-processing}.
\end{proof}

\newcommand{\shatseq}{\hat{s}^{\mathrm{seq}}}
\newcommand{\shatbro}{\hat{s}^{\mathrm{bro}}}

\theoremprecisionweightedisbrownianmech*
\begin{proof}
    Denote the $\hat{s}_{1:K}$ outputs from \Cref{algo:precision-weighted-mech} as
    $\shatseq_{1:K}$ and the outputs from \Cref{algo:brownian-mech} as 
    $\shatbro_{1:K}$.
    
    In the Brownian mechanism~\citep{whitehouseBrownianNoiseReduction2022}, $\shatbro_1 | X \sim \caln(f(\xreal), T_1)$. 
    When $T_i < T_{i-1}$,
    \begin{equation}
        \shatbro_i | \shatbro_1, \dotsc, \shatbro_{i-1}, \xreal 
        \sim \caln\left(f(\xreal) + \frac{T_i}{T_{i-1}}(\shatbro_{i-1} - f(\xreal)), \frac{(T_{i-1} - T_i)T_i}{T_{i-1}}\right).
    \end{equation}

    We show that the outputs of Algorithm~\ref{algo:precision-weighted-mech} have these same 
    conditional distributions.
    Clearly $\shatseq_1 | \xreal \sim \caln(f(X), \sigma_1^2)$. Since $\sigma_1^2 =  \frac{\alpha\Delta^2}{2\epsilon_1} = T_1$,
    $\shatseq_1$ has the same distribution as $\shatbro_{1}$.

    For $2 \leq i \leq K$, 
    \begin{equation}
        \begin{split}
        \left(\sum_{j=1}^i \frac{1}{\sigma_i^2}\right)^{-1}
        = \left(\sum_{j=2}^i \frac{2(\epsilon_j - \epsilon_{j-1})}{\alpha\Delta^2} 
        + \frac{2\epsilon_1}{\alpha\Delta^2}\right)^{-1}
        = \left(\frac{2\epsilon_i^2}{\alpha\Delta^2}\right)^{-1}
        = T_i
        \end{split}
    \end{equation}
    and
    \begin{equation}
        \begin{split}
            \shatseq_i &= T_{i} \sum_{j=1}^i \frac{1}{\sigma_j^2} \tilde{s}_j
            = T_i \left(\sum_{j=1}^{i-1} \frac{1}{\sigma_j^2} \tilde{s}_j + \frac{1}{\sigma_i^2}\tilde{s}_i\right)
            = T_i \left(\frac{1}{T_{i-1}}\shatseq_{i-1} + \frac{1}{\sigma_i^2}\tilde{s}_i\right).
        \end{split}
    \end{equation}
    Since $\shatseq_i$ is a linear function of $\tilde{s}_i$, the distribution 
    $\shatseq_i | \shatseq_1, \dotsc, \shatseq_{i-1}, \xreal$ is 
    Gaussian. The mean is
    \begin{equation}
        \begin{split}
            \E(\shatseq_i | \shatseq_1, \dotsc, \shatseq_{i-1}, \xreal) 
            &= T_i \left(\frac{1}{T_{i-1}}\shatseq_{i-1} + \frac{1}{\sigma_i^2}\E(\tilde{s}_i | \xreal)\right)
            \\&= T_i \left(\frac{1}{T_{i-1}}\shatseq_{i-1} + \frac{1}{\sigma_i^2}f(\xreal)\right)
            \\&= \frac{T_i}{T_{i-1}}\shatseq_{i-1} + \frac{T_i}{\sigma_i^2}f(\xreal)
            \\&= \frac{T_i}{T_{i-1}}\shatseq_{i-1} + \frac{\epsilon_{i} - \epsilon_{i-1}}{\epsilon_i}f(\xreal)
            \\&= f(\xreal) + \frac{T_i}{T_{i-1}}\shatseq_{i-1} - \frac{\epsilon_{i-1}}{\epsilon_i}f(\xreal)
            \\&= f(\xreal) + \frac{T_i}{T_{i-1}}(\shatseq_{i-1} - f(\xreal))
        \end{split}
    \end{equation}
    and the variance is
    \begin{equation}
        \begin{split}
            \Var(\shatseq_i | \shatseq_1, \dotsc, \shatseq_{i-1}) 
            &= T_i^2 \cdot \frac{1}{\sigma_i^4}\Var(\tilde{s}_i)
            = T_i^2 \cdot \frac{1}{\sigma_i^2}
            = T_{i} \cdot \frac{T_i}{\sigma_i^2}
            \\&= T_{i} \cdot \frac{\epsilon_{i} - \epsilon_{i-1}}{\epsilon_i}
            = T_{i} \cdot \left(1 - \frac{\epsilon_{i-1}}{\epsilon_i}\right)
            \\&= T_{i} \cdot \frac{T_{i-1} - T_{i}}{T_{i-1}}.
        \end{split}
    \end{equation}
    These match the Brownian mechanism.
    This means that the output distributions of the Brownian mechanism and Algorithm~\ref{algo:precision-weighted-mech}
    are identical.
\end{proof}

\section{Experiment Details and Extra Results}\label{sec:experiment-extra}

\paragraph{Synthetic Data Details}
The summaries of real data we use are \emph{marginal queries}, which count 
how many rows of a discrete dataset have given values for 
given variables. While marginal queries are very simple,
the Private-PGM algorithm~\citep{mckennaGraphicalmodelBasedEstimation2019}
is able to generate high quality synthetic data based on only 
noisy values of marginal queries~\citep{taoBenchmarkingDifferentiallyPrivate2021,chenBenchmarkingDifferentiallyPrivate2025}. To choose the marginal queries,
we use the MST algorithm~\citep{mckennaWinningNISTContest2021}, to obtain an initial set,
and add the pairwise marginal queries between the classification
target and all other variables.
The sensitivity of a single marginal query under add/remove neighbourhood is 1~\citep{mckennaWinningNISTContest2021}, so the $\ell_2$-sensitivity for $n_q$ queries is $\Delta = \sqrt{n_q}$.

As we randomly split the data into train and validation sets,
we use \Cref{thm:unconditional-ex-post-rdp-threshold-check}
to prove that our synthetic data generator is ex-post RDP.
Since we already know the Brownian mechanism is ex-post RDP,
we only need to ensure the accuracy check is RDP.
Both check mechanisms depend on the sensitivity of the accuracy,
which is $\Delta_{\mathrm{acc}} = 1 / n_{\mathrm{validation}}$ 
under add/remove neighbourhood.

\paragraph{Plain Gaussian Details}
The plain Gaussian mechanism simply releases the accuracies with 
Gaussian noise. To account for multiple checks we divide the
privacy budget with the maximum number of checks, which is
$m - 1$ if there are $m$ possible $\epsilon$ values.
This ensures that the composition of all checks
is $\epsilon_{\mathrm{check}}$-RDP. This means the noise variance is 
$\frac{\alpha \Delta_{\mathrm{acc}}^2(m - 1)}{2\epsilon_{\mathrm{check}}}$.

\paragraph{SVT Details}
For SVT, we use 
Theorem 8, eq. (3) from \citet{zhuImprovingSparseVector2020}, 
which requires us to add Gaussian noise to the threshold,
and Laplace noise to the accuracy values. We also examined two
other results from 
\citet[Remark (Bounded-length SVT), Proposition 10]{zhuImprovingSparseVector2020} 
that would be applicable, but found that they have too large overheads in privacy
cost to be useful in our setting.

Denote the variance of
the Gaussian by $\sigma_1^2$, and the variance of the Laplace by
$\sigma_2^2$. Let $t\in (0, 1)$ be a variable controlling the variance split
such that the Gaussian mechanism has a $t$ fraction of the total
variance, meaning
\begin{equation}
    \sigma_1^2 = t(\sigma_1^2 + \sigma_2^2).
    \label{eq:svt-accounting-1}
\end{equation}
To make the Gaussian $\epsilon_1$-RDP for sensitivity 
$\Delta_{\mathrm{acc}}$ and the Laplace 
$\epsilon_2$-DP for sensitivity $2\Delta_{\mathrm{acc}}$, as 
required by the theorem, we need
\begin{equation}
    \epsilon_1 = \frac{\alpha\Delta_{\mathrm{acc}}^2}{2\sigma_1^2}
    \quad \quad 
    \epsilon_2 = \frac{2\sqrt{2} \Delta_{\mathrm{acc}}}{\sigma_2}.
    \label{eq:svt-accounting-2}
\end{equation}
The theorem requires 
$\epsilon_1 + \epsilon_2 = \epsilon_{\mathrm{check}}$.
Plugging in \eqref{eq:svt-accounting-2}, the solution of 
$\sigma_2$ from \eqref{eq:svt-accounting-1} and rearranging, 
we obtain the quadratic equation
\begin{equation}
    \epsilon_{\mathrm{check}}\sigma_1^2 
    - \frac{2\sqrt{2}\Delta_{\mathrm{acc}}}{t'}\sigma_1
    - \frac{\alpha \Delta_{\mathrm{acc}}^2}{2} = 0
\end{equation}
where $t' = \sqrt{\frac{1 - t}{t}}$.
The equation has a unique positive solution, which we set 
$\sigma_1$ to. We then solve $\sigma_2$ from 
\eqref{eq:svt-accounting-1}. Finally,
we numerically optimise $t$ to minimise the total variance
$\sigma_1^2 + \sigma_2^2$.

SVT permits an arbitrary number of accuracy checks, unlike the 
plain Gaussian check. However, SVT has a large overhead in the
noise variance with a small number of checks, so in our setting,
the plain Gaussian check requires less noise.

\paragraph{Dataset and Preprocessing} 
We use the UCI Adult dataset~\citep{kohaviAdult1996}. We 
remove the columns ``fnlwgt'' and ``educational-num'', as the 
first one is a weighting and the second contains the same 
information as another column. We remove rows with missing values. 
We discretise the ``age'' 
column to 8 categories, with edges $10, 20, \dotsc, 90$,
and ``hours-per-week'' to 10 categories with edges
$0, 10, 20, \dotsc, 90$, including a category for $\geq 90$.
We discretise ``capital-gain'' and ``capital-loss'' to binary 
values indicating whether the value is positive.
After these preprocessing steps, the dataset has 45222 rows and
13 columns.

We use a 40\%-40\%-20\% train-validation-test split, with the validation-test
split stratified such that the test and validation sets have the same
proportions of both labels. The split is 
fixed for all repeats of the experiment.
The validation accuracy of the gradient boosting classifier
on synthetic data without DP is $0.829 \pm 0.0014$, while
the test accuracy is $0.829 \pm 0.0011$
(average $\pm$ standard deviation over 10 runs). Based on preliminary
runs, we set an accuracy threshold of $0.825$, which the 
mechanism is capable of reaching with high $\epsilon$, but not
low $\epsilon$.

\begin{figure}
    \begin{subfigure}{0.5\textwidth}
        \centering
        \includegraphics{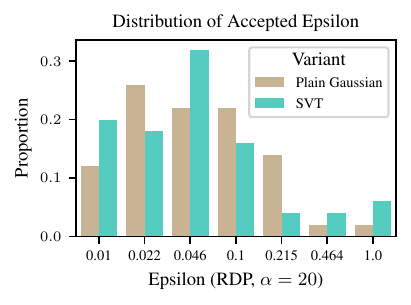}
        
        \vspace{-3mm}
        \caption{}
    \end{subfigure}
    \begin{subfigure}{0.5\textwidth}
        \centering
        \includegraphics{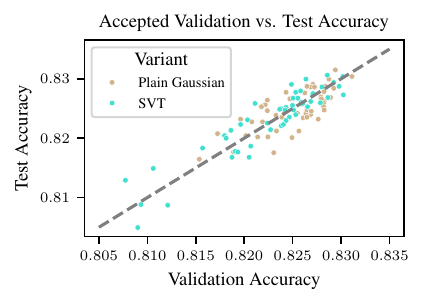}
        
        \vspace{-3mm}
        \caption{}
    \end{subfigure}
    
    \vspace{3mm}
    
    \begin{subfigure}{0.5\textwidth}
        \centering
        \includegraphics{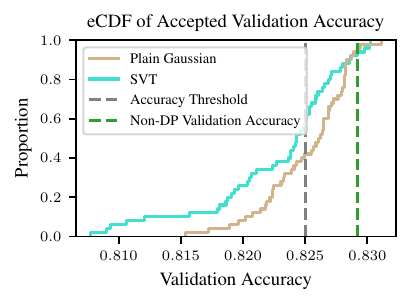}
        
        \vspace{-3mm}
        \caption{}
    \end{subfigure}
    \begin{subfigure}{0.5\textwidth}
        \centering
        \includegraphics{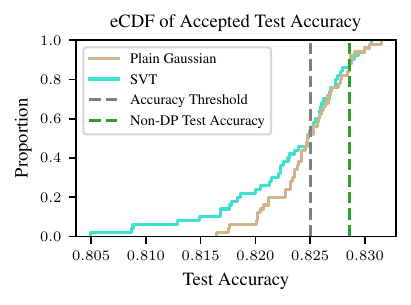}
        
        \vspace{-3mm}
        \caption{}
    \end{subfigure}
    
    \vspace{3mm}
    
    \begin{subfigure}{0.5\textwidth}
        \centering
        \includegraphics{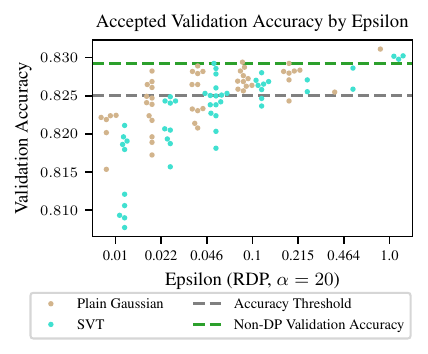}
        
        \vspace{-3mm}
        \caption{}
    \end{subfigure}
    \begin{subfigure}{0.5\textwidth}
        \centering
        \includegraphics{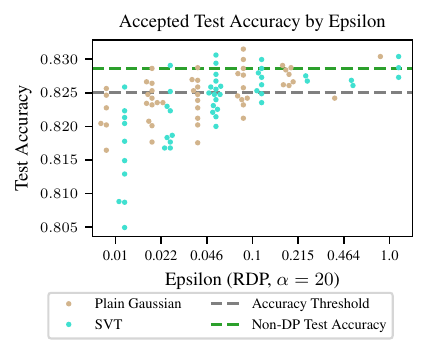}
        
        \vspace{-3mm}
        \caption{}
    \end{subfigure}
    \caption{
        (a) Distribution of accepted $\epsilon$ values. SVT generally accepts 
        earlier, resulting in lower $\epsilon$.
        (b) Scatterplot of validation and test accuracies of accepted results.
        There is no indication of overfitting to the validation set, as the points
        cluster around the diagonal gray line.
        (c, d) Empirical CDFs of accepted validation (c) and test (d) accuracies.
        Plain Gaussian gives higher accuracies. The validation accuracies from 
        Plain Gaussian are higher than the threshold (60\%) more often than the 
        validation accuracies from SVT (40\%).
        (e, f) Validation (e) and test (f) accuracies by $\epsilon$. Plain Gaussian
        has generally higher accuracies for each $\epsilon$.
        In all plots, non-DP validation and test accuracies are computed from synthetic
        data without DP noise. The plotted value is a mean over 10 repeats of synthetic
        data generation.
    }
    \label{fig:adult-extra-results}
\end{figure}

\begin{table}
    \caption{
        Conversion of ex-post RDP $\epsilon$ values used for synthetic data
        generation to
        approximate ex-post privacy $\epsilon$ with $\delta = 10^{-5}$ using
        the formula of \citet[Lemma 7]{ghaziPrivateHyperparameterTuning2025}
    }
    \label{tab:epsilon_conversion}
    \centering
    \begin{tabular}{rr}
\toprule
RDP Epsilon @ $\alpha = 20$ & ADP Epsilon @ $\delta = 10^{-5}$ \\
\midrule
0.010000 & 0.615943 \\
0.021544 & 0.627488 \\
0.046416 & 0.652359 \\
0.100000 & 0.705943 \\
0.215443 & 0.821387 \\
0.464159 & 1.070102 \\
1.000000 & 1.605943 \\
\bottomrule
\end{tabular}

\end{table}

\clearpage

\subsection{Varying Validation Dataset Size}\label{app:synthetic-data-varying-validation-size}
In this section, we investigate how the size of the validation set affects the results in the synthetic data generation experiment from \Cref{sec:experiments}. We compare validation data proportions between 1\% and 50\% (in \Cref{sec:experiments} we used 40\%). The test set is 20\% of the full data, and the rest is the training set. Unlike the main experiment, we create the splits independently in each repeated run. We reduce the number of repeated runs to 30, but otherwise keep the same setup.

The results in \Cref{fig:adult-results-varying-validation} and \Cref{tab:adult-varying-validation-first-valid-accepted-proportion} show that a smaller validation set improves test accuracy, but at the cost of a higher final $\epsilon$ for the very small sizes. The smallest 1\% size also increases the probability of accepting with a validation accuracy below the threshold, though without a corresponding decrease in test accuracy. This is likely caused by the increased variance with the very small validation set -- the validation accuracy is small because the validation set happened to have more hard cases, but the synthetic data generated from the larger training set was good, giving good test accuracy.

\begin{figure}[t]
    \begin{subfigure}{0.5\textwidth}
        \centering
        \includegraphics{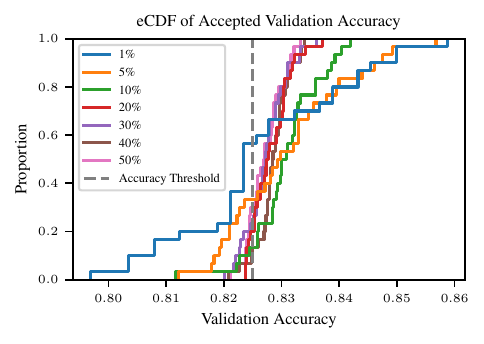}
        \caption{}
    \end{subfigure}
    \begin{subfigure}{0.5\textwidth}
        \centering
        \includegraphics{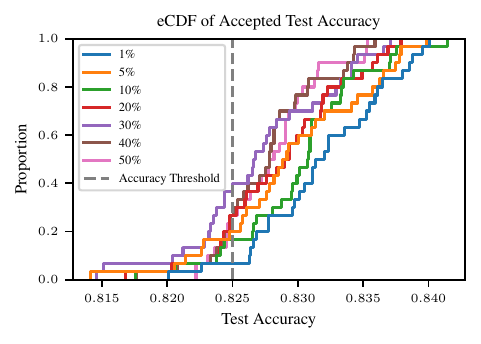}
        \caption{}
    \end{subfigure}
    \begin{subfigure}{0.5\textwidth}
        \centering
        \includegraphics{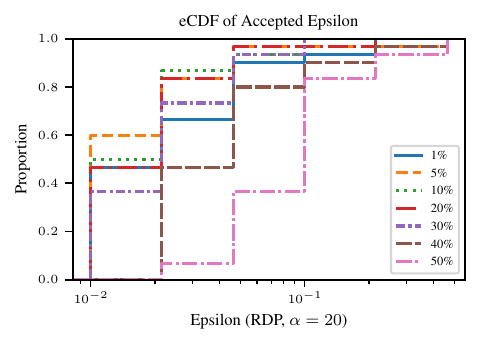}
        \caption{}
    \end{subfigure}
    \begin{subfigure}{0.5\textwidth}
        \centering
        \includegraphics{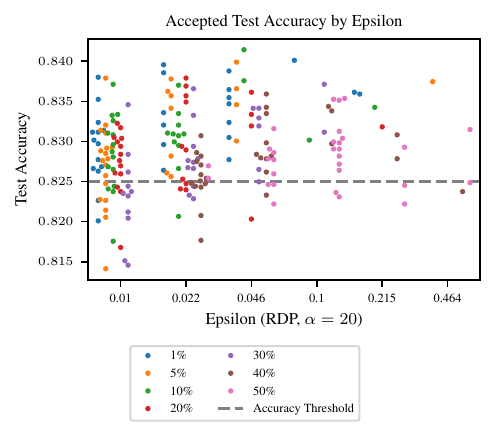}
    \end{subfigure}
    \caption{A repeat of the synthetic data generation experiment in Section 6, varying the size of the validation dataset. Each line corresponds to a different validation set size, given in percentage of the whole dataset. The test dataset is 20\% of the whole. The results show that a smaller validation set improves test accuracy, but at the cost of a higher final $\epsilon$ for the very small sizes.}
    \label{fig:adult-results-varying-validation}
\end{figure}

\begin{table}[t]
    \centering
    \caption{Proportion of repeats where the first validation accuracy crossing the threshold was accepted for the new experiment in \Cref{fig:adult-results-varying-validation} The proportion is lower for the two smallest validation proportions. The standard deviation is divided by the $\sqrt{n_{repeats}} = \sqrt{30}$.}
    \label{tab:adult-varying-validation-first-valid-accepted-proportion}
    \begin{tabular}{lrr}
    \toprule
     & mean & std \\
    Validation Percentage &  &  \\
    \midrule
    1\% & 0.233 & 0.079 \\
    5\% & 0.433 & 0.092 \\
    10\% & 0.533 & 0.093 \\
    20\% & 0.567 & 0.092 \\
    30\% & 0.633 & 0.089 \\
    40\% & 0.700 & 0.085 \\
    50\% & 0.467 & 0.093 \\
    \bottomrule
    \end{tabular}
\end{table}

\clearpage

\subsection{Alternative Privacy Bounds}\label{app:alternative-privacy-bounds}
In this section, we detail how the privacy bounds would change if we used $\alpha'$
instead of $\alpha = 20$.
For the Brownian mechanism:
\begin{equation}
    \epsilon(\alpha') = \epsilon / \alpha \cdot \alpha'
\end{equation}
where $\epsilon$ is the privacy bound from \Cref{fig:adult-all-outputs}.
For the plain Gaussian accuracy check + the Brownian mechanism:
\begin{equation}
    \epsilon_{\mathrm{total}}(\alpha') = (\epsilon + \epsilon_{\mathrm{query}}) / \alpha \cdot \alpha'
    \label{eq:plain-gaussian-plus-brownian-rdp-alpha-function}
\end{equation}
where $\epsilon_{\mathrm{query}} = 0.01$.

For the SVT + Brownian mechanism:
\begin{equation}
    \begin{split}
        \epsilon_{\mathrm{total}}(\alpha') &= \frac{2\sqrt{2}\Delta_{\mathrm{acc}}}{\sigma_2} + \frac{\Delta_{\mathrm{acc}}^2}{2\sigma_1^2}\alpha' + \epsilon/\alpha \cdot \alpha'
        \\&\approx 9.292\cdot 10^{-3} + (3.541 \cdot 10^{-5} + \epsilon / \alpha) \cdot \alpha'
        \end{split}
        \label{eq:svt-plus-brownian-mechanism-rdp-alpha-function}
\end{equation}
where $\Delta_\mathrm{acc} = 1/n_{\mathrm{validation}}$ is the sensitivity of the accuracy,
$\sigma_1$ and $\sigma_2$ are the noise scales inside the SVT.

We can express these bounds with an ex-post variant of zero-concentrated DP 
(zCDP; \citealt{bunConcentratedDifferentialPrivacy2016}).
First, we present the ex-ante zCDP definition.
\begin{definition}[\citealp{bunConcentratedDifferentialPrivacy2016}, Definition 1]
    A mechanism $\calm\colon \xset \to \yset$
    is $(\xi, \rho)$-zCDP if, for all neighbouring
    $\xreal \sim \xreal'$ and all $\alpha > 1$,
    \begin{equation}
        \frac{1}{\alpha - 1}\ln \E_{\yout \sim \calm(\xreal')} \left[\left(
        \frac{\mechden{\xreal}(\yout)}{\mechden{\xreal'}(\yout}
        \right)^\alpha \right]
        \leq \xi + \rho \alpha.
    \end{equation}
\end{definition}
Next, we present ex-post zCDP. We take $\rho$ to be
the privacy parameter that is returned by the mechanism, and
make $\xi$ fixed.
\begin{definition}\label{def:ex-post-zcdp}
    A mechanism $\calm\colon \xset \to \yset \times \R_{\geq 0}$ is $\xi$-ex-post zCDP
    if, for all neighbouring $\xreal\sim \xreal'$ and all $\alpha > 1$,
    \begin{equation}
        \E_{(\yout,\rho) \sim \calm(\xreal')}
        \left[e^{(1 - \alpha)(\xi + \rho \alpha)}\left(\frac{\mechden{\xreal}(y,\rho)}{\mechden{\xreal'}(y,\rho)}\right)^\alpha\right]
        \leq 1.
    \end{equation}
\end{definition}
We will use $\calm(\xreal)_\rho$ to denote the returned $\rho$-value from an
ex-post zCDP mechanism $\calm$ instead of $\mecheps{\xreal}$.

The following lemma characterises ex-post zCDP in terms of ex-post RDP.
\begin{lemma}\label{thm:ex-post-zcdp-characterisation}
    A mechanism $\calm\colon \xset \to \yset \times \R_{\geq 0}$ is $\xi$-ex-post zCDP
    if and only if the mechanism $\calm_{\mathrm{RDP}}$ defined as
    $\mechy[_\mathrm{RDP}]{\xreal} = \mechy{\xreal}$ and 
    $\mecheps[_\mathrm{RDP}]{\xreal} = \xi + \alpha\calm(\xreal)_\rho$ 
    is $\alpha$-ex-post RDP for all $\alpha > 1$.
\end{lemma}
\begin{proof}
    Due to the definition of $\calm_{\mathrm{RDP}}$, we have
    \begin{equation}
        \E_{(\yout,\rho) \sim \calm(\xreal')}
        \left[e^{(1 - \alpha)(\xi + \rho \alpha)}\left(\frac{\mechden{\xreal}(y,\rho)}{\mechden{\xreal'}(y,\rho)}\right)^\alpha\right]
        = \E_{(\yout,\epsilon) \sim \calm_\mathrm{RDP}(\xreal')}
        \left[e^{(1 - \alpha)\epsilon}\left(\frac{\mechden[_\mathrm{RDP}]{\xreal}(y,\epsilon)}{\mechden[_\mathrm{RDP}]{\xreal'}(y,\epsilon)}\right)^\alpha\right].
    \end{equation}
    The claimed equivalence follows immediately.
\end{proof}
\Cref{thm:ex-post-zcdp-characterisation} implies that variants of 
\Cref{thm:unconditional-ex-post-rdp-post-processing,thm:unconditional-ex-post-rdp-composition,thm:unconditional-ex-post-rdp-filter,thm:unconditional-ex-post-rdp-odometer,thm:unconditional-ex-post-rdp-threshold-check}
hold for ex-post zCDP. In the variants of \Cref{thm:unconditional-ex-post-rdp-composition,thm:unconditional-ex-post-rdp-odometer,thm:unconditional-ex-post-rdp-threshold-check}, the $\xi$ values of each involved mechanism must also be summed.

Now \eqref{eq:plain-gaussian-plus-brownian-rdp-alpha-function} implies the 
plain Gaussian + Brownian mechanism combination is $(\xi = 0)$-ex-post zCDP 
when returning $\rho = \frac{\epsilon + \epsilon_\mathrm{query}}{\alpha}$.
The SVT + Brownian mechanism combination is $(\xi = 9.292\cdot 10^{-3})$-ex-post
zCDP when returning $\rho = 3.541 \cdot 10^{-5} + \nicefrac{\epsilon}{\alpha}$
by \eqref{eq:svt-plus-brownian-mechanism-rdp-alpha-function}.

\section{Accuracy-First Differentially Private Optimization}\label{sec:accuracy-first-optimisation}
In this section, we implement a variant of differentially private stochastic 
gradient descent~(DP-SGD; \citealp{rajkumarDifferentiallyPrivateStochastic2012,songStochasticGradientDescent2013,abadiDeepLearningDifferential2016}), specifically DP-Adam,
in the accuracy-first setting. We apply the algorithm to fine-tune an image classifier. 
After an initial fine-tuning run, we check the 
validation accuracy of the classifier, and then fine-tune longer if the accuracy is 
not high enough.
As every iteration of fine-tuning is RDP, \Cref{thm:unconditional-ex-post-rdp-odometer}
gives an ex-post bound for a data-dependent number of iterations with public 
validation data, and \Cref{thm:unconditional-ex-post-rdp-threshold-check} gives
an ex-post bound with private validation data.

To compute RDP bounds for each iteration of fine-tuning, we use the
Opacus library~\citep{yousefpourOpacusUserFriendlyDifferential2021}.
We use the same add/remove neighbourhood as in the synthetic data experiment (\Cref{sec:experiments}).
We convert ex-post RDP bounds to ex-post ADP with the conversion theorem of
\citet{ghaziPrivateHyperparameterTuning2025}. For this conversion, Opacus computes
ex-post RDP bounds for a set of $\alpha$ values, finds the $\epsilon_{\mathrm{adp}}$
at a given $\delta$ corresponding to each RDP bound, and takes the minimum 
$\epsilon_{\mathrm{adp}}$. With public validation data, each ex-post RDP bound is valid
by \Cref{thm:unconditional-ex-post-rdp-odometer}, and each converted ex-post ADP bound 
is valid by the conversion theorem of \citet{ghaziPrivateHyperparameterTuning2025}, so 
the minimum $\epsilon_{\mathrm{adp}}$ is valid.

Note that by default Opacus uses another conversion
theorem~\citep{balleHypothesisTestingInterpretations2020} that is valid in the ex-ante
setting and gives smaller ADP bounds. Future work should investigate if this conversion 
also holds in the ex-post setting.

With a private validation set, we use the plain Gaussian mechanism to check the 
accuracy threshold. The privacy accounting with this mechanism has five steps:
\begin{enumerate}
    \item Find the optimal $(\alpha^*, \epsilon_\mathrm{rdp}^*)$ pair that gives the
    smallest $\epsilon_\mathrm{adp}$ for the initial optimisation run. 
    \item Set the accuracy check noise variance so that the composition of all checks is 
    $(\alpha^*, \epsilon_{\mathrm{rdp}}^*)$-RDP.
    \item Compute the $(\alpha^*, \epsilon_\mathrm{rdp}^*)$ pairs of the accuracy check 
    for all other values of $\alpha$ in the set used for privacy accounting.
    \item Take the maximum of the optimisation and accuracy check $\epsilon_\mathrm{rdp}$:s 
    for each $\alpha$.
    \item Convert those to ADP with the conversion of \citet{ghaziPrivateHyperparameterTuning2025}.
\end{enumerate}
Each ex-post RDP bound is valid by \Cref{thm:unconditional-ex-post-rdp-threshold-check},
so the conversion to ex-post ADP is valid.

As the specific setting, we fine-tune the last layer of a vision transformer (ViT-B-16, pre-trained on ImageNet-21K; \citealp{dosovitskiyImageWorth16x162021}) 
on CIFAR-10~\citep{krizhevsky2009learning}, replicating an experiment 
from~\citep{tobabenEfficacyDifferentiallyPrivate2023}. To make the task harder, we 
only use 50 datapoints per class for training data, and another 50 datapoints per class
for validation data. We use the standard CIFAR-10 test set. We set the initial
$\epsilon_{\mathrm{adp}} = 1.5$, and use the same $\delta = 1 / n_\mathrm{train}$ as
\citet{tobabenEfficacyDifferentiallyPrivate2023}. Our initial $\epsilon_{\mathrm{adp}}$
is larger than the smallest one ($\epsilon_{\mathrm{adp}} = 1$) used by 
\citet{tobabenEfficacyDifferentiallyPrivate2023} to compensate for the weaker conversion
theorem of \citet{ghaziPrivateHyperparameterTuning2025}. We 
use Adam~\citep{kingmaAdamMethodStochastic2015} as the optimiser, implemented with DP in
Opacus~\citep{yousefpourOpacusUserFriendlyDifferential2021}.

We tune hyperparameters for $\epsilon = 1.5$ using 
Optuna~\citep{akibaOptunaNextgenerationHyperparameter2019}, running 20 trials. 
We use the best hyperparameters for the rest of the experiment, except we 
divide the learning rate by 10 after the initial run of fine-tuning, since we expect
the algorithm to have gotten reasonably close to a minimum at that point.
We ignore the privacy
cost of hyperparameter tuning, as is common in DP deep learning
literature~\citep{tobabenEfficacyDifferentiallyPrivate2023}. Note 
that \citet{tobabenEfficacyDifferentiallyPrivate2023} tune hyperparameters separately
for every value of $\epsilon$, so we do not reach the same accuracy they do with 
other values of $\epsilon$.

With public validation data, we check the accuracy threshold every 50 epochs, and allow a
maximum of 10 releases, including the initial one. With private validation data, we 
check accuracy every 100 epochs and allow 5 releases.

Results are plotted in \Cref{fig:dp-sgd-results,fig:dp-sgd-extra-results}. They show 
that our ex-post private fine-tuning is able to reach the accuracy threshold
while minimising $\epsilon_{\mathrm{adp}}$, with both public and private validation
data. The threshold check has noise with private validation data, meaning that 
the accepted validation accuracy is sometimes below the threshold, but the distributions
of final test accuracies are similar with both public and private validation data.

\begin{figure}
    \centering
    \includegraphics{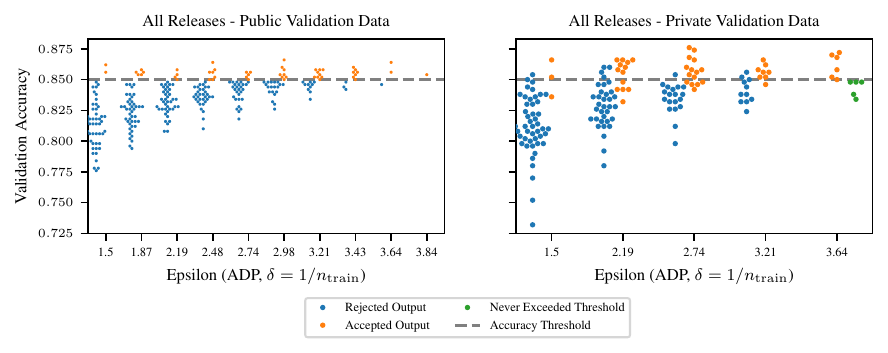}
    \caption{
        Accuracies and $\epsilon$ values for last-layer fine-tuning on CIFAR-10
        with 50 examples per class, showing that ex-post DP-SGD is 
        able to minimise the privacy cost while ensuring high accuracy, with both
        public (left) or private (right) validation data.
        The plot shows results from 50 repeats. The x-axis values are logarithmically spaced. 
    }
    \label{fig:dp-sgd-results}
\end{figure}

\begin{figure}
    \begin{subfigure}{0.5\textwidth}
        \centering
        \includegraphics{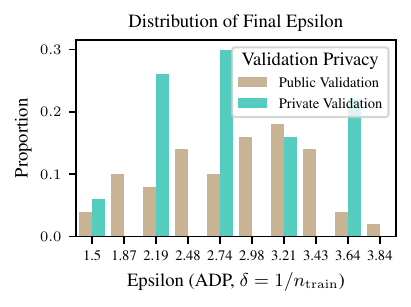}
        
        \vspace{-3mm}
        \caption{}
    \end{subfigure}
    \begin{subfigure}{0.5\textwidth}
        \centering
        \includegraphics{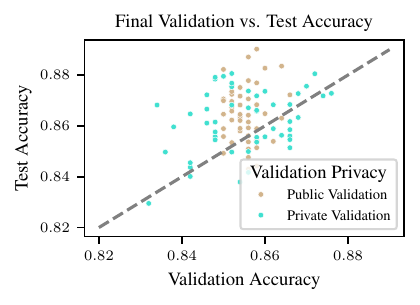}
        
        \vspace{-3mm}
        \caption{}
    \end{subfigure}
    
    \vspace{3mm}
    
    \begin{subfigure}{0.5\textwidth}
        \centering
        \includegraphics{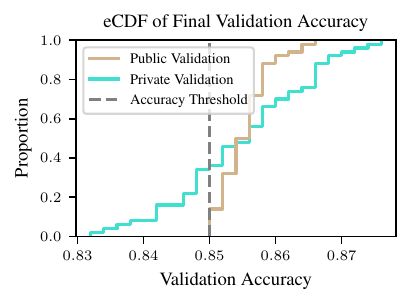}
        
        \vspace{-3mm}
        \caption{}
    \end{subfigure}
    \begin{subfigure}{0.5\textwidth}
        \centering
        \includegraphics{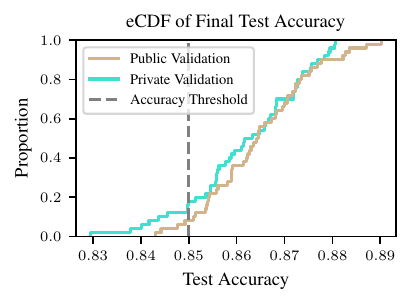}
        
        \vspace{-3mm}
        \caption{}
    \end{subfigure}
    
    \vspace{3mm}
    
    \begin{subfigure}{0.5\textwidth}
        \centering
        \includegraphics{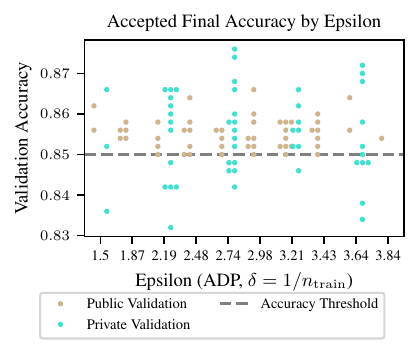}
        
        \vspace{-3mm}
        \caption{}
    \end{subfigure}
    \begin{subfigure}{0.5\textwidth}
        \centering
        \includegraphics{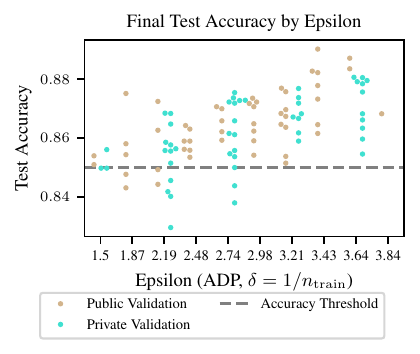}
        
        \vspace{-3mm}
        \caption{}
    \end{subfigure}
    \caption{
        (a) Distribution of final $\epsilon$ values. 
        (b) Scatterplot of validation and test accuracies of final results.
        There is no indication of overfitting to the validation set, as the points
        cluster around the diagonal gray line.
        (c, d) Empirical CDFs of final validation (c) and test (d) accuracies.
        The distribution of the validation accuracy is wider with a private validation
        set due to the accuracy check noise, but the distributions of 
        test accuracy are similar.
        (e, f) Validation (e) and test (f) accuracies by $\epsilon$. 
    }
    \label{fig:dp-sgd-extra-results}
\end{figure}

%%%%%%%%%%%%%%%%%%%%%%%%%%%%%%%%%%%%%%%%%%%%%%%%%%%%%%%%%%%%

\end{document}